\documentclass[11pt]{article}
\usepackage{latexsym}
\usepackage{amssymb}
\usepackage{graphicx}
\usepackage{mathtools,mathrsfs}
\usepackage{amssymb}
\usepackage{amsmath}
\usepackage{color}
\usepackage{tikz}
\usepackage{hyperref}
\hypersetup{hypertex=true,
	colorlinks=true,
	linkcolor=blue,
	anchorcolor=blue,
	citecolor=blue}
\usepackage{amsthm}
\usepackage{enumerate}
\usepackage{cleveref}
\crefformat{section}{\S#2#1#3} 
\crefformat{subsection}{\S#2#1#3}
\crefformat{subsubsection}{\S#2#1#3}
\newcommand{\C}{\mathbb C}
\newcommand{\R}{\mathbb R}
\newcommand{\Z}{\mathbb Z}

\newcommand{\cH}{\mathcal H}
\newcommand{\cE}{\mathcal E}
\newcommand{\cT}{\mathscr T}
\newcommand{\Str}{\operatorname{Str}}

\newcommand{\Tr}{\operatorname{Tr}}
\newcommand{\FP}{\operatorname{FP}}
\def\str{\mathrm{str}}

\newcommand{\adju}{\operatorname{adj}}
\newcommand{\ddbar}{\partial_B\bar\partial_B}
\newcommand{\ph}{\widehat p}
\newcommand{\qh}{\widehat q}
\def\Re{{\mathrm{Re}\,}}
\newcommand{\dd}{\mathrm d}

\def\p{\partial}

\newcommand{\LG}{\mathrm{LG}}
\newcommand{\Id}{\mathrm{Id}}
\newcommand{\be}{\begin{equation}}
	\newcommand{\ee}{\end{equation}}
\newcommand{\ba}{\begin{aligned}}
	\newcommand{\ea}{\end{aligned}}

\newcommand{\End}{\operatorname{End}}
\newtheorem{theorem}{Theorem}[section]
\newtheorem{proposition}[theorem]{Proposition}
\newtheorem{lemma}[theorem]{Lemma}
\newtheorem{corollary}[theorem]{Corollary}
\theoremstyle{definition}

\newtheorem{definition}{Definition}[section]
\newtheorem{remark}[theorem]{Remark}

\numberwithin{equation}{section}
\allowdisplaybreaks[1]
\title{The Genus-One B-Model Calabi–Yau/Landau–Ginzburg Correspondence}
\author{Xinxing Tang
		\footnote{Beijing Institute of Mathematical Sciences and Applications, Beijing, China, tangxinxing@bimsa.cn.
		}
		\and Junrong Yan
		\footnote{Department of Mathematics, Northeastern University, Boston, United States, j.yan@northeastern.edu. 
		} 
	}
\date{}
\begin{document}
\maketitle
\vspace{-1.5em}

\begin{abstract}
    Let $f:\mathbb C^n\to\mathbb C$ be a homogeneous polynomial of degree $n$ with an isolated critical point at the origin. The Calabi--Yau/Landau--Ginzburg correspondence predicts a close relation between invariants of the Landau--Ginzburg orbifold $(\mathbb C^n/\mathbb Z_n,f)$ and those of the associated Calabi--Yau hypersurface
$
X_f=\{[z_1:\cdots:z_n]\in\mathbb P^{n-1}:f(z_1,\ldots,z_n)=0\}.
$
In this paper, we establish this correspondence for the genus-one B-model by
constructing the Landau--Ginzburg genus-one B-model invariant intrinsically
from the $f$-twisted Laplacian on $\C^n/\Z_n$ and proving that it agrees
with the corresponding BCOV invariant of $X_f$ up to an additive constant
depending only on the dimension.
\end{abstract}
\tableofcontents
\section{Introduction}
\label{sec:introduction}
\subsection{Overview}

In the 1990s, physicists discovered mirror symmetry, a relationship between the symplectic geometry (A-model) of a Calabi-Yau manifold $X$ and the complex geometry (B-model) of its mirror $X^{\vee}$. Since its formulation more than three decades ago \cite{COGP1991,Greene1990DualityI}, the mirror symmetry conjecture has had a profound influence on mathematics. This paper focuses only on closed-string theory.

The A-model on a Calabi--Yau (CY) manifold $X$ is the Gromov--Witten
theory of $X$, whose invariants are notoriously difficult to compute. The B-model is usually more accessible. However, its higher-genus theory is
difficult to define. The genus 0 theory of the CY B-model is related to the variation of Hodge structure, whereas the higher genus B-model theory is known as the BCOV theory \cite{BCOV,CostelloLiBCOV}.
At genus one, Bershadsky--Cecotti--Ooguri--Vafa
\cite{BCOV1993,BCOV} related the amplitude to holomorphic
analytic torsion and derived its holomorphic anomaly equation.
The underlying spectral construction goes back to Ray--Singer
\cite{RaySinger1973}.
Fang--Lu--Yoshikawa \cite{FLY} constructed a metric-independent
BCOV invariant for CY threefolds and studied its variation
and degeneration. Eriksson--Freixas i Montplet--Mourougane
\cite{EFM} extended this construction and its degeneration theory
to arbitrary dimension. Zhang \cite{Zhang2022Extension,Zhang2023Blowup} extended the BCOV
invariant to CY pairs and established blow-up formulas.
Building on this work, Fu--Zhang \cite{FuZhang2023BCOV} proved
the birational invariance of the BCOV invariants. There has also been substantial progress in the construction of the
higher-genus B-model. More recently, building on Wang--Yan's results
\cite{WangYan2025Graphs,WangYan2026Renormalizations} on the ultraviolet
(UV) finiteness of Feynman graph integrals in holomorphic field theories,
Li--Wang \cite{LiWang2026BCOV} announced an explicit all-genus quantization of BCOV theory on compact Calabi–Yau manifolds.

Simultaneously, physicists extended the above discussion to singularity theory, leading to what became known as the Landau–Ginzburg (LG) model. The LG model is defined on a pair $(X, f)$, where $X$ is a complete noncompact Kähler manifold and $f$ is a holomorphic function on $X$. For the A-side, the LG A-model is constructed by Fan-Jarvis-Ruan \cite{FJR2013} following Witten's proposal \cite{Witten1993}, which is known as the FJRW-theory.  Chang–Li–Li \cite{chang2015witten} subsequently gave an algebraic construction of Witten's top Chern class via cosection localization. On the B-side, Saito's theories of primitive form  \cite{1981Primitive} and  the higher residue pairing \cite{Saitohigher} generate the structure of a Frobenius manifold on the singularity's universal deformation space \cite{SaitoFM}, which yield the genus-0 theory of LG B-model.
A polyvector-field formulation was developed by Li--Li--Saito
\cite{LiLiSaito}.
Also, some Hodge theoretical aspect in the LG B-model, called $tt^*$ geometry, is discovered by Cecotti-Vafa \cite{cecotti1991topological}.
Additionally, Fan \cite{Fan} introduced an analytic approach in the spirit of $N=2$ geometry by studying the spectral theory of a twisted Laplacian operator.
Besides, Li-Wen \cite{li2019l2} employed $L^2$ Hodge theory to construct a Frobenius manifold structure for the case where $f$ has a compact critical locus.
To construct genus-one invariants, Fan--Fang \cite{FF} introduced
torsion-type invariants using the heat kernel of
the twisted Laplacian. Tang \cite{Tang2017Torsion} further study
 the second torsion-type invariants.

For mirror symmetry on the Calabi--Yau side, Givental
\cite{Givental1998Mirror} and Lian--Liu--Yau \cite{LianLiuYau1997}
established genus-zero mirror symmetry for Calabi--Yau hypersurfaces and
related complete intersections. At genus one, Zinger \cite{Zinger2009}
computed the reduced Gromov--Witten invariants of Calabi--Yau hypersurfaces and established the
genus-one mirror formula for the quintic. Eriksson--Freixas i
Montplet--Mourougane \cite{EFMMirror} proved the mirror symmetry formula for
genus-one BCOV invariants; combined with Zinger's results, their work
established genus-one mirror symmetry for Calabi--Yau hypersurfaces in
projective space and their mirror families.
On the LG side, He--Li--Shen--Webb \cite{HeLiShenWebb2022}
proved an all-genus mirror theorem relating FJRW theory with
maximal diagonal symmetry to the Saito--Givental theory of the
transpose polynomial, for invertible potentials with no variable
of weight $1/2$.
For the quintic LG theory, Guo--Ross \cite{GuoRoss2019LG}
proved a genus-one mirror formula for the FJRW potential.

The Calabi-Yau/Landau-Ginzburg (CY/LG) correspondence is another relation that connects nonlinear sigma models on CY manifolds with LG models. Witten's gauged linear sigma model \cite{Witten1993Phases} explains
the CY and LG descriptions as different phases of a
common physical theory.
On the A-model side, Chiodo--Ruan proved the CY/LG state-space
correspondence \cite{ChiodoRuan2011State} and the genus-zero
correspondence for the quintic \cite{ChiodoRuan2010}.
At genus one, Guo--Ross \cite{GuoRoss2019Global} proved the
CY/LG correspondence for the quintic.
The mixed-spin-$P$ fields of Chang--Li--Li--Liu
\cite{ChangLiLiLiu2019MSP} provide a geometric framework relating
Gromov--Witten and FJRW theories. Subsequent developments include
the polynomial structure of higher-genus quintic Gromov--Witten
potentials proved by Chang--Guo--Li \cite{ChangGuoLi2021}. On the B-side, for the genus-zero theory, Tang--Yan \cite{TY}
established the CY/LG comparison of the $tt^*$ structures and the
Weil--Petersson metrics.
Fan--Lan--Yang \cite{FanLanYang2021,FanLanYang2022} established the same comparison
by a different approach. The genus-one CY/LG correspondence is
established in the present paper.

It turns out that CY/LG correspondence and mirror symmetry have served as guidelines in the study of many branches of mathematics (see the following diagram):
	\begin{center}
		\sffamily
		\footnotesize
		\begin{tikzpicture}[auto, block_center/.style ={rectangle, draw=black, thick, fill=white,
				text width=12em, text centered,
				minimum height=3em,inner sep=6pt,outer sep=4pt},
			block_left/.style ={rectangle, draw=black, thick, fill=white,
				text width=12em, text centered, minimum height=3em, inner sep=6pt,outer sep=4pt}]
			\matrix [column sep=25mm,row sep=12mm] {
				\node [block_left] (CYA) {Calabi-Yau A-model};
				& \node [block_center] (LGA) {Landau-Ginzburg A-model};\\
				\node [block_left] (CYB) {Calabi-Yau B-model};
				& \node [block_center] (LGB) {Landau-Ginzburg B-model};\\
			};
			\path[draw=black,latex-latex,thick] (CYA)  -- node[above,font=\tiny,outer sep=2pt]{CY/LG}(LGA);
			\path[draw=black,latex-latex,thick] (LGA) -- node[left, font=\tiny, align=left]{Mirror Symmetry}(LGB);
			\path[draw=black,latex-latex,thick] (CYB) -- node[below,font=\tiny,outer sep=2pt] {CY/LG} (LGB);
			\path[draw=black,latex-latex,thick] (CYA) -- node[right,font=\tiny,align=left] {Mirror Symmetry} (CYB);
		\end{tikzpicture}
	\end{center}

The mathematical realization of this picture depends on the genus
and on the structures being compared. Our result concerns the
lower horizontal comparison at genus one.

Let $n\geq4$, and let
$f:\mathbb C^n\to\mathbb C$ be a homogeneous polynomial of degree $n$
with an isolated critical point at the origin. Then
\[
X_f=\{[z_1:\cdots:z_n]\in\mathbb P^{n-1}:f(z_1,\ldots,z_n)=0\}
\]
is a smooth Calabi--Yau hypersurface of complex dimension $n-2$.
For the associated Landau--Ginzburg orbifold
$(\mathbb C^n/\mathbb Z_n,f)$, we define the genus one term $F_1^{\mathrm{LG}}(f)$ in B-model as some modification of analytic torsions associated with $f$-twisted Laplacian, see \Cref{subsec: setting cy lg}.

Let $F_1^{\mathrm{BCOV}}(X_f)$ denote the  genus-one BCOV
invariant \cite{FLY,EFM} on $X_f$.
Our main theorem establishes the following.
\paragraph{Main Theorem.}
    For every $n\geq4$, there is a constant $C_n$, depending only on $n$,
such that
\begin{equation}\label{eq:intro-correspondence}
F_1^{\mathrm{LG}}(f)=F_1^{\mathrm{BCOV}}(X_f)+C_n.
\end{equation}

  \begin{remark}[Comparison with the Saito--Givental theory]

    Givental's formalism packages genus-zero Gromov--Witten theory
into a Frobenius manifold and organizes higher-genus invariants via Givental quantization
into the expansion (c.f. \cite{Givental2001Quantization,Teleman2012})
\[
\log\mathcal D=\sum_{g\geq0}\hbar^{g-1}\mathcal F_g.
\]
He--Li--Shen--Webb \cite{HeLiShenWebb2022} use this procedure
to define higher-genus LG B-model potentials from Saito's
genus-zero theory \cite{1981Primitive,SaitoFM}, and prove an all-genus mirror theorem.
Thus, unlike Gromov--Witten invariants, their higher-genus
B-model potentials are produced through Givental quantization, rather than
an independent intrinsic higher-genus B-model construction.

By contrast, our $F_1^{\mathrm{LG}}$ has an intrinsic
definition using the  $f$-twisted Laplacian.
Whether it agrees with the genus-one Saito--Givental potential,
after the appropriate identifications, is an interesting
further question.
\end{remark}
\begin{remark}
    Our result has two further motivations.
First, LG models often have simpler geometric descriptions,
and their A-model invariants can be easier to compute.
The B-model can be even more accessible. Our main result  potentially provides
new approaches to curve counting through mirror symmetry
and CY/LG correspondence.
Second, independently of enumerative applications,
our result establishes the genus-one B-model CY/LG correspondence
for degree-$n$ Calabi--Yau hypersurfaces.
\end{remark}

To prove the main result, we first study families of LG models and establish the holomorphic
anomaly formula for the modified torsion, using the noncompact
heat-kernel estimates of Dai--Yan \cite{DY} and superconnection methods.
Under the CY/LG correspondence for $tt^*$ sturctures established by Tang-Yan \cite{TY}, this formula agrees with
the BCOV anomaly formula, so the difference of the two invariants
is pluriharmonic (\Cref{cor: harmonic}).
It therefore remains to compare their asymptotic behavior near the
boundary of a compactification of the moduli space.
Compactness then forces the difference to be constant, proving
\eqref{eq:intro-correspondence}.
\subsection{The settings}
\label{sec:setup}
\subsubsection{CY/LG correspondence}\label{subsec: setting cy lg}

Let $f:\mathbb C^n\to\mathbb C$ be a homogeneous polynomial of degree $n$ with an isolated critical point at the origin. The Calabi--Yau/Landau--Ginzburg correspondence predicts that invariants of the Landau--Ginzburg (LG) orbifold $(\mathbb C^n/\mathbb Z_n,f)$ should be related to those of the Calabi--Yau hypersurface
\[
X_f=\{f=0\}\subset\mathbb P^{n-1}.
\]
Our goal is to establish this correspondence at genus one on the B-model side. 

For the analytic construction, rather than working directly on the quotient orbifold $\mathbb C^n/\mathbb Z_n$, we work on $\mathbb C^n$ and restrict the Witten complex to $\mathbb Z_n$-invariant differential forms. Thus the relevant Hilbert space is
\[
\cE:=L^2\Omega^{*,*}(\mathbb C^n)^{\mathbb Z_n}.
\]

Unless a different space is indicated, $\Str$ denotes the 
supertrace of operators on $\cE$. 

The Witten twisted differential operator and twisted Laplacian are given by
\begin{equation}\label{eq:differential}
{\bar{\partial}_{f}}=\bar\partial+df\wedge,
 \qquad
{\Delta_{f}}={\bar{\partial}_{f}}{\bar{\partial}_{f}}^*+{\bar{\partial}_{f}}^*{\bar{\partial}_{f}}\footnote{\text{We also have the twisted differential operator } $\partial_f:=\p+d\bar{f}\wedge$,\text{ and the K\"ahler identity holds}
\[\p_f\p_f^*+\p_f^*\p_f=\Delta_f=\bar\p_f\bar\p_f^*+\bar\p_f^*\bar\p_f.\]}.
\end{equation}
 Formal adjoints are taken with respect to the
 Euclidean metric on $\C^n$. 
 It is standard that (c.f. \cite[Proposition~4.6]{zhang2001lectures} or
\cite[(10)]{FF})
\begin{equation}\label{eq:fixed}
\Delta_{f}
=
\Delta_{\bar\partial}
+
|\partial f|^2
+
L_{f},
\end{equation}
where
\begin{equation}\label{eq:selfadjointLF}
L_{f}
=
\sum_{a,c}
2\left(
\frac{\partial^2f}{\partial z_a\partial z_c}\,
dz_a\wedge\iota_{\partial/\partial\bar z_c}
+
\overline{
\frac{\partial^2f}{\partial z_a\partial z_c}
}\,
d\bar z_a\wedge\iota_{\partial/\partial z_c}
\right),
\end{equation}
and $\Delta_{\bar\partial}$ is the usual Dolbeault Hodge Laplacian.

 Let $P_f$ be the orthogonal projection
onto the space of $\Z_n$-invariant harmonic forms $$\cH_f:=\ker\Delta_{f}\subset\cE.$$ 
It carries a natural Hermitian metric $h_f$ given by the $L^2$ inner products
 on differential forms induced by the Euclidean metrics.

Let $N^{1,0}$ and $N^{0,1}$ act on a $(p,q)$-form by multiplication
by $p$ and $q$, respectively. We define two modified number operators (see also \cite[\S3.4]{TY}) by
\begin{align}
 \ph&=N^{1,0}-\frac1nE,
&\qh&=N^{0,1}+\frac1nE.\label{eq:pq}
\end{align}
Here
\begin{equation}\label{eq:E}
E:=
\sum_{a=1}^n
\left(
z_a\partial_{z_a}
-\bar z_a\partial_{\bar z_a}
\right)
+N^{1,0}-N^{0,1}.
\end{equation}
Thus, for
$
\omega=\omega_{I,J}(z,\bar z)\,dz_I\wedge d\bar z_J,
$
one has
\[
E\omega
=
\Big[
\sum_{a=1}^n
\left(
z_a\partial_{z_a}
-\bar z_a\partial_{\bar z_a}
\right)\omega_{I,J}
+
(|I|-|J|)\omega_{I,J}
\Big]
dz_I\wedge d\bar z_J.
\]
The modified number operators satisfy\footnote{There are also commutative relations
\[[\qh,\p_f]=0,\quad [\qh,\p_f^*]=0,\]
which will be used in the proof of Lemma \ref{lem:N2reduction}.}
\begin{equation}\label{eq:pqcommutators}
\begin{aligned}
[\ph,\bar\partial_{f}]&=0,
&
[\ph,\bar\partial_{f}^{\,*}]&=0,\\
[\qh,\bar\partial_{f}]&=\bar\partial_{f},
&
[\qh,\bar\partial_{f}^{\,*}]&=-\bar\partial_{f}^{\,*}.
\end{aligned}
\end{equation}
Consequently,
\begin{equation}\label{eq:pqDelta}
[\ph,\Delta_{f}]
=
[\qh,\Delta_{f}]
=
[\ph,P_f]
=
[\qh,P_f]
=0.
\end{equation}

Consider the graded heat trace
\begin{equation}\label{eq:reducedtrace}
\Str\!\left(\ph\qh(e^{-s{\Delta_{f}}}-P_f)\right)\footnote{\text{By Lemma \ref{lem:N2reduction}, we also have} 
 \[\Str\!\left(\ph\qh(e^{-s{\Delta_{f}}}-P_f)\right)=\frac{1}{2}\,\Str\!\left(N^2(e^{-s{\Delta_{f}}}-P_f)\right),\qquad N:=N^{1,0}+N^{0,1},\]
which was studied as the second LG torsion in \cite{FF,Tang2017Torsion}.},
 \qquad s>0.
\end{equation}
For $\Re w$ sufficiently large, define
\begin{equation}\label{eq:zeta}
\zeta(w)=\zeta(w;f):
=\frac1{\Gamma(w)}\int_0^\infty s^{w-1}\Str\!\left(\ph\qh(e^{-s{\Delta_{f}}}-P_f)\right)\,ds.
\end{equation}
Then $ \zeta(\cdot)$ extends to a meromorphic function on $\C$ and is regular at $w=0$
(Proposition~\ref{prop:familyheat}).

Our LG torsion is defined as
\begin{equation}\label{eq:torsion}
 \cT^{\LG}(f):=-\frac{\partial}{\partial w}
                         \zeta(w)\Big|_{w=0}.
\end{equation}
Next we introduce the data to modified the LG torsion.

Let $\Omega^k_{\mathrm{poly}}$ denote polynomial holomorphic $k$-forms.
\cite[\S3.3, Proposition~3.5]{TY} gives isomorphism
\[
 S_{f}:
 \left(\Omega^n_{\mathrm{poly}}/
       (df\wedge\Omega^{n-1}_{\mathrm{poly}})\right)^{\Z_n}
 \longrightarrow\cH_f.
\]
It is characterized by
\[
 S_{f}(\alpha)\in\ker{\Delta_{f}},\qquad
 S_{f}(\alpha)-\alpha={\bar{\partial}_{f}}\nu,\qquad
 \nu\in\Omega^{n-1}(\C^n)\ \text{of polynomial growth}.
\]

Set
\[
w_0=w_0(f):=S_{f}(dz_1\wedge\cdots\wedge dz_n).
\]

Our modified LG torsion is defined by
\[
\cT_{\mathrm{mod}}^{\LG}(f)
:=
\cT^{\LG}(f)
{+}
c\log\big(h_F(w_0,w_0)\big),
\]
where the constant $c$ will be determined later.

For $A\in GL(n,\C)$, we have
\[
 \Phi_A:X_{f\circ A}\xrightarrow{\sim}X_f,\quad[z]\mapsto[Az],\qquad \Phi_A^*\mathcal O_{X_f}(1)=\mathcal O_{X_{f\circ A}}(1).
\]
In particular, for $A=aI$, $a\in\C^*$, $X_{f\circ aI}=X_{a^nf}=X_f$. Hence an invariant of the polarized hypersurface must be independent of these presentations. In Proposition~\ref{prop:GLinvariance}, we show
\[
 \begin{aligned}
 \cT^\LG(f\circ A)-\cT^\LG(f)
   &=\frac{\chi(X_f)}{12}\log|\det A|^2,\\
 \log h_{f\circ A}(w_0,w_0)-\log h_f(w_0,w_0)
   &=-\log|\det A|^2.
 \end{aligned}
\]
Consequently,
\[
 \begin{aligned}
 &\cT^\LG(f\circ A)-\cT^\LG(f)
       +c\log\big({h_{f\circ A}(w_0,w_0)}\big)-c\log\big({h_f(w_0,w_0)}\big)=\left(\frac{\chi(X_f)}{12}-c\right)\log|\det A|^2,
 \end{aligned}
\]
so presentation invariance determines $c=\chi(X_f)/12$.

\vskip 0.1cm

Then we arrive at the following definition.
\begin{definition}
\label{def:modifiedearly}
The modified LG torsion and LG genus-one term for the LG model $(\C^n/\Z_n,f)$ is defined as
\begin{align}
 \cT_{\mathrm{mod}}^{\LG}(f)
   &:=\cT^{\LG}(f)+\frac{\chi(X_f)}{12}\log h_f(w_0,w_0),
       \label{eq:Tmodearly}\\
 F_{1}^{\LG}(f)
   &=-\frac12\cT_{\mathrm{mod}}^{\LG}(f).
       \label{eq:Fmodearly}
\end{align}

\end{definition}

Let
\[
F_{1}^{\mathrm{BCOV}}(X_f)=-\frac{1}{2}\log\tau^{\mathrm{BCOV}}(X_f)
\]
denote the BCOV invariant of $X_f$. See
\cite[Definition~5.7]{EFM} for the definition in arbitrary dimension and
\cite[Definition 4.13]{FLY} for the case of Calabi--Yau threefolds.
It is a B-model invariant of Calabi--Yau manifolds: it depends only on the complex structure and is independent of the choice of K\"ahler metric. 

Our main result is the following.
\begin{theorem}\label{thm: main}
For $n\ge4$, there exists a constant $C_n$, depending only on $n$, such that
\[
F_{1}^{\LG}(f)
=
F_{1}^{\mathrm{BCOV}}(X_f)
+
C_n.
\]
\end{theorem}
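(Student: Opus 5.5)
The plan follows the strategy outlined in the introduction. Both sides are viewed as functions on the moduli of the hypersurfaces. Let $U\subset \mathbb P(\mathrm{Sym}^n(\C^n)^\vee)$ be the open set of degree-$n$ polynomials with an isolated critical point at $0$, equivalently those with $X_f$ smooth. By Proposition~\ref{prop:GLinvariance} and the choice $c=\chi(X_f)/12$, the function $F_1^{\LG}$ is invariant under $f\mapsto f\circ A$ for $A\in GL(n,\C)$, and in particular under rescaling $f\mapsto a^nf$. Hence $F_1^{\LG}$ descends to a function on $U/PGL(n,\C)$. The function $F_1^{\mathrm{BCOV}}(X_f)$ depends only on the isomorphism class of $X_f$, so it descends as well. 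I will prove that the difference
\[
\Phi(f):=F_1^{\LG}(f)-F_1^{\mathrm{BCOV}}(X_f)
\]
is pluriharmonic on $U$, has at worst controlled singularities along the discriminant $D=\mathbb P\setminus U$, and is therefore constant. Since $U$ is connected, this constant depends only on $n$.

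\textbf{Step 1: anomaly formula on the LG side.} For a holomorphic family $f_t$ in $U$, I compute $\partial_t\bar\partial_t\cT^{\LG}(f_t)$ by the superconnection method. The key input is that $\ph,\qh$ commute with $\Delta_f$ and $P_f$ by \eqref{eq:pqDelta}. The variation of $e^{-s\Delta_{f_t}}$ is controlled by $\partial_t f$ and $\overline{\partial_t f}$ acting through \eqref{eq:fixed}. The small-$s$ and large-$s$ asymptotics of the resulting heat traces come from the noncompact heat-kernel estimates of Dai--Yan together with Proposition~\ref{prop:familyheat}. This should yield a Bismut--Gillet--Soulé type identity. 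It expresses $\frac{\sqrt{-1}}{2\pi}\partial\bar\partial\cT^{\LG}$ as a combination of the curvatures of the $tt^*$ metrics $h_f$ on the bundles $\cH_f^{(p,q)}$, graded by the eigenvalues of $\ph,\qh$, plus a local term. Homogeneity should make the local term vanish or be proportional to $\partial\bar\partial\log h_f(w_0,w_0)$.

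\textbf{Step 2: comparison with BCOV.} By Tang--Yan \cite{TY}, the $\ph,\qh$-graded pieces of $\cH_f$ with their $L^2$ metrics are identified, up to constant factors, with the primitive Hodge bundles $H^{p,q}_{\mathrm{prim}}(X_f)$ with their $L^2$ metrics. In particular, $w_0$ corresponds to the residue $(n-2,0)$-form. The BCOV anomaly formula of Fang--Lu--Yoshikawa and Eriksson--Freixas i Montplet--Mourougane states that $\partial\bar\partial\log\tau^{\mathrm{BCOV}}$ is a weighted sum $\sum (-1)^{p+q}p\,c_1(H^{p,q},L^2)$ plus $\frac{\chi}{12}c_1(H^{n-2,0})$. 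The nonprimitive classes are powers of the hyperplane class, so they contribute flat pieces. Matching the weights with those in Step 1, and the $\chi/12$ term with the correction $\frac{\chi}{12}\log h_f(w_0,w_0)$, gives $\partial\bar\partial\Phi=0$ on $U$; this is Corollary~\ref{cor: harmonic}.

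\textbf{Step 3: boundary behaviour (main obstacle).} Pulling back along a generic holomorphic disc $\Delta\to\mathbb P$ meeting $D$ transversally at a general point, where $X_{f_0}$ acquires a single ordinary double point, I need to show that $\Phi=\alpha\log|t|^2+O(1)$ and that $\alpha=0$. On the BCOV side, the asymptotics are known from EFM: $\log\tau^{\mathrm{BCOV}}$ behaves like $\kappa_n\log|t|^2$ with an explicit $\kappa_n$. On the LG side, the critical point of $f_t$ stays isolated only for $t\neq0$. At $t=0$, $f_0$ has nonisolated critical locus along the line over the node, and $\cT^{\LG}(f_t)$ degenerates. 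I would analyze this by a localization argument: near the singular line, $f_t$ is modelled by $t\cdot(\text{linear}) + $ a quadratic Morse-type function transverse to it. A scaling argument on the twisted Laplacian should extract the $\log|t|^2$ coefficient of $\cT^{\LG}$ and of $\log h_{f_t}(w_0,w_0)$ from the small eigenvalues, counted with their $\ph\qh$ weights, which match the vanishing cycle data. I expect this spectral degeneration estimate to be the hardest part. A fallback is to avoid computing $\alpha$ directly: show only that $\Phi=O(\log|t|)$, then use that $\Phi$ descends to a pluriharmonic function on the coarse moduli space. The Baily--Borel/GIT boundary has codimension $\ge2$ apart from the discriminant. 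There, plurisubharmonic extension plus compactness of a projective GIT quotient, combined with the maximum principle applied to $\pm\Phi$, forces $\alpha=0$ and $\Phi$ constant.
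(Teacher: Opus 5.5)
Your global strategy is the paper's. Corollary~\ref{cor: harmonic} makes $\Phi$ pluriharmonic on $U=\mathbb P^L\setminus D$. Constancy is then reduced to three things: the behaviour of $\Phi$ at the generic nodal locus $D^{\mathrm{nod}}$, extension across $D\setminus D^{\mathrm{nod}}$ (codimension $\ge2$), and compactness. Your Steps~1--2 are acceptable only as citations of Theorem~\ref{thm:modifiedanomalyearly} and Corollary~\ref{cor: harmonic}; your sketch does not itself show that the small-time term vanishes, which is Proposition~\ref{prop:smalltime}.

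The genuine gap is Step~3: you give no estimate at all for $\cT^{\LG}(f_t)$ as $t\to0$. Some quantitative bound is indispensable. Indeed, $U$ carries nonconstant $GL(n,\C)$-invariant pluriharmonic functions, e.g.\ $\Re\bigl(I/\delta_D^{\,k}\bigr)$, where $\delta_D$ is the discriminant and $I$ is an $SL(n,\C)$-invariant of degree $k\deg D$ not proportional to $\delta_D^{\,k}$. Your primary route, reading off the $\log|t|^2$ coefficient from $\ph\qh$-weighted small eigenvalues, is a heuristic with no mechanism. Its local model is also wrong: by Lemma~\ref{lem:axis} it should be $f_t=a_tw^n+w^{n-2}Q_t(\xi)+\cdots$ with $|a_t|\sim|t|$, so only $|\partial f_t|\ge c|t||z|^{n-1}$ holds. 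Your fallback presupposes $\Phi=O(|\log|t||)$, i.e.\ (given Lemma~\ref{lem:nodalvacuum} and the nodal asymptotics of \cite{EFM}) $\cT^{\LG}(f_t)=O(|\log|t||)$. That bound is true a posteriori, since it follows from the theorem. Proving it directly, however, is the entire analytic difficulty, and it is more than is needed.

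The paper closes this in two moves. First, a crude bound suffices. Since $H_1(U,\Z)\simeq\Z_{n(n-1)^{n-1}}$ is finite (Proposition~\ref{prop:finiteholonomy}), $\Phi=\Re g$ for a single-valued holomorphic $g$ on $U$. The Laurent identity \eqref{eq:laurent} then shows that an angular $L^1$ bound $o(r^{-1})$ kills every negative coefficient (Lemma~\ref{lem:extension}), so no logarithmic coefficient $\alpha$ is ever computed. Your $\pm\Phi$ maximum-principle argument is a legitimate substitute for the $H_1$ computation once $\Phi=\alpha\log|t|^2+O(1)$ is known; locally that form follows from $o(r^{-1})$ by the same Laurent argument. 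Run it on $\mathbb P^L$, where \eqref{eq:discriminantdegree} gives the codimension-two statement. Descending to a singular GIT quotient is unnecessary, and Baily--Borel plays no role for these hypersurfaces.

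Second, the paper proves $\cT^{\LG}(F_q)=O(|q|^{-2/n})=o(|q|^{-1})$ (Proposition~\ref{prop:reduction of main theorem}). The ingredients are:
\begin{enumerate}[(i)]
\item a uniform spectral gap $\lambda\ge c|q|^{4/n}$ (Lemma~\ref{lem:spectral-gap}), via IMS localization and the Lefschetz operator to handle degrees $\ne n$;
\item a variation formula (Lemma~\ref{eq: variation lemma}) realizing $|q|\partial_{|q|}$ on the nodal cone as a Lie derivative along a holomorphic field $V_q$ with $V_qF_q=|q|\partial_{|q|}F_q$, so that $|q|\partial_{|q|}\Str\bigl((N-n)^2e^{-t\Delta_{F_q}}\bigr)$ is a $t\partial_t$-transgression plus an error supported away from the cone;
\item uniform bounds on the resulting constant term and small-time finite part (Lemmas~\ref{lem: coefficient lemma} and~\ref{lem: local heat fp});
\item the spectral gap again, to bound the large-time error by $C|q|^{-2/n}$.
\end{enumerate}
Some argument of this kind is what your Step~3 is missing; without it the proof does not close.
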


\subsubsection{The family setting and anomaly formula}
To study anomaly formula, we consider the family case.

Let $B$ be a complex manifold and
\[
 F:\C^n\times B\longrightarrow\C,\quad n\ge4
\]
a holomorphic family of homogeneous polynomials $\{F_b\}_{b\in B}$ of degree $n$ with an isolated critical point at $0$. 

\paragraph{Special geometry on LG's side.} The spaces $\cH_b:=\cH_{F_b}$ form a vector bundle $\cH\to B$.
It carries a natural Hermitian metric $h_F=(h_{F_b})_{b\in B}$. For local holomorphic coordinates $b^i$ on $B$, write
$\partial_i=\partial/\partial b^i$ and
$\partial_{\bar j}=\partial/\partial\bar b^j$, set
\begin{equation}\label{eq:higgs}
\phi_i=\partial_{b^i}F,
 \qquad C_i=P_bM_{\phi_i}P_b\big|_{\cH_b},
 \qquad \bar{C}_j=P_bM_{\overline{\phi}_j}P_b|_{\cH_b}
\end{equation}
and  
\[C=\sum_i C_i\,db^i,
 \qquad \bar{C}=\sum_j \bar{C}_j\,d\bar b^j.
\]
Here $M_\phi $ denotes multiplication by a scalar function $\phi$ and $P_b:=P_{F_b}$.

Let $d_B$ denote differentiation in the parameter directions. That is, for a smooth
 forms $u(b,z)$,
\[
d_Bu
=
\sum_i db^i\,\partial_{b^i}u
+
\sum_j d\bar b^j\,\partial_{\bar b^j}u.
\]
The bundle
$\cH\to B$ carries a natural connection $\nabla^{\cH}$ preserving $h_F$:
\begin{equation}\label{eq:harmonicconnection}
\nabla^\cH:=P\,d_B.
\end{equation}
That is, for $u(b)\in\cH_b$,
\[
\nabla^\cH_{\partial_{b^i}}u
=
P_b(\partial_{b^i}u),
\qquad
\nabla^\cH_{\partial_{\bar b^j}}u
=
P_b(\partial_{\bar b^j}u).
\]
We write its curvature and $(0,1)$-part as follows:
\[
R^\cH=(\nabla^\cH)^2,\qquad
\bar\partial^\cH:=(\nabla^\cH)^{0,1}.
\]
The $tt^*$ structure, see \cite[Theorem~4.14]{Fan} or \cite[Proposition~2.4]{TY}, implies
\[
(\bar\partial^\cH)^2=0,
\]
so $\bar\partial^\cH$ defines a holomorphic structure on $\cH$, and
$\nabla^\cH$ is the exactly Chern connection of $h_F$.

One can define a family of LG torsion is defined as
\begin{equation}\
 \cT^{\LG}(b):= \cT^{\LG}(F_b).
\end{equation}

Set
\[
w_0= w_0(F_b):=S_{F_b}(dz_1\wedge\cdots\wedge dz_n),\qquad
 (\mathcal L_0)_b:=\C w_0(F_b).
\]
Then the spaces $(\mathcal L_0)_b$ form a line bundle
$\mathcal L_0\to B$.
It was proved in \cite{TY} that $\mathcal L_0\to B$ is a holomorphic line bundle
and that $\nabla^\cH|_{\mathcal{L}_0}$ is the Chern connection associated with
the Hermitian metric $h_F|_{\mathcal{L}_0}$.
Moreover,
\[
\bar\partial^\cH w_0=0.
\]

We can define family version of genus-one term as
\begin{definition}
\label{def:modifiedearly family}
For the LG family $F$, define the modified LG torsion
$\cT_{\mathrm{mod}}^{\LG}(F)$ and the LG genus-one term
$F_{1}^{\LG}(F)$ to be the functions in $C^\infty(B)$ given by
\[
\cT_{\mathrm{mod}}^{\LG}(F)(b):=\cT_{\mathrm{mod}}^{\LG}(F_b),\qquad
F_{1}^{\LG}(F)(b):=F_{1}^{\LG}(F_b)=-\frac{1}{2}\cT_{\mathrm{mod}}^{\LG}(F_b).
\]
\end{definition}

Associated with $F$, we obtain a holomorphic family of smooth Calabi--Yau
manifolds
\[
\pi:\mathcal{X}\longrightarrow B,
\]
whose fiber over $b\in B$ is
\[
X_b:=\{F_b=0\}\subset\mathbb P^{n-1},
\]
where $F_b(\cdot):=F(\cdot,b)$. Thus the family determines a map from $B$
to the moduli space of Calabi--Yau manifolds.

In \cite[Theorems~1.1 and~1.3]{TY}, we show that the pullback of the
Weil--Petersson metric on the moduli space of Calabi--Yau manifolds to $B$
is given by
\begin{equation}
\omega_{\mathrm{WP}}
=-\ddbar\log h_F(w_0,w_0).
\end{equation}

\begin{theorem}
\label{thm:modifiedanomalyearly}
The modified logarithmic
torsion and genus-one potential in Definition~\ref{def:modifiedearly family}
are smooth on $B$, and
\begin{equation}\label{eq:main}
\ddbar F_{1}^\LG
=
\frac{(-1)^{n+1}}2\Tr_\cH(C\wedge \bar C)
+\frac{\chi(X_F)}{24}\omega_{\mathrm{WP}}.
\end{equation}
Equivalently,
\begin{equation}\label{eq:modifiedmainT}
\ddbar\cT^\LG_{\mathrm{mod}}
=(-1)^n\Tr_\cH(C\wedge \bar C)
-\frac{\chi(X_F)}{12}\omega_{\mathrm{WP}}.
\end{equation}
\end{theorem}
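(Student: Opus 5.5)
The plan is to prove the curvature formula \eqref{eq:modifiedmainT} for $\cT^{\LG}$ itself, namely
\[
\ddbar\cT^{\LG}=(-1)^n\Tr_\cH(C\wedge\bar C),
\]
and then add the correction term. Once this is known, the definition $\cT^{\LG}_{\mathrm{mod}}=\cT^{\LG}+\frac{\chi(X_F)}{12}\log h_F(w_0,w_0)$ combines with $\omega_{\mathrm{WP}}=-\ddbar\log h_F(w_0,w_0)$ from \cite{TY}. This gives \eqref{eq:modifiedmainT} immediately, and multiplying by $-\frac12$ gives \eqref{eq:main}. Since $\chi(X_{F_b})$ is locally constant, it does not interfere with differentiation in $b$.

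Smoothness of $\cT^{\LG}$ in $b$ will come from the family heat kernel estimates (Proposition~\ref{prop:familyheat}, based on Dai--Yan \cite{DY}). These give two things:
\begin{itemize}
\item a small-$s$ asymptotic expansion of $\Str(\ph\qh e^{-s\Delta_{F_b}})$ whose coefficients are smooth in $b$;
\item exponential decay of $\Str(\ph\qh(e^{-s\Delta_{F_b}}-P_b))$ as $s\to\infty$, uniform on compact subsets of $B$, because $\cH_b$ has constant rank and the spectral gap is locally uniform.
\end{itemize}
Smoothness of $\log h_F(w_0,w_0)$ follows from the holomorphicity of $\mathcal L_0$ proved in \cite{TY}.

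For the anomaly computation I would follow the Bismut--Gillet--Soulé superconnection approach. On the infinite-rank bundle $\cE\times B\to B$, form the rescaled superconnection
\[
\mathbb B_s=d_B+\sqrt s\,(\bar\partial_{F}+\bar\partial_{F}^*),
\]
or rather its holomorphic splitting $\sqrt s\,\bar\partial_F+\nabla^{1,0}+\dots$. The key identity is
\[
\partial_{b^i}\bar\partial_{F_b}=M_{d\phi_i}=[\bar\partial_{F_b},M_{\phi_i}],
\]
so varying $b$ changes the Witten differential by a commutator with multiplication by $\phi_i$. Its adjoint variation is $[\bar\partial_F^*,\cdot\,]$-exact in the same way, with $\bar\phi_j$.

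From this I would derive a transgression formula for the degree-$(1,1)$ component:
\[
\ddbar\,\Str\big(\ph\qh\,e^{-s\Delta_F}\big)=-\,s\frac{\partial}{\partial s}\Big(\text{the }(1,1)\text{-part of }\Str\big(\ph\, e^{-\mathbb B_s^2}\big)\Big),
\]
up to normalization. This uses \eqref{eq:pqcommutators}: $\qh$ plays the role of the degree operator, while $\ph$ commutes with everything relevant and survives. The $(1,1)$-part expands by Duhamel into terms of the shape
\[
s\,\Str\big(\ph\, M_{\phi_i}e^{-u s\Delta}M_{\bar\phi_j}e^{-(1-u)s\Delta}\big)
\]
and similar terms. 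Taking the Mellin transform and evaluating at $w=0$ yields $\ddbar\cT^{\LG}$ as a difference of two limits:
\begin{itemize}
\item the $s\to\infty$ limit, which localizes to harmonic forms and gives $\Str_{\cH}(\ph\,[C,\bar C])$-type expressions;
\item the constant term of the $s\to0$ expansion.
\end{itemize}

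On $\cH$ the operator $\ph$ acts with eigenvalues coming from the Hodge grading; by the $tt^*$/CY–LG identification of \cite{TY}, $C$ shifts this grading by one. The commutator trace should therefore reduce to $(-1)^n\Tr_\cH(C\wedge\bar C)$, with the sign $(-1)^n$ coming from the supertrace on $(n,*)$-type harmonic forms. Here $\cH_f$ is concentrated in total degree $n$, a fact I would check using Lemma~\ref{lem:N2reduction}.

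I expect the main obstacle to be the small-time contribution. On compact manifolds the $s\to0$ term is a local characteristic-form integral. Here, however, $\C^n$ is noncompact, and $\phi_i$ and the potential $|\partial F|^2$ grow polynomially. My plan is to use the homogeneity of $F$ together with the Dai--Yan localization to show that the constant term of the small-$s$ expansion of the $(1,1)$-transgression vanishes identically. The reasoning would be twofold:
\begin{itemize}
\item the local heat invariants of $\Delta_F$ are polynomial in derivatives of $F$, while $\ph\qh$ has weight-zero structure under the $\C^*$ rescaling $z\mapsto\lambda z$;
\item after integrating over $\C^n$, the only candidate $(1,1)$-terms are total $\ddbar$-derivatives in $b$ of scale-invariant quantities, and these vanish by scaling.
\end{itemize}
If exact vanishing fails, I would instead show the term is $\ddbar$ of a universal function, which is absorbed because the final identity only concerns $\ddbar$. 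The remaining care is justifying the interchange of $\ddbar$ with the Mellin integral, using the uniform family estimates from Proposition~\ref{prop:familyheat}.
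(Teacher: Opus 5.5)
Your architecture matches the paper's proof in every step except one:
\begin{itemize}
\item the double transgression with $\ph$ inserted, from $[\ph,\bar\partial_F]=0$, $[\qh,\bar\partial_F]=\bar\partial_F$, Duhamel and graded cyclicity;
\item the Bismut--K\"ohler large-time limit $-\Str_\cH(\ph R^\cH)$;
\item $R^\cH_{i\bar j}=-[C_i,\bar C_j]$, $[\ph,C_i]=-C_i$ and concentration of $\cH$ in degree $n$, giving $(-1)^n\Tr_\cH(C_i\bar C_j)$;
\item the Weil--Petersson correction.
\end{itemize}

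The gap is the step you flagged yourself. You need
\[
\FP_{s\to0}\bigl[\Str(\ph e^{-\mathbb B_{\sqrt s}^{\,2}})\bigr]^{(1,1)}=0,
\]
and neither of your two arguments supplies it.

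\textbf{Scaling does not decide the constant term.} Homogeneity only organizes the expansion. The unitary rescaling $U_s$ with $h=s^{n/(2(n-1))}$ turns $\mathbb B_{\sqrt s}^{\,2}$ into $D_h$. Its $(1,1)$ supertrace has an $h^{-2}$ coefficient and an a priori nonzero $h^0$ coefficient, and nothing in the scaling decides whether the $h^0$ coefficient vanishes. The coefficients are also not polynomial local invariants in derivatives of $F$: the confining potential stays exponentiated as $e^{-|\partial F|^2}$. Invariance under $z\mapsto\lambda z$, equivalently $F_b\mapsto aF_b$, constrains the $b$-dependence only along that rescaling orbit, not in transverse directions. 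So total $\ddbar$-derivatives of scale-invariant quantities need not vanish, and you give no reason the candidate terms have that form.

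\textbf{The fallback is invalid.} Suppose the finite part were $\ddbar g$ with $g$ not pluriharmonic. Then $\ddbar\cT^{\LG}=(-1)^n\Tr_\cH(C\wedge\bar C)+\ddbar g$, which is a different identity. Nothing is absorbed.

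\textbf{What the paper does instead} is an explicit computation (Proposition~\ref{prop:smalltime}):
\begin{itemize}
\item It extracts the $db^i\wedge d\bar b^j$ coefficient with odd variables $\eta,\bar\eta$ and writes $e^{-D_h}$ via a Brownian bridge.
\item Counting exterior and interior factors, together with bridge parity, shows that only $(k,\ell)=(2n-2,0)$, $(2n-2,2)$, $(2n,0)$ reach order $h^0$. These are evaluated after a Takagi factorization of $\partial_z^2F$.
\item The two constant-order pieces combine into the density $-\frac1{12}\Delta_{\bar\partial}\bigl(e^{-|\partial F|^2}Q_{i\bar j}\bigr)$. This is a total derivative in the \emph{spatial} variable $z$, because every factor of $Q_{i\bar j}$ is holomorphic or antiholomorphic, so it integrates to zero by rapid decay.
\item The first-order part $-\frac1n\sum_a(z_a\partial_{z_a}-\bar z_a\partial_{\bar z_a})$ of $\ph$ contributes nothing at order $h^0$, because the unweighted $\eta\bar\eta$-coefficient vanishes.
\item The nonidentity $\Z_n$-sectors are $O(h^{2n-2})$.
\end{itemize}
Your proposal contains none of these pieces.

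\textbf{Two smaller errors.} First, $[\bar\partial_{F_b},M_{\phi_i}]=\bar\partial\phi_i\wedge=0$ since $\phi_i$ is holomorphic, so the variation $\partial\phi_i\wedge$ is not $\bar\partial_{F_b}$-exact in the way you claim. It equals $[\partial_{F_b},M_{\phi_i}]$ instead, though the transgression does not need it. Second, concentration of $\cH$ in degree $n$ comes from \cite[Theorem~3.1]{TY}. It does not follow from Lemma~\ref{lem:N2reduction}, which already assumes it through $(N-n)^2P_F=0$.
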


Similarly, we have BCOV invariant $F_{1}^{\mathrm{BCOV}}(\mathcal{X})\in C^\infty(B)$ associated with the family  $\mathcal{X}\to B$,
which is given by
\[F_{1}^{\mathrm{BCOV}}(\mathcal{X})(b):=F_{1}^{\mathrm{BCOV}}(X_b).\]

Immediately, we have the following result.

\begin{corollary}\label{cor: harmonic}
On $B$,
\[
\ddbar\Big(
F_{1}^\LG(F)
-
F_{1}^{\mathrm{BCOV}}(\mathcal{X})
\Big)
=
0.
\]
\end{corollary}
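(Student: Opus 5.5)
The plan is to compare \eqref{eq:main} in Theorem~\ref{thm:modifiedanomalyearly} with the holomorphic anomaly formula for the BCOV invariant, and then identify the two right-hand sides term by term.

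\textbf{Step 1: the BCOV anomaly formula.} I would quote the curvature formula of Fang--Lu--Yoshikawa \cite{FLY} for threefolds, and of Eriksson--Freixas i Montplet--Mourougane \cite{EFM} in arbitrary dimension. For a family of Calabi--Yau manifolds $\pi:\mathcal X\to B$ of dimension $m=n-2$ it reads
\[
\ddbar F_1^{\mathrm{BCOV}}(\mathcal X)=\frac{(-1)^{m}}{2}\sum_{p,q}(-1)^{p+q}\,\Big(\tfrac{p\,q}{\cdot}\Big)\text{-weighted curvature terms}+\frac{\chi(X)}{24}\,\omega_{\mathrm{WP}} .
\]
Equivalently, up to the normalization conventions of \cite{EFM}, $\ddbar F_1^{\mathrm{BCOV}}$ equals a multiple of $\omega_{\mathrm{WP}}$ with coefficient $\chi/24$, plus $\tfrac12\Tr$ of the Higgs-field square $\theta\wedge\bar\theta$ on the primitive (vanishing) cohomology $H^{m}_{\mathrm{prim}}(X_b)$, where $\theta$ is the Kodaira--Spencer action. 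The ambient (non-primitive) cohomology of $X_b$ comes from $\mathbb P^{n-1}$ and is constant in $b$, so it contributes nothing. The sign is matched using $(-1)^{m}=(-1)^{n}$.

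\textbf{Step 2: identification of the Higgs terms.} By \cite{TY}, the CY/LG correspondence of $tt^*$ structures identifies $(\cH,h_F,\nabla^\cH,C,\bar C)$ with the variation of Hodge structure on primitive middle cohomology, together with its Hodge metric and Higgs field $\theta$, up to the sign and normalization conventions fixed there. Transporting $\Tr_\cH(C\wedge\bar C)$ through this isomorphism gives exactly the Higgs term in Step~1, with overall factor $(-1)^{n+1}/2$. The Weil--Petersson terms already coincide, since both sides use $\omega_{\mathrm{WP}}=-\ddbar\log h_F(w_0,w_0)$ with the same coefficient $\chi(X_F)/24$; here I use $\chi(X_F)=\chi(X_b)$ for every $b$.

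\textbf{Step 3: subtraction.} Subtracting the two formulas gives $\ddbar(F_1^\LG-F_1^{\mathrm{BCOV}})=0$.

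The main obstacle is Step~2. The grading in the BCOV formula, which involves weights such as $(-1)^q q$ on $H^{p,q}$, must match the fact that $C\wedge\bar C$ is traced without any weight. I would check this by decomposing $\cH$ by $\qh$-eigenvalue, which corresponds to the Hodge degree by \cite{TY}, and verifying that the EFM expression reduces on primitive cohomology to an unweighted supertrace of $\theta\wedge\bar\theta$ with the stated sign.
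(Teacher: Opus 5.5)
Your proposal is correct and follows the same route as the paper. You quote the BCOV anomaly formula from \cite[Proposition~5.9]{EFM} (or \cite[Theorem~4.14]{FLY}), observe via Theorem~\ref{thm:modifiedanomalyearly} and \cite[Theorems~1.1 and~1.3]{TY} that $F_1^\LG$ satisfies the same formula, and subtract.

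The weighting issue you single out is handled on the LG side exactly as in \eqref{eq:rawanomaly}, where $[\ph,C_i]=-C_i$ telescopes the $\ph$-weighted curvature trace to the unweighted $\Tr_\cH(C\wedge\bar C)$; Griffiths' curvature formula plays the same role on the Calabi--Yau side. Your first display in Step~1 is not an actual formula and should be replaced by the precise statement from \cite{EFM}.
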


\begin{proof}
The anomaly formula for $F_{1}^{\mathrm{BCOV}}$ is given in
\cite[Proposition~5.9, equation~(5.5)]{EFM} in arbitrary dimension and in
\cite[Theorem~4.14]{FLY} for Calabi--Yau threefolds. By
Theorem~\ref{thm:modifiedanomalyearly} and
\cite[Theorems~1.1 and~1.3]{TY}, $F_{1}^{\LG}$ satisfies the same anomaly formula. The
corollary follows.
\end{proof}

\subsection{Conventions}

Throughout this paper, for $z=(z_a)_{a=1}^n\in\C^n$, we consider standard metric $g_0$ on $\C^n$
\[
 g_0=\sum_{a=1}^n\bigl((dx_a)^2+(dy_a)^2\bigr),\quad\text{where} \quad  z_a=x_a+iy_a.
\]
The volume measure for $ d^{2n}z:=\prod_{a=1}^n dx_a\,dy_a.$
All kernels are taken with respect to this measure. We distinguish the
coordinate column vector $\partial_zF$ from the differential one-form $\partial F$:
\[
 \partial_zF:=\bigl(\partial_{z_1}F,\ldots,\partial_{z_n}F\bigr)^T,
 \qquad
 \partial F:=\sum_a\partial_{z_a}F\,dz_a.
\]
Since $|dz_a|_{g_0}^2=|d\bar z_a|_{g_0}^2=2$,
\[
 |\partial_zF|^2:=\overline{\partial_zF}^T\partial_zF=\sum_a|\partial_{z_a}F|^2,
 \qquad |\partial F|^2:=|\partial F|_{g_0}^2
       =2|\partial_zF|^2.
\]
The Hessian $\partial_z^2F$ is the matrix
$\bigl(\partial_{z_a}\partial_{z_c}F\bigr)_{a,c}$.
The conjugate transpose of a matrix $A$ is always written $\bar A^T$.
The superscript $*$ is reserved for the metric adjoint of an operator.
In particular,
\[
 (dz_a\wedge)^*=2\iota_{\partial/\partial z_a},\qquad
 (d\bar z_a\wedge)^*=2\iota_{\partial/\partial\bar z_a},\qquad
 \bar\partial^*=-2\sum_a\iota_{\partial/\partial\bar z_a}\partial_{z_a}.
\]
Then
\[
 \Delta_{\bar\partial}
 =-2\sum_a\partial_{z_a}\partial_{\bar z_a}
 =-\frac12\sum_{a=1}^n(\partial_{x_a}^2+\partial_{y_a}^2).
\]

For complex-valued functions $f,g$, the notation
$\nabla f\cdot\nabla g:=\sum_{\alpha=1}^{2n}
(\partial_\alpha f)(\partial_\alpha g)$ is complex bilinear,
not Hermitian. The real gradient and all distances refer to $g_0$.
Consequently,
\[
 \Delta_{\bar\partial}(fg)
 =f\Delta_{\bar\partial}g+g\Delta_{\bar\partial}f
       -\nabla f\cdot\nabla g.
\]

\subsection{Organization of the paper}

The paper is organized as follows.
In \Cref{sec:inputs}, we develop the family heat calculus,
perform the small-time analysis, and establish the holomorphic
anomaly formula, i.e. Theorem~\ref{thm:modifiedanomalyearly}.
In \Cref{sec:modified}, we study the modified LG torsion
and prove its invariance under scalar multiplication and linear
changes of coordinates.
Finally, in \Cref{sec: proof main}, we establish the necessary
boundary estimates and complete the proof of the genus-one
CY/LG correspondence in Theorem~\ref{thm: main}.

{\paragraph{Acknowledgments.}
The authors thank Huijun Fan for raising this question and for encouraging them
to pursue this problem. J.~Yan also thanks Maxim Braverman, Xianzhe Dai, Minghao Wang and Ken-ichi Yoshikawa
for helpful discussions on related questions. J.~Yan is supported by the
Zelevinsky Postdoctoral Fellowship.
X. Tang is supported by 
the BNSFC Youth Project (Grant No.~1254041), the NSFC
Youth Project (Grant No.~12501079) and BNSFC (Grant No.~JR25001).}

\section{Family heat calculus and Brownian bridges}
\label{sec:inputs}

\subsection{Family heat calculus}
For $u\in\C^*$, we consider the following Bismut-type superconnection
\begin{equation}\label{eq:superconnection}
 \mathbb B_u=d_B+u\bar\partial_{F_b}+\bar u\bar\partial_{F_b}^*.
\end{equation}
Here $d_B$ is the de Rham differential on $B.$
Then we compute:
\[\mathbb B_u^2=|u|^2\Delta_{F_b}+u\sum_idb^i\partial\phi_i\wedge+\bar{u}\sum_jd\bar{b}^j(\partial\phi_j\wedge)^*.\]
The following proposition provides the analytic input for the anomaly formula, allowing the standard compact-manifold argument to be extended to the noncompact setting with a  potential.

\begin{proposition}\label{prop:familyheat}
Let $K\Subset B$. The following hold uniformly on $K$.
\begin{enumerate}[(i)]
\item\label{item:calculus}
Let
$
L\in C^\infty\!\left(
\C^n,\End(\Lambda^{*,*}\C^n)
\right),
$
whose pointwise norm has polynomial growth of $O(|z|^{2n-2-\delta})$ for some $\delta>0$. Then the Duhamel formula holds:
\[
\left.
\frac{d}{d\varepsilon}
e^{-s\Delta_{F_b}+\varepsilon L}\right|_{\varepsilon=0}
=\int_0^1
e^{-(1-t)s\Delta_{F_b}}\,
L\,
e^{-ts\Delta_{F_b}}\,dt.
\]
And the operator above as well as $L e^{-s\Delta_{F_b}}$ and $ e^{-s\Delta_{F_b}}L$ is trace class, locally uniformly in $b$, and depend smoothly on
$b$. 

Moreover,
\[
\Str\!\left(L e^{-s\Delta_{F_b}}\right)
=
\Str\!\left(e^{-s\Delta_{F_b}}L\right),
\]
and the corresponding graded cyclicity identities hold for the supertraces of
products of heat operators and polynomial-growth endomorphisms.

\item\label{item:expansion}
There are log-free expansions in $C^\infty(K)$:
\[
 \begin{aligned}
 \Str\!\left(\ph\qh(e^{-s{\Delta_{F_b}}}-P_b)\right)
     &\sim\sum_{\lambda\in\Lambda}a_\lambda s^\lambda,\\
\Str(\ph e^{-\mathbb B_{\sqrt s}^{\,2}})^{(1,1)}
     &\sim\sum_{\lambda\in\Lambda}A_\lambda s^\lambda,
 \qquad \#\bigl(\Lambda\cap(-\infty,M]\bigr)<\infty\quad(M\in\R).
 \end{aligned}
\]
Here $\Lambda\subset\R$ is the set of exponents, $a_\lambda$ are
functions on $B$ and $A_\lambda$ are $(1,1)$-forms on $B$; missing
coefficients are set to zero. The expansions are stable under
$s\partial_s$, with arbitrarily high remainder order.

\item\label{item:infinity}
The projection $P_b$ is smooth; the $(\ph,\qh)$-graded ranks of
$\cH_b$ are locally constant; $\cH_b^k=0$ for $k\ne n$. Moreover,
\begin{equation}\label{eq:infinityinput}
 \begin{aligned}
 \Str\!\left(\ph\qh(e^{-s{\Delta_{F_b}}}-P_b)\right)
     &=O_{C^\infty(K)}(e^{-c_Ks}),\\
 \lim_{s\to\infty}
      \Str(\ph e^{-\mathbb B_{\sqrt s}^{\,2}})^{(1,1)}
     &=-\Str_{\cH}\!\left(\ph R^{\cH}\right).
 \end{aligned}
\end{equation}

\end{enumerate}
Here $w^{(1,1)}$ denotes the $(1,1)$-component of a differential form $w$.
\end{proposition}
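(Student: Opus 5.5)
The plan is to reduce all three parts to pointwise heat-kernel estimates for the family of Schrödinger-type operators $\Delta_{F_b}=\Delta_{\bar\partial}+|\partial F_b|^2+L_{F_b}$ in \eqref{eq:fixed}, uniformly for $b\in K$. The key input is homogeneity plus isolatedness of the critical point. Together they give $|\partial_z F_b(z)|\ge c_K|z|^{n-1}$ uniformly on $K$. The Hessian term satisfies $|L_{F_b}|=O(|z|^{n-2})$, so $L_{F_b}$ is dominated by the potential at infinity. Using the Dai--Yan estimates \cite{DY}, or a Feynman--Kac/Brownian bridge representation, I will first prove a Gaussian-type bound
\[
|e^{-s\Delta_{F_b}}(z,z')|\le C s^{-n}e^{-c|z-z'|^2/s}\,e^{-c s\,\min(|z|,|z'|)^{2n-2}}e^{Cs}.
\]
The same bound, with extra polynomial factors, will be proved for $b$-derivatives, which I obtain by iterated Duhamel with perturbations $\partial_b^\alpha(|\partial F_b|^2+L_{F_b})$. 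These perturbations are polynomials of degree at most $2n-2$ times the weight $e^{-cs|z|^{2n-2}}$.

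For (i), the hypothesis $|L|=O(|z|^{2n-2-\delta})$ guarantees that $\varepsilon L$ is a relatively bounded perturbation of the potential with relative bound $0$, so $e^{-s\Delta_{F_b}+\varepsilon L}$ is defined. The Duhamel formula then follows from the standard argument on a dense core, justified by semigroup theory. For trace class I write $Le^{-s\Delta}=(Le^{-s\Delta/2}(1+|z|^2)^{N})\cdot((1+|z|^2)^{-N}e^{-s\Delta/2})$. Each factor is Hilbert--Schmidt by the kernel bound, because the super-Gaussian decay absorbs any polynomial weight. Cyclicity follows from cyclicity of the trace for a product of two Hilbert--Schmidt operators. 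Smoothness in $b$ comes from the derivative bounds above.

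For (ii), $\ph\qh$ involves $E$, a first-order operator with linear coefficients, so I handle it in the same weighted Hilbert--Schmidt framework. For the small-time expansion, I localize. On $|z|\le R$ the usual Getzler/local expansion gives a log-free series in $s^{k/2-n}$, with half-integer powers cancelling in the supertrace by parity. Near infinity I rescale $z=s^{-1/(2n-2)}\cdot$ (or equivalently use the quasi-homogeneity $F_b(\lambda z)=\lambda^nF_b(z)$). This shows that the tail contribution $\int_{|z|\ge R}$ has an asymptotic expansion in powers $s^{\lambda}$ with $\lambda\in\frac{1}{2n-2}\Z+\dots$. The combined exponent set $\Lambda$ is discrete and bounded below. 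The superconnection term $\mathbb B_{\sqrt s}^2=s\Delta_{F_b}+\sqrt s(\cdots)$ is treated by Duhamel expansion in the nilpotent $db,d\bar b$ terms, giving finitely many terms of the same type. This step is the main obstacle: showing the expansion is log-free and uniform. I would try first to exploit exact scaling, in which conjugating by dilation $\delta_\lambda$ maps $\Delta_{F_b}$ to $\lambda^{-2}\Delta_{\lambda^{n}F_b}$, so the $s$-dependence becomes dependence on a coupling constant whose Taylor expansion has no logs. Stability under $s\partial_s$ follows by differentiating the same representation.

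For (iii), I use the lower bound $\Delta_{F_b}\ge |\partial F_b|^2-C(1+|z|^{n-2})$, which shows the spectrum is discrete. Homogeneity then gives a spectral gap $\lambda_1(b)\ge c_K>0$ via continuity of eigenvalues in $b$ and compact resolvent. This gives $P_b$ smooth through the Riesz projection $\frac{1}{2\pi i}\oint(\lambda-\Delta_{F_b})^{-1}d\lambda$. The ranks are locally constant, and the Milnor-ring isomorphism $S_f$ gives concentration in degree $n$. The exponential decay $O(e^{-c_Ks})$ follows from the gap and the weighted trace-class bounds. The limit of the superconnection term follows the standard Berline--Getzler--Vergne argument. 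Splitting along $P_b$ and $1-P_b$, the component on $\ker$ converges to $(P\,d_BP)^2=R^{\cH}$, while the off-diagonal parts decay by the gap. The sign and the $(1,1)$-component come from $e^{-R^\cH}$ at first order.
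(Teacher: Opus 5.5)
Your route is essentially the paper's. It uses the uniform confining estimate $|\partial F_b|\ge c_K|z|^{n-1}$, Dai--Yan/Feynman--Kac heat-kernel control with $b$-derivatives produced by Duhamel insertions of polynomial growth, and exact rescaling (the paper's $U_s$, $h=s^{n/(2(n-1))}$) as the source of log-free exponents. It also uses a finite Duhamel expansion in the nilpotent $db,d\bar b$ terms, the harmonic realization of \cite{TY}, and the Bismut--K\"ohler/BGV large-time argument. One intermediate estimate, however, is false as stated and has to be replaced.

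The bound $|e^{-s\Delta_{F_b}}(z,z')|\le Cs^{-n}e^{-c|z-z'|^2/s}e^{-cs\min(|z|,|z'|)^{2n-2}}e^{Cs}$ fails for fixed $s$ as $|z|\to\infty$. On $0$-forms $L_{F_b}=0$, so that block of $\Delta_{F_b}$ is the scalar operator $-\tfrac12\Delta_{\R^{2n}}+2|\partial_zF_b|^2$. If $\psi_0>0$ is its normalized ground state with eigenvalue $\lambda_0$, the degree-$0$ block of the diagonal kernel is at least $e^{-s\lambda_0}\psi_0(z)^2$. Since the potential is $O(|z|^{2n-2})$, standard Agmon/Carmona--Simon lower bounds give $\psi_0(z)\ge c\,e^{-C|z|^{n}}$. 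Because $n<2n-2$, this decays far more slowly than $e^{-cs|z|^{2n-2}}$. In path-integral terms, the bridge can run to the origin in time $\tau\approx|z|^{2-n}$ and back at total cost $\approx|z|^{n}$, much cheaper than paying $s|z|^{2n-2}$ to stay out.

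What Feynman--Kac honestly gives, with $m=\min(|z|,|z'|)$, is a factor $\exp\bigl(-c\min\{s\,m^{2n-2},\,m^2/s\}\bigr)$. To get it, use the Gaussian factor when $|z-z'|\gtrsim m$; otherwise split according to whether the bridge leaves an $m/4$-neighbourhood of the segment. For $s$ in compact subsets of $(0,\infty)$ this still decays faster than any polynomial. That is all your Hilbert--Schmidt factorization, cyclicity argument and Duhamel bounds for $b$-derivatives use, so (i) survives the correction. For (ii), do not rely on any small-$s$ pointwise bound of this type. Work entirely after rescaling, where what you need is that the unit-time bridge amplitude of $D_h$ has $h$-Taylor coefficients and remainders integrable against polynomial weights, uniformly for $0<h\le1$, as in Lemma~\ref{lem:matrixremainder}. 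With that in hand, the fixed-$R$ split into a compact piece and a tail is unnecessary.

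There are two smaller points. First, the supertrace is over $\Z_n$-invariant forms, i.e.\ $\frac1n\sum_{g\in\Z_n}\Str(g^*\,\cdot\,)$ on the full space, so the twisted sectors $g\ne1$ must also be expanded. They localize at the origin and are handled by the same scaling (compare Step~7 in the proof of Proposition~\ref{prop:smalltime}). Second, in (iii), continuity of eigenvalues and compact resolvent give a gap at each $b$, but not a uniform one on $K$. It is the constancy of $\dim\cH_b$ that keeps positive eigenvalues from descending to $0$. You invoke it only afterwards; it should come first, as in the paper.
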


\begin{proof}
Fix $K\Subset B$. Homogeneity and nondegeneracy give
$|\partial F_b(z)|\ge c_K|z|^{n-1}$ uniformly for $b\in K$. All parameter derivatives of the coefficients occurring below have uniform polynomial bounds in $z$. Thus differentiating the heat parametrix and its Duhamel remainder in the $B$-directions introduces polynomial insertions, which are controlled by the same confining and weighted elliptic estimates. Retaining the spatial derivatives required by $\ph$ and $\qh$, and taking sufficiently many parametrix terms, gives the asserted bounds after any fixed number of parameter derivatives.

\vskip 0.1cm
For \textup{(i)}, the spectral growth estimate \cite[Proposition~2.4]{DY} and these weighted estimates give the
trace-class assertions.

\vskip 0.1cm
For \textup{(ii)}, apply the heat-parametrix and trace argument of \cite[Theorem~4.1 and \S6, especially Theorem~6.2]{DY} with the indicated insertions.
The positive-degree terms of $\mathbb B_{\sqrt s}^{\,2}$ form a finite
Duhamel expansion. The same remainder estimates apply after $s\partial_s$.

\vskip 0.1cm
For \textup{(iii)}, the harmonic realization, graded-rank constancy,
and middle-degree vanishing follow from
\cite[Theorem~3.1]{TY}.
Local norm-resolvent continuity and constancy of the kernel dimension
give a uniform gap above zero on $K$ and smoothness of $P_b$. The gap gives the differentiated reduced large-time decay; the large-time superconnection argument of \cite[Theorem~3.4]{BK}
then leaves the connection $P\,d_B$ on $\cH$, whose curvature gives
the stated limit.
\end{proof}

\begin{proof}[Proof of Theorem \ref{thm:modifiedanomalyearly}]
The relations $[\ph,\bar\partial_{F_b}]=0$,
$[\qh,\bar\partial_{F_b}]=\bar\partial_{F_b}$ and their adjoints,
together with graded cyclicity and Duhamel's formula, give the
double-transgression identity
\begin{equation}\label{eq:transgression}
\ddbar\Str\!\left(\ph\qh(e^{-s\Delta_{F_b}}-P_b)\right)
=
s\frac{d}{ds}
\Str(\ph e^{-\mathbb B_{\sqrt s}^{\,2}})^{(1,1)}.
\end{equation}
This is the same superconnection calculation as in
\cite[Theorem~2.9]{BGS2};
Proposition~\ref{prop:familyheat} justifies the corresponding
trace manipulations in the present noncompact setting.

With Proposition~\ref{prop:familyheat} again, a standard argument (c.f.
\cite[\S 3]{BK}) shows that
\begin{equation}\label{eq:boundaryformula}
\ddbar\cT^\LG
=\FP_{s\to0}
\bigl[\Str(\ph e^{-\mathbb B_{\sqrt s}^{\,2}})\bigr]^{(1,1)}
+\Str_\cH(\ph R^{\cH,(1,1)}).
\end{equation}
Here $\FP_{s\to0}$ denotes the constant term, i.e., the coefficient of $s^0$ in an asymptotic expansion.

We will show in 
Proposition~\ref{prop:smalltime} that 
\[\FP_{s \to0}\Str(\ph e^{-\mathbb B_{\sqrt s}^{\,2}})^{(1,1)}=0.\]
The $tt^*$ curvature identity, see \cite[Theorem~4.14]{Fan} or \cite[Proposition~2.4]{TY},
restricted to the $\Z_n$-invariant harmonic subbundle, gives
$R^\cH_{i\bar j}=-[C_i,\bar{C}_j]$.
Homogeneity gives $[\ph,C_i]=-C_i$; since $C_i$ is even,
$\Str_\cH([C_i,\ph \bar{C}_j])=0$ by cyclicity.
As $\cH^k=0$ for $k\ne n$,
\begin{equation}\label{eq:rawanomaly}
\partial_i\partial_{\bar j}\cT^\LG
=\Str_\cH(\ph R^\cH_{i\bar j})
=-\Str_\cH([\ph,C_i]\bar{C}_j)
=(-1)^n\Tr_\cH(C_i\bar{C}_j).
\end{equation}
Finally, \cite[Theorem~1.1]{TY} gives
$\ddbar\log h_F(w_0,w_0)=-\omega_{\mathrm{WP}}^\LG$;
Definition~\ref{def:modifiedearly} therefore yields
\[
\ddbar F_{1}^\LG
=
\frac{(-1)^{n+1}}2\Tr_\cH(C\wedge \bar{C})
+\frac{\chi(X_F)}{24}\omega_{\mathrm{WP}}^\LG.
\qedhere
\]
\end{proof}

 To evaluate
\[
\FP_{s\to0}
\bigl[\Str(\ph e^{-\mathbb B_{\sqrt{s}}^{\,2}})\bigr]^{(1,1)},
\]
we consider, for $h>0$,  
\begin{align}
D_h&=|\partial F_b|^2+h^2\Delta_{\bar\partial}+hL_{F_b}
 +\sum_i db^i\,\partial\phi_i\wedge
 +\sum_jd\bar b^j\,(\partial\phi_j\wedge)^*.
\end{align}

For
\[
\omega=\sum_{I,J}\omega_{I,J}(z)\,dz_I\wedge d\bar z_J,
\]
 define
\be\label{eq: defn Us}
U_s\omega
=
s^{-\frac{n}{2(n-1)}}
\sum_{I,J}
\omega_{I,J}\!\left(s^{-\frac1{2(n-1)}}z\right)
dz_I\wedge d\bar z_J.
\ee
Then $U_s$ is unitary and $[U_s,\ph\,]=0$. If $h=s^{\frac{n}{2(n-1)}}$, we have,
\[U_s(\sqrt{s}\bar\p_{F_b})U_s^{-1}=h\bar\p+\p F_b\wedge,\qquad U_s(\sqrt{s}\bar\p_{F_b}^*)U_s^{-1}=h\bar\p^*+(\p F_b\wedge)^*.\]
Furthermore
\begin{equation}\label{eq:exactscale1}
U_s\mathbb B_{\sqrt{s}}^{\,2}U_s^{-1}=D_h.
\end{equation}

Thus, we have:
\begin{proposition}\label{prop:exactscale}
\[
\FP_{s \to0}
\Str(\ph e^{-\mathbb B_{\sqrt{s}}^{\,2}})^{(1,1)}
=
\FP_{h \to0}
\Str(\ph e^{-D_h})^{(1,1)}.
\]
\end{proposition}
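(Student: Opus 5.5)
The identity follows from the exact conjugation \eqref{eq:exactscale1} plus the observation that the map $s\mapsto h=s^{\frac{n}{2(n-1)}}$ is a positive power, so constant terms of the two asymptotic expansions correspond.

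\textbf{Step 1: an exact equality for each $s>0$.} Since $U_s$ is unitary on $\cE$ and acts only on the $\C^n$ variables, it commutes with the form degree in $B$. It also commutes with the $\Z_n$-action, because scaling $z$ commutes with $z\mapsto e^{2\pi i/n}z$; hence it preserves $\cE$. It preserves the form bidegree on $\C^n$ as well, so it commutes with the grading operator used in $\Str$. The conjugation identities for $\sqrt s\,\bar\partial_{F_b}$ and its adjoint are checked by the chain rule. One also uses homogeneity $F_b(\lambda z)=\lambda^nF_b(z)$ to see that
\[
\partial F_b\bigl(s^{-\frac1{2(n-1)}}z\bigr)=s^{-\frac12}(\partial F_b)(z)
\]
as forms, where the extra factor $\lambda$ comes from $dz$. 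The same homogeneity gives the analogous statement for $\partial\phi_i$, since $\phi_i=\partial_{b^i}F$ is again homogeneous of degree $n$.

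With \eqref{eq:exactscale1} and the functional calculus, this gives $U_s e^{-\mathbb B_{\sqrt s}^2}U_s^{-1}=e^{-D_h}$. The positive-degree part is a finite Duhamel expansion, so this operator identity reduces to the one for the degree-zero parts, where unitary conjugation commutes with the exponential. Using $[U_s,\ph]=0$ (the operator $E$ is scaling invariant), we get
\[
\Str\bigl(\ph e^{-\mathbb B_{\sqrt s}^{2}}\bigr)^{(1,1)}=\Str\bigl(U_s\ph e^{-\mathbb B_{\sqrt s}^{2}}U_s^{-1}\bigr)^{(1,1)}=\Str\bigl(\ph e^{-D_h}\bigr)^{(1,1)},\qquad h=s^{\frac{n}{2(n-1)}}.
\]
Invariance of the supertrace under unitary conjugation is justified by the trace-class statements of Proposition~\ref{prop:familyheat}\eqref{item:calculus}. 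The cleanest route is to note that the Schwartz kernels transform by the change of variables $z\mapsto s^{-\frac1{2(n-1)}}z$ and then integrate the pointwise supertrace along the diagonal.

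\textbf{Step 2: matching the finite parts.} By Proposition~\ref{prop:familyheat}\eqref{item:expansion}, the left side has an expansion $\sum_\lambda A_\lambda s^\lambda$ as $s\to0$. Substituting $s=h^{2(n-1)/n}$ turns this into the expansion $\sum_\lambda A_\lambda h^{2(n-1)\lambda/n}$ as $h\to0$. This expansion is still log-free, with only finitely many exponents below any bound, and it is unique. The coefficient of $h^0$ comes exactly from $\lambda=0$, which is $A_0$. Hence the two finite parts coincide.

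\textbf{Main obstacle.} The only genuinely delicate point is justifying conjugation invariance of $\Str$ for the superconnection heat operator on the noncompact space. The key difficulty is to confirm that $e^{-D_h}$ is the operator actually obtained from $e^{-\mathbb B^2}$, including its Duhamel terms, rather than a formal expression. I would handle this by applying Proposition~\ref{prop:familyheat}\eqref{item:calculus} term by term in the finite Duhamel expansion, since each term is trace class. The identity $U_s\bigl(\int e^{-t_0X}Ye^{-t_1X}\cdots\bigr)U_s^{-1}=\int e^{-t_0X'}Y'e^{-t_1X'}\cdots$ is then immediate from unitarity.
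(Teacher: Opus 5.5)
Your proposal is correct and is essentially the paper's own argument: the paper obtains the proposition directly from the exact conjugation $U_s\mathbb B_{\sqrt s}^{\,2}U_s^{-1}=D_h$ with $h=s^{\frac{n}{2(n-1)}}$ together with $[U_s,\ph]=0$, so the two supertraces agree for each $s>0$, and the constant term of a log-free expansion is unchanged under the positive-power reparametrization $s=h^{2(n-1)/n}$. Two small remarks. First, the step you flag as delicate is routine, since $\Tr(UAU^{-1})=\Tr(A)$ holds for any trace-class $A$ and bounded invertible $U$. Second, because $U_s$ acts coefficientwise, the homogeneity identity you need is $(\partial_{z_a}F_b)(\lambda z)=\lambda^{n-1}(\partial_{z_a}F_b)(z)$ with $\lambda^{n-1}=s^{-1/2}$, and no factor comes from $dz$.
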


Thus, to compute
\[
\FP_{s \to0}
\Str(\ph e^{-\mathbb B_{\sqrt{s}}^{\,2}})^{(1,1)},
\]
it suffices to compute
\[
\FP_{h \to0}
\Str(\ph e^{-D_h})^{(1,1)}.
\]
We will carry out this computation using the Brownian bridge technique in
Proposition~\ref{prop:smalltime}.

\subsection{Brownian bridges }
\label{sec:matrixremainder}
In this subsection, we review how Brownian bridges can be used to represent the
heat kernels of Schr\"odinger-type operators (see Lemma \ref{lem:matrixremainder}).

\vskip 0.1cm
We use the standard Brownian bridge $\mathcal B_u$ in $\R^{2n}$:
\begin{equation}\label{eq:bridgecovariance}
 \mathcal{B}_u=W_u-uW_1,\qquad
 \mathbb E \mathcal{B}_u^\alpha=0,\qquad
 \mathbb E(\mathcal{B}_u^\alpha \mathcal{B}_v^\beta)
   =\delta_{\alpha\beta}\bigl(\min(u,v)-uv\bigr),\qquad 0\le u,v\le1.
\end{equation}
Here $W_u$ is standard real Brownian motion, with covariance
$\mathbb E(W_u^\alpha W_v^\beta)=\delta_{\alpha\beta}\min(u,v)$;
$\mathcal{B}_u$ denotes the bridge.
Spatial indices $\alpha,\beta$ range from $1$ to $2n$.
Consider the path
\begin{equation}\label{eq:bridgepath}
 \xi_u(x,y)=(1-u)x+uy+ h\mathcal{B}_u.
\end{equation}

For a continuous matrix-valued function $M$, our ordering convention is
\begin{equation}\label{eq:pathordering}
 \begin{aligned}
 &\mathcal P\exp\left(-\int_0^1M(\xi_u)\,du\right)=\Id+\sum_{q=1}^\infty(-1)^q
   \int_{0<u_1<\cdots<u_q<1}
       M(\xi_{u_1})\cdots M(\xi_{u_q})\,du_1\cdots du_q.
 \end{aligned}
\end{equation}
Earlier times occur on the left. Products retain their matrix order.

Then we have:

\begin{lemma}\label{lem:matrixremainder}
Let $K\Subset B$. For every pair of integers $L,k\ge0$, the following
statements hold.

\begin{enumerate}[(i)]
\item
The $\Omega^\bullet(B)$-valued kernel of $e^{-D_h}$ is
\begin{equation}\label{eq:bridgekernel}
 K_{D_h}(1;x,y)
 =
 (2\pi h^2)^{-n}
 \exp\!\left(-\frac{|x-y|^2}{2h^2}\right)
 \mathcal K_h(x,y),
\end{equation}
where
\begin{equation}\label{eq:bridgeamplitude}
\begin{aligned}
 \mathcal K_h(x,y)
 =\mathbb E\bigg[
 &\exp\!\Big(-\int_0^1 |\partial F_b\bigl(\xi_u(x,y)\bigr)|^2\,du
 \Big)
\\[-0.2em]
 &~\cdot\mathcal P\exp\!\Big(
-\int_0^1\bigl(hL_{F_b}+\sum_i db^i\,\partial\phi_i\wedge
 +\sum_j d\bar b^j\,(\partial\phi_j\wedge)^*\bigr)
   \bigl(\xi_u(x,y)\bigr)\,du
 \Big)
 \bigg].
\end{aligned}
\end{equation}
In each fixed degree of $\Omega^\bullet(B)$, the amplitude $\mathcal K_h(x,y)$ is smooth for $0\le h\le1$.
\item
The amplitude has the Taylor expansion
\begin{equation*}\label{eq:amplitudeTaylor}
\mathcal K_h(z,w)=
\sum_{j=0}^{L}h^j\,\mathcal K_{j,b}(z,w)+
R_{L,b}(h;z,w),\qquad\mathcal K_{j,b}
=\frac1{j!}\left.\partial_h^j\mathcal K_h\right|_{h=0}.
\end{equation*}
For every $0\le j\le L$,
\begin{equation}\label{eq:coefficientintegrability}
 \sup_{\substack{b\in K\\|\beta|\le k}}
 \int_{\C^n}
 \Bigg(
\bigl\|\partial_b^\beta\mathcal K_{j,b}(z,z)\bigr\|+\langle z\rangle\left\|
\left.\partial_b^\beta\nabla_1\mathcal K_{j,b}(z,w)\right|_{w=z}\right\|
\Bigg)\,d^{2n}z
 <\infty.
\end{equation}
\item
The Taylor remainder satisfies
\begin{equation}\label{eq:familyremainder}
\begin{aligned}
 \max_{0\le\ell\le1}\;
 \sup_{\substack{b\in K\\|\beta|\le k}}
 \int_{\C^n}
 \Bigg(
 &
 \left\|
 \partial_b^\beta(h\partial_h)^\ell
 R_{L,b}(h;z,z)
 \right\|
\\
 &\quad+
 \langle z\rangle
 \left\|
 \left.
 \partial_b^\beta\nabla_1(h\partial_h)^\ell
 R_{L,b}(h;z,w)
 \right|_{w=z}
 \right\|
 \Bigg)\,d^{2n}z
 \le C_{K,L,k}h^{L+1}.
\end{aligned}
\end{equation}
\end{enumerate}
Here $\langle z\rangle=(1+|z|^2)^{1/2}$, $\beta$ is a multiindex in local coordinates on $B$, and $\nabla_1$ denotes differentiation
with respect to the first spatial variable. The norm is taken in a local trivialization of the finite-dimensional exterior-form
spaces, and $C_{K,L,k}$ is independent of $b$ and $0<h\le1$. The same estimate holds
after applying $h\partial_h$.
\end{lemma}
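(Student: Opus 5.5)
The plan is to prove (i) by a matrix-valued Feynman--Kac formula, and to obtain (ii) and (iii) by differentiating the resulting path-integral representation under the expectation. The whole argument rests on one pathwise bound: the scalar confining factor $\exp(-\int_0^1|\partial F_b(\xi_u)|^2du)$ dominates every growth produced by the matrix part and by $h$-, $b$- and $x$-derivatives.

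For (i), I would split $D_h$ into three pieces. The scalar Schr\"odinger operator $h^2\Delta_{\bar\partial}+|\partial F_b|^2=-\tfrac{h^2}{2}\Delta_{\R^{2n}}+|\partial F_b|^2$ is the first. The matrix potential $hL_{F_b}$ is the second. The form-valued part $\sum_i db^i\,\partial\phi_i\wedge+\sum_j d\bar b^j(\partial\phi_j\wedge)^*$ is the third; it is nilpotent in $\Omega^\bullet(B)$.
\begin{itemize}
\item Brownian motion with generator $\tfrac{h^2}{2}\Delta$, conditioned to go from $x$ to $y$ in time $1$, is exactly the path $\xi_u(x,y)$ of \eqref{eq:bridgepath}. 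Its transition density gives the Gaussian prefactor in \eqref{eq:bridgekernel}.
\item The Feynman--Kac--It\^o formula with matrix potential gives the kernel as the expectation of the scalar weight times the time-ordered exponential \eqref{eq:pathordering}. The ordering convention matches the fact that the ordered exponential solves the left-multiplicative ODE along the path.
\item To justify this for the unbounded potentials, I would first truncate the coefficients to compactly supported ones, where the formula is classical (Trotter/Duhamel). I would then pass to the limit using the domination bound below and the trace-class/convergence statements of Proposition~\ref{prop:familyheat}.
\item In each fixed degree of $\Omega^\bullet(B)$, only finitely many terms of the nilpotent part survive, so the series in that part is finite.
\end{itemize}

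The key domination estimate comes from homogeneity and nondegeneracy, $|\partial F_b(w)|\ge c_K|w|^{n-1}$, together with $\|L_{F_b}(w)\|\le C(1+|w|^{n-2})$. The Gr\"onwall-type bound gives $\|\mathcal P\exp(\cdots)\|\le \exp(C\int_0^1(1+|\xi_u|^{n-2})du)$, up to polynomial factors from the nilpotent part. Since $|w|^{n-2}\le \varepsilon|w|^{2n-2}+C_\varepsilon$, this is absorbed by half of the scalar weight, uniformly for $h\le1$ and $b\in K$. This yields the pathwise bound
\[
\Big\|\exp\!\Big(-\!\int_0^1|\partial F_b(\xi_u)|^2du\Big)\,\mathcal P\exp(\cdots)\Big\|\le C\,\exp\!\Big(-\tfrac{c}{2}\!\int_0^1|\xi_u|^{2n-2}du\Big).
\]
Now $\xi_u$ depends smoothly (affinely) on $h$, $x$ and $y$. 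Each derivative $\partial_h$, $\partial_b^\beta$ or $\nabla_1$ therefore produces only polynomial factors in $\xi_u$, $\mathcal B_u$ and $z$, inside the same bound. All Brownian-bridge moments are finite, so differentiation under $\mathbb E$ is legitimate and gives smoothness in $h\in[0,1]$.

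For (ii) and (iii), I would write Taylor's formula in $h$ with integral remainder, $R_{L,b}(h)=\frac{h^{L+1}}{L!}\int_0^1(1-t)^L\partial_h^{L+1}\mathcal K_{th}\,dt$. Then both \eqref{eq:coefficientintegrability} and \eqref{eq:familyremainder} reduce to one statement. It is that, for every polynomial $p$, the quantity $\sup_{h\le1}\mathbb E\big[|p(z,\mathcal B)|\exp(-\tfrac c2\int_0^1|z+h\mathcal B_u|^{2n-2}du)\big]$ is integrable in $z$, even after multiplying by $\langle z\rangle$. The bound after applying $h\partial_h$ follows the same way, since $h\partial_h$ just adds one more polynomial factor $h\partial_h\xi_u=h\mathcal B_u$.

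To prove this statement, I split the expectation at the event $\sup_u|h\mathcal B_u|\le|z|/2$.
\begin{itemize}
\item On this event, $|\xi_u|\ge|z|/2$, so the integrand is at most $p\cdot e^{-c'|z|^{2n-2}}$.
\item On the complement, Cauchy--Schwarz together with the Gaussian tail $\mathbb P(\sup|\mathcal B|>|z|/(2h))\le Ce^{-|z|^2/(8h^2)}\le Ce^{-|z|^2/8}$ gives decay $e^{-c|z|^2}$ times finite moments.
\end{itemize}
Both pieces are integrable against any polynomial weight, uniformly in $b\in K$.

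I expect the main obstacle to be the rigorous justification of the Feynman--Kac representation with the unbounded, non-commuting matrix potential $hL_{F_b}$, and its identification with the operator kernel of $e^{-D_h}$ from the spectral theory of Proposition~\ref{prop:familyheat}. My approach there is cutoff approximation plus the domination estimate above, which gives dominated convergence of the kernels.
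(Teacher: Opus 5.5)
Your proposal is correct. For (i) it agrees with the paper. For (ii) and (iii) it takes a genuinely different route.

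\textbf{Part (i).} The paper also obtains the kernel from the matrix-valued Feynman--Kac formula. It cites \cite[Lemma~1.12 and Theorem~1.14]{GueneysuMatrix} after a $K$-uniform constant shift that makes $|\partial F_b|^2+hL_{F_b}$ bounded below. It then adds the positive-degree $\Omega^\bullet(B)$ terms by a finite Duhamel expansion. You instead re-derive the formula by truncation and dominated convergence. That works, but the operator-side limit needs a strong-resolvent-convergence argument for the truncated operators, for instance keeping the scalar potential and cutting off only the matrix part. The trace-class statements of Proposition~\ref{prop:familyheat} are not what is needed there.

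\textbf{Parts (ii) and (iii).} The paper invokes the heat-parametrix construction and remainder estimates of \cite[Theorem~4.1 and \S6]{DY}, after the rescaling of Proposition~\ref{prop:exactscale}. It only remarks that $b$-, $h$- and endpoint derivatives produce polynomial factors, which are absorbed by the confining bound $|\partial F_b|^2-h\|L_{F_b}\|\ge c_K|z|^{2n-2}-C_K$.

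You work entirely inside the Brownian-bridge representation from (i). Your steps are:
\begin{enumerate}
\item a pathwise Gr\"onwall bound on the ordered exponential, absorbed into half the scalar weight;
\item differentiation under $\mathbb E$;
\item Taylor's formula in $h$ with integral remainder;
\item a split on the event $\sup_u|h\mathcal B_u|\le|z|/2$, using the Gaussian tail of the bridge on the complement.
\end{enumerate}
This yields a pointwise bound of the form $\mathrm{poly}(z)\bigl(e^{-c|z|^{2n-2}}+e^{-c|z|^2}\bigr)$, uniform in $h\in[0,1]$ and $b\in K$. Both the coefficient integrability and the $h^{L+1}$ remainder estimate follow from it.

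One small imprecision: $h\partial_h$ does more than insert a factor $h\mathcal B_u$. The clean statement is that $h\partial_h R_{L,b}$ is the order-$L$ Taylor remainder of $h\partial_h\mathcal K_h$, and the same bounds then apply.

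\textbf{Comparison.} Your route is more elementary and self-contained once (i) is known, and it makes the uniformity in $h$ and the role of the confining potential fully explicit. The paper's route keeps the lemma in the same Dai--Yan heat-kernel framework already used for the expansions in Proposition~\ref{prop:familyheat}. The price is that the actual integrability estimate is left inside the citation.
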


\begin{proof}
Fix $K\Subset B$. Homogeneity and nondegeneracy give constants
$c_K,C_K>0$ such that
\[
 |\partial F_b(z)|^2-h\|L_{F_b}(z)\|
 \ge c_K|z|^{2n-2}-C_K,
 \qquad
 b\in K,\quad 0\le h\le1.
\]
All coefficients arising after finitely many derivatives in the
$B$-directions have uniform polynomial bounds in $z$ on $K$. The Brownian-bridge formula for
\[
h^2
\Delta_{\bar\partial}+|\partial F_b|^2+hL_{F_b}
\]
follows from \cite[Lemma~1.12 and Theorem~1.14]{GueneysuMatrix},
after a $K$-uniform constant shift of the potential. The terms of
positive degree in $\Omega^\bullet(B)$ are then included by their
finite Duhamel expansion, which proves \textup{(i)}.

For \textup{(ii)} and \textup{(iii)}, apply the parametrix construction
and remainder estimates of \cite[Theorem~4.1 and Section~6]{DY} after the conjugation in Proposition~\ref{prop:exactscale}. Differentiation in the $B$-directions, in $h$, and in the first spatial endpoint
introduces only polynomial factors. The preceding confining estimate absorbs these factors and gives the asserted estimate.
Taylor's formula in $h$ gives the stated remainder bound; taking one additional $h$-derivative gives the estimate after $h\partial_h$.
\end{proof}

\subsection{Small-time calculation}
\label{sec:smalltime}

 We suppress $b$ in $F$ and $L_F$ when no confusion can arise.
\begin{proposition}\label{prop:smalltime}
For the family in \S\ref{sec:setup}, locally uniformly together
with all derivatives in the $B$-directions, one has the following estimate:
\begin{equation}\label{eq:thetaexp}
 \Str(\ph e^{-D_h}) ^{(1,1)}
=
h^{-2}\mathcal A_{-1}(b)
+O_{C^\infty}(h^2).
\end{equation}
Consequently,
\[
\FP_{h \to0}
\Str(\ph e^{-D_h})^{(1,1)}=0.
\]
\end{proposition}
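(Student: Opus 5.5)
We would compute $\Str(\ph e^{-D_h})^{(1,1)}$ from the Brownian-bridge kernel of Lemma~\ref{lem:matrixremainder}, restricted to the diagonal:
\[
\Str(\ph e^{-D_h})^{(1,1)}=(2\pi h^2)^{-n}\int_{\C^n}\str\bigl(\ph\,\mathcal K_h(z,w)\bigr)^{(1,1)}\big|_{w=z}\,d^{2n}z ,
\]
with $\ph$ acting on the first variable. The Euler-type piece of $\ph$ is $\mp\frac1n(z\partial_z-\bar z\partial_{\bar z})$; it hits the Gaussian factor and the amplitude. On the diagonal the Gaussian derivative vanishes, so only $\nabla_1\mathcal K_h|_{w=z}$ with a linear weight $z$ survives. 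This is exactly why \eqref{eq:coefficientintegrability} and \eqref{eq:familyremainder} control $\langle z\rangle\|\nabla_1\mathcal K\|$. Inserting the Taylor expansion of Lemma~\ref{lem:matrixremainder}(ii) with $L=2n+2$ gives
\[
\Str(\ph e^{-D_h})^{(1,1)}=\sum_{j\le 2n+2}h^{j-2n}c_j(b)+O_{C^\infty}(h^{3}).
\]
The remainder bound is uniform in $b$ derivatives. So the whole task is to show that all $c_j$ vanish except the one giving the $h^{-2}$ term, and the one giving $h^0$ in particular.

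\textbf{Step 1: Clifford-type counting.} On the diagonal the path is $\xi_u=z+h\mathcal B_u$. The fermionic part of the amplitude is a path-ordered exponential in $hL_F$ and the degree-one pieces $db^i\,\partial\phi_i\wedge$ and $d\bar b^j(\partial\phi_j\wedge)^*$. Extracting the $(1,1)$ part forces exactly one of each $B$-insertion. Each $L_F$ insertion carries a factor $h$. The supertrace over $\Lambda^{*,*}\C^n$ of a product of operators vanishes unless all $2n$ fermionic directions are saturated. Here $L_F$ is of type $dz\,\iota_{\bar z}+d\bar z\,\iota_z$, while the two $B$-terms contribute one $dz$ and one $\iota_z$. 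Saturation therefore requires $L_F$ to appear at least $2n-2$ times, costing $h^{2n-2}$, with corrections from Taylor expanding in $h\mathcal B_u$.

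This shows the leading term is $h^{2n-2}\cdot h^{-2n}=h^{-2}$, which is $\mathcal A_{-1}$. The subleading corrections come from three sources: extra $L_F$ pairs, bridge moments, and the $\nabla_1$ term from $\ph$. Each correction brings either an even power of $h$ (bridge moments are Gaussian, so odd moments vanish), or a pair of extra $L_F$ insertions. The latter is impossible beyond saturation unless they cancel in a $dz\wedge\iota$ pairing, which again contributes $h^2$.

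\textbf{Step 2: no $h^{-1}$ or $h^{1}$ terms.} Parity of the Gaussian bridge kills odd powers.

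\textbf{Step 3: the $h^0$ coefficient.} We would show $c_{2n}=0$ by a scaling and homogeneity argument. The coefficient is an integral over $\C^n$ of a density built from $F$ and its derivatives. The integrand is quasi-homogeneous under $z\mapsto tz$, using that $F_b$ and $\phi_i$ have degree $n$. Under the change of variables $z\mapsto tz$, the $h^0$ coefficient acquires a factor $t^{\kappa}$ with $\kappa\neq0$. One can see this directly because the rescaling $U_s$ intertwines the $h$-dependence with the dilation. Together with the exponential decay of $e^{-\int|\partial F|^2}$, this forces the integral to equal its own nontrivial rescaling, hence zero.

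Alternatively, the $h^0$ term can be written as an integral of a total divergence $\sum\partial_{z_a}(z_a\,\cdot)$ coming from $\ph$'s Euler part. Such an integral vanishes by decay, and the $N^{1,0}$ and $\mathrm{mod}\ n$ parts are handled by the saturation count.

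\textbf{Main obstacle.} The expected main obstacle is Step 3: organizing the combinatorics of the path-ordered exponential with bridge corrections so that the constant term is visibly a divergence, or scales with a nonzero weight. We would first try the scaling argument, since it avoids explicit Wick computations.
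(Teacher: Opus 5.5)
\textbf{There is a genuine gap in Step~3.} That step is the whole content of the proposition, and the scaling argument you give for it does not work.

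For fixed $b$, the $h^0$ coefficient is a single number $\int_{\C^n}\rho\,d^{2n}z$ with $\rho=e^{-|\partial F|^2}\times(\text{polynomial})$. Because of the factor $e^{-|\partial F|^2}$ the density is not homogeneous, so the substitution $z\mapsto tz$ produces no identity of the form $I=t^\kappa I$.

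The scaling that does exist is dilation traded against heat time, i.e.\ the one already used in $U_s$. It assigns the $(k,\ell)$-term the weight $h^{k+\ell-2n}=s^{n(k+\ell-2n)/(2(n-1))}$. The constant term is exactly the weight-zero term $k+\ell=2n$, so homogeneity cannot detect it.

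The principle ``nonzero weight forces vanishing'' is also false. Applied to the $h^{-2}$ coefficient $\frac1{n(2\pi)^n}\int e^{-|\partial F|^2}Q_{i\bar j}$, which has weight $-n/(n-1)$, it would force that coefficient to vanish. But $(-1)^nQ_{i\bar i}\ge0$, since $\operatorname{adj}(\partial_z^2F\,\overline{\partial_z^2F}^{T})$ is positive semidefinite, and $Q_{i\bar i}$ is not identically zero when $\phi_i\not\equiv0$. So $\mathcal A_{-1}\neq0$, and the vanishing of the constant term is a genuine cancellation that has to be computed.

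\textbf{What the missing computation is.} In the paper this is done for the number-operator part $\frac{n-1}{n}N^{1,0}+\frac1nN^{0,1}$ of $\ph$, where saturation leaves exactly $(k,\ell)=(2n,0)$ and $(2n-2,2)$ at order $h^0$.
\begin{itemize}
\item The $(2n,0)$ term comes from the $h^2\lambda^2/3$ correction in $\str(e^{-hL_\lambda})=2-2\cosh(2h\lambda)$. It gives $\frac13e^{-|\partial F|^2}Q_{i\bar j}\operatorname{tr}(\partial_z^2F\,\overline{\partial_z^2F}^{T})$.
\item The $(2n-2,2)$ term is a five-term Wick computation with bridge covariances $\int_0^1u(1-u)\,du=\frac16$ and $\int_0^1\!\!\int_0^1(\min(u,v)-uv)\,du\,dv=\frac1{12}$.
\item Two facts make these combine. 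Every factor in the monomial expansion of $Q_{i\bar j}$ is holomorphic or antiholomorphic, and $-\Delta_{\bar\partial}|\partial F|^2=4\operatorname{tr}(\partial_z^2F\,\overline{\partial_z^2F}^{T})$. With these, the sum is exactly $-\frac1{12}\Delta_{\bar\partial}(e^{-|\partial F|^2}Q_{i\bar j})$, which integrates to zero by Stokes.
\end{itemize}

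\textbf{Your fallback has the roles reversed.} Saturation does not dispose of the $N^{1,0},N^{0,1}$ part; both surviving order-$h^0$ terms live there. The Euler part is the harmless one. Without a fermionic insertion it needs $k\ge2n$. At $(2n,0)$ it reduces to $\frac12\sum_a(z_a\partial_{z_a}-\bar z_a\partial_{\bar z_a})$ applied to a density whose $\eta\bar\eta$-coefficient vanishes pointwise. A divergence argument would also work for this piece, since that vector field is divergence-free.

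\textbf{A further omission.} $\Str$ is taken on $\Z_n$-invariant forms, so you must also treat $\frac1n\sum_{g\neq1}\Str(g^*\ph e^{-D_h})$. These kernels are evaluated at $(z,gz)$, not on the diagonal, so your diagonal expansion does not cover them. Here homogeneity genuinely helps: rescaling $z=\sqrt2\,hZ$ about the fixed point shows they are $O(h^{2n-2})$.
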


\begin{proof}[Outline of the proof]
Since $D_h$ and $\ph$ commute with the $\mathbb Z_n$-action,
\begin{equation}\label{eq:invariant-trace-decomposition}
\bigl[\Str_{\cE}(\ph e^{-D_h})\bigr]^{(1,1)}
=
\frac{1}{n}\sum_{g\in\mathbb Z_n}
\bigl[
\Str_{L^2\Omega^{*,*}(\mathbb C^n)}
(g^*\ph e^{-D_h})
\bigr]^{(1,1)}.
\end{equation}
We first consider the identity sector $g=1$. Decompose
\[
\ph=\ph_{\mathrm{num}}+\ph_{\mathrm{dif}},
\]
where
\[
\ph_{\mathrm{num}}
=
\frac{n-1}{n}N^{1,0}+\frac{1}{n}N^{0,1},
\qquad
\ph_{\mathrm{dif}}
=
-\frac{1}{n}\sum_{a=1}^n
\left(
z_a\partial_{z_a}
-\bar z_a\partial_{\bar z_a}
\right).
\]

For the $(i,\bar j)$-coefficient of the
$\ph_{\mathrm{num}}$-term, a nonzero pointwise supertrace
containing $k$ factors of $L_F$ requires
\[
k\geq 2n-2,
\qquad
k\equiv 0\pmod 2.
\]
If $\ell$ denotes the total Brownian Taylor order, the
corresponding term has order
\[
h^{k+\ell-2n}.
\]
The symmetry $\mathcal B\mapsto-\mathcal B$ eliminates odd
$\ell$. Hence the only contributions through order $h^0$ are
\[
(k,\ell)
=(2n-2,0),\qquad
(2n-2,2),\qquad
(2n,0).
\]

Step 3 evaluates the first case $(k,\ell)=(2n-2,0)$. After including the factor $1/n$ in \eqref{eq:invariant-trace-decomposition}, it gives the coefficient of $h^{-2}$:
\[
(\mathcal A_{-1})_{i\bar j}
=
\frac{1}{n(2\pi)^n}
\int_{\mathbb C^n}
e^{-|\partial F|^2}
Q_{i\bar j}\,d^{2n}z,
\]
where
\begin{equation}\label{eq:Q}
Q_{i\bar j}
=\frac{(-1)^n2^{2n-1}}{n}
\overline{\partial_z\phi_j}^T
\operatorname{adj}((\partial_z^2F)(\overline{\p_z^2F})^T)
\partial_z\phi_i.
\end{equation}
Step 3 also together with Step 4-5 shows that the sum of the two constant-order contributions,
\[
(k,\ell)=(2n-2,2)
\qquad\text{and}\qquad
(k,\ell)=(2n,0),
\]
has pointwise density
\[
-\frac{1}{12}
\Delta_{\bar\partial}
\left(
e^{-|\partial F|^2}Q_{i\bar j}
\right).
\]
Step~6 shows that $\ph_{\mathrm{dif}}$ contributes neither an
$h^{-2}$-term nor a constant term. Therefore the constant
coefficient of the identity sector is
\[
(\mathcal A_0)_{i\bar j}
=
-\frac{1}{12n(2\pi)^n}
\int_{\mathbb C^n}
\Delta_{\bar\partial}
\left(
e^{-|\partial F|^2}Q_{i\bar j}
\right)d^{2n}z.
\]
Since $Q_{i\bar j}$ has polynomial growth and
\[
|\partial F(z)|^2\geq c|z|^{2n-2},
\]
the integrand is rapidly decreasing. Hence Stokes' theorem gives $\mathcal A_0=0$.

\vskip 0.1cm
For the nonidentity sector $g\neq 1$, Step~7 gives
\[
\bigl[
\Str_{L^2\Omega^{*,*}(\mathbb C^n)}
(g^*\ph e^{-D_h})
\bigr]^{(1,1)}
=
O_{C^\infty}(h^{2n-2}),
\]
so the nonidentity sectors have no constant term.

Finally, apply Lemma~\ref{lem:matrixremainder}\textup{(iii)} with
$L=2n+1$. After multiplication by the heat-kernel prefactor
$(2\pi h^2)^{-n}$, the integrated remainder is
$O_{C^\infty}(h^2)$; its weighted endpoint estimate controls $\ph_{\mathrm{dif}}$. 
Then
\[
\bigl[\Str_{\cE}(\ph e^{-D_h})\bigr]^{(1,1)}
=h^{-2}\mathcal A_{-1}
+O_{C^\infty}(h^2),
\]
and hence
\[
\FP_{h\to0}
\bigl[\Str_{\cE}(\ph e^{-D_h})\bigr]^{(1,1)}=0.\qedhere
\]
\end{proof}

\vskip 0.1cm
\begin{proof}
Recall that
\[
\ph
=
\frac{n-1}{n}N^{1,0}
+\frac1nN^{0,1}
-\frac1n
\sum_{a=1}^n
\left(
z_a\partial_{z_a}
-\bar z_a\partial_{\bar z_a}
\right).
\]

\subsubsection*{Step 1. Extraction of the $db^i\wedge d\bar b^j$ coefficient}
Fix $i,j$.
To extract the $db^i\wedge d\bar b^j$ coefficient, take odd formal
variables $\eta,\bar\eta$  and set
\begin{equation}\label{eq:formalD}
\mathscr D_h
=
|\partial F|^2
+h^2\Delta_{\bar\partial}
+hL_F
+\eta\,\partial\phi_i\wedge
+\bar\eta\,(\partial\phi_j\wedge)^*.
\end{equation}
The formal variables satisfy $\eta^2=\bar\eta^2=0$ and
$\eta\bar\eta=-\bar\eta\eta$. They anticommute with operators of odd
total fiber degree and commute with operators of even total fiber
degree. The symbol $\bar\eta$ is an independent formal variable,
not complex conjugation.
If
$
K_{\mathscr D_h}(1;x,y;\eta,\bar\eta)
$
denotes the integral kernel of $e^{-\mathscr D_h}$, then
\[
[\eta\bar\eta]
K_{\mathscr D_h}(1;x,y;\eta,\bar\eta)
\]
denotes the coefficient of $\eta\bar\eta$ in its expansion in
$\eta,\bar\eta$. This notation is also used after taking the pointwise
or operator supertrace and for the matrix exponentials below.

By Lemma~\ref{lem:matrixremainder},
\begin{equation}\label{eq:sourcebridge}
\begin{aligned}
K_{\mathscr D_h}(1;z,z;\eta,\bar\eta)
=
{ (2\pi h^2)^{-n}}&\mathbb E\biggl[
e^{-\int_0^1
|\partial F(z+h\mathcal B_u)|^2\,du}
\\
&\quad\cdot
\mathcal P\exp\left(
-\int_0^1
\Bigl(
hL_F
+\eta\,\partial\phi_i\wedge
+\bar\eta\,(\partial\phi_j\wedge)^*
\Bigr)
(z+h\mathcal B_u)\,du
\right)
\biggr].
\end{aligned}
\end{equation}

\subsubsection*{Step 2. Which powers of $h$ can contribute}

Recall that
\[
N^{1,0}
=
\sum_{a=1}^n
dz_a\wedge\iota_{\partial/\partial z_a},
\qquad
N^{0,1}
=
\sum_{a=1}^n
d\bar z_a\wedge\iota_{\partial/\partial\bar z_a}.
\]
For exponents $\varepsilon_a,\delta_a,\bar\varepsilon_a,
\bar\delta_a\in\{0,1\}$, with factors ordered as displayed,
\[
\operatorname{str}\!\left(
\prod_{a=1}^n
(dz_a\wedge)^{\varepsilon_a}
\iota_{\partial/\partial z_a}^{\,\delta_a}
(d\bar z_a\wedge)^{\bar\varepsilon_a}
\iota_{\partial/\partial\bar z_a}^{\,\bar\delta_a}
\right)
=0
\]
unless
\[
\varepsilon_a=\delta_a=\bar\varepsilon_a=\bar\delta_a=1
\qquad
\text{for every }a=1,\ldots,n.
\]
In this case, the supertrace is $1$.

Consequently, in the coefficient of $\eta\bar\eta$, the insertion of
$N^{1,0}$ or $N^{0,1}$ requires at least $2n-2$ factors of $L_F$,
whereas without this insertion at least $2n$ factors are required.
The holomorphic degree also implies that the number $k$ of factors of
$L_F$ is even.

If the Taylor expansion in $z+h\mathcal B_u$ has total degree
$\ell$, the corresponding power of $h$ is
\[
h^{k+\ell-2n}.
\]
Since terms with odd $\ell$ have zero expectation under
$\mathcal B\mapsto-\mathcal B$, the only terms through order $h^0$ are
\[
(k,\ell)
=
(2n-2,0),\qquad
(2n-2,2),\qquad
(2n,0).
\]

At $k=2n-2$, interchanging adjacent factors among $hL_F$,
$\eta\,{ \partial\phi_i\wedge}$ and
$\bar\eta\,({ \partial\phi_j\wedge})^*$ produces commutator terms
with fewer than $4n$ exterior multiplications and contractions.
Their pointwise supertraces vanish. The ordered integrals may therefore
be summed over permutations and integrated over $[0,1]^{2n}$.

\subsubsection*{Step 3. The terms with $\ell=0$}

By Step~2, the possible terms with $\ell=0$ that contribute through order $h^0$
are
\[
(k,\ell)=(2n-2,0)
\qquad\text{and}\qquad
(k,\ell)=(2n,0).
\]
We compute both at a fixed point $z\in\C^n$.

Since
\[
\partial_z^2F(z)
=
\left(
\frac{\partial^2F}{\partial z_a\partial z_c}(z)
\right)_{a,c=1}^n
\]
is complex symmetric, choose a Takagi factorization
\[
\partial_z^2F(z)
=
U\operatorname{diag}(\lambda_1,\ldots,\lambda_n)U^T,
\qquad
\lambda_a\ge0,
\]
where $U$ is unitary and $\lambda_1,\ldots,\lambda_n$ are the singular values of
$\partial_z^2F(z)$. Thus
\[
(\partial_z^2F){ \overline{\partial_z^2F}^{\,T}}
=
U\operatorname{diag}(\lambda_1^2,\ldots,\lambda_n^2){ \bar U^{\,T}}.
\]

After the corresponding unitary change of the holomorphic and
antiholomorphic one-form bases, the pointwise calculation factors into
$n$ four-dimensional calculations. In one factor, let $e$ and $\bar e$
denote the corresponding holomorphic and antiholomorphic one-forms.
In the $a$-th factor, set $\lambda=\lambda_a$. With respect to the
ordered basis $(1,e,\bar e,e\wedge\bar e)$, the summand of $L_F$
acting on this factor is represented by
\[
L_\lambda
=
\begin{pmatrix}
0&0&0&0\\
0&0&{ 2\lambda}&0\\
0&{ 2\lambda}&0&0\\
0&0&0&0
\end{pmatrix},
\]
and hence
\begin{equation}\label{eq:ordinaryblock}
\operatorname{str}(e^{-hL_\lambda})
=
2-2\cosh({ 2h\lambda})
=
-{ 4}h^2\lambda^2
\left(
1+{ \frac{h^2\lambda^2}{3}}+O(h^4)
\right).
\end{equation}

On the same space,
\[
\frac{n-1}{n}N^{1,0}+\frac1nN^{0,1}
=
\operatorname{diag}
\left(
0,\frac{n-1}{n},\frac1n,1
\right),
\]
while exterior multiplication by $e$ and its adjoint are
\[
e\wedge
=
\begin{pmatrix}
0&0&0&0\\
1&0&0&0\\
0&0&0&0\\
0&0&1&0
\end{pmatrix},
\qquad
(e\wedge)^*
=
\begin{pmatrix}
0&{ 2}&0&0\\
0&0&0&0\\
0&0&0&{ 2}\\
0&0&0&0
\end{pmatrix}.
\]
Duhamel's formula therefore gives
\[
\begin{aligned}
&[\eta\bar\eta]\,
\operatorname{str}\Bigg[
\left(
\frac{n-1}{n}N^{1,0}
+\frac1nN^{0,1}
\right)
e^{
-hL_\lambda
-\eta ({ \bar U^{\,T}}\partial_z\phi_i)_a\,e\wedge
-\bar\eta\,
\overline{({ \bar U^{\,T}}\partial_z\phi_j)_a}\,
(e\wedge)^*
}
\Bigg]
\\
&\qquad
=
-{ \frac2n}
({ \bar U^{\,T}\partial_z\phi_i})_a
\overline{({ \bar U^{\,T}}\partial_z\phi_j)_a}
\left(
1+{ \frac{h^2\lambda^2}{3}}+O(h^4)
\right).
\end{aligned}
\]
Without the insertion
$\frac{n-1}{n}N^{1,0}+\frac1nN^{0,1}$, the same coefficient vanishes:
\begin{equation}\label{eq:unweightedzero-one}
[\eta\bar\eta]\,
\operatorname{str}
\left(
e^{
-hL_\lambda
-\eta ({ \bar U^{\,T}}\partial_z\phi_i)_a\,e\wedge
-\bar\eta\,
\overline{({ \bar U^{\,T}}\partial_z\phi_j)_a}\,
(e\wedge)^*
}
\right)
=0.
\end{equation}
Recall \eqref{eq:Q}, we have
\begin{align}
&[\eta\bar\eta]\,
\operatorname{str}\Bigg[
\left(
\frac{n-1}{n}N^{1,0}
+\frac1nN^{0,1}
\right)
e^{
-hL_F
-\eta\,\partial\phi_i\wedge
-\bar\eta\,(\partial\phi_j\wedge)^*
}
\Bigg]
\notag\\
&\qquad
=
h^{2n-2}Q_{i\bar j}
+
{ \frac{h^{2n}}{3}}
\operatorname{tr}
\left(
(\partial_z^2F){ \overline{\partial_z^2F}^{\,T}}
\right)
Q_{i\bar j}
+
O(h^{2n+2}).
\label{eq:frozenclifford}
\end{align}
Thus the term $(k,\ell)=(2n-2,0)$ gives the 
$Q_{i\bar j}$ of $h^{2n-2}$, while the term $(k,\ell)=(2n,0)$ gives
\begin{equation}\label{eq:ellzero-constant}
{ \frac1{3}}
e^{-|\partial F|^2}
Q_{i\bar j}
\operatorname{tr}
\left(
(\partial_z^2F){ \overline{\partial_z^2F}^{\,T}}
\right).
\end{equation}

\subsubsection*{Step 4. The term with $(k,\ell)=(2n-2,2)$}

Expanding the adjugate in \eqref{eq:Q}, write
\begin{equation}\label{eq:Qmonomials}
Q_{i\bar j}(z)
=
\sum_\rho c_\rho
\prod_{\nu=1}^{2n}f_{\rho,\nu}(z),
\end{equation}
where $\rho$ indexes the monomials in the determinant expansion of
\[
{ \overline{\partial_z\phi_j}^{\,T}}
\operatorname{adj}\!\left(
(\partial_z^2F){ \overline{\partial_z^2F}^{\,T}}
\right)
\partial_z\phi_i.
\]
Here $c_\rho$ is the corresponding
numerical coefficient and each
$f_{\rho,\nu}$ is one of the indicated matrix or vector entries.
Each product contains $n-1$ entries of
$\partial_z^2F$, $n-1$ entries of $\overline{\partial_z^2F}$,
one entry of $\partial_z\phi_i$, and one entry of
$\overline{\partial_z\phi_j}$. Hence every $f_{\rho,\nu}$ is either
holomorphic or antiholomorphic, and
\begin{equation}\label{eq:harmonicfactor}
 \Delta_{\bar\partial}f_{\rho,\nu}=0.
\end{equation}

After summing the ordered integrals as in Step~2, the term associated
with $c_\rho\prod_{\nu=1}^{2n}f_{\rho,\nu}$, divided by the common
factor ${ (2\pi)^{-n}}h^{-2}$, is
\begin{equation}\label{eq:monomialbridge}
c_\rho
\int_{[0,1]^{2n}}
\mathbb E\left[
e^{-\int_0^1
|\partial F(z+h\mathcal B_u)|^2\,du}
\prod_{\nu=1}^{2n}
f_{\rho,\nu}
\left(
z+ h\mathcal B_{u_\nu}
\right)
\right]
du_1\cdots du_{2n}.
\end{equation}

Let $\partial_\alpha$, $\alpha=1,\ldots,2n$, denote differentiation with respect to real coordinates on $\C^n\simeq\R^{2n}$, and sum over repeated real indices. The following Taylor expansions are taken for fixed $z$ and bridge path. Up to second order in $h$,
\begin{equation}\label{eq:factorTaylor}
\begin{aligned}
f_{\rho,\nu}
\left(z+ h\mathcal B_{u_\nu}
\right)={}& f_{\rho,\nu}+ h
\partial_\alpha f_{\rho,\nu}\,
\mathcal B_{u_\nu}^\alpha
+\frac{h^2}{2}
\partial_\alpha\partial_\beta f_{\rho,\nu}\,
\mathcal B_{u_\nu}^\alpha
\mathcal B_{u_\nu}^\beta
+O(h^3).
\end{aligned}
\end{equation}
 Also,
\begin{equation}\label{eq:potentialTaylor}
\begin{aligned}
&\quad e^{-\int_0^1
|\partial F(z+h\mathcal B_u)|^2\,du}
=
e^{-|\partial F|^2}
\Big[
1
{ -h}
\int_0^1
\partial_\alpha|\partial F|^2\,
\mathcal B_u^\alpha\,du
\\
&
{ -\frac{h^2}{2}}
\int_0^1
\partial_\alpha\partial_\beta|\partial F|^2\,
\mathcal B_u^\alpha\mathcal B_u^\beta\,du
{ +\frac{h^2}{2}}
\int_0^1\!\!\int_0^1
\partial_\alpha|\partial F|^2
\partial_\beta|\partial F|^2\,
\mathcal B_u^\alpha\mathcal B_v^\beta\,du\,dv
\Big]
+O(h^3).
\end{aligned}
\end{equation}

The covariance formula
\begin{equation}\label{eq:bridgecovariance-secondorder}
\mathbb E
\left(
\mathcal B_u^\alpha\mathcal B_v^\beta
\right)
=
\bigl(\min(u,v)-uv\bigr)\delta_{\alpha\beta}
\end{equation}
implies
\begin{equation}\label{eq:bridgeintegrals-secondorder}
\int_0^1u(1-u)\,du=\frac16,
\qquad
\int_0^1\!\!\int_0^1
\bigl(\min(u,v)-uv\bigr)\,du\,dv
=
\frac1{12}.
\end{equation}

There are five contributions to the coefficient of $h^2$ in
\eqref{eq:monomialbridge}.

\paragraph{(a)}
The $h^2$ term of one factor in \eqref{eq:factorTaylor}, with all other
terms taken at order zero, gives
\[
{ -\frac1{6}}
c_\rho
\sum_{\nu=1}^{2n}
 \Delta_{\bar\partial}f_{\rho,\nu}
\prod_{\substack{\mu=1\\ \mu\ne\nu}}^{2n}
f_{\rho,\mu}
=0
\]
by \eqref{eq:harmonicfactor}.

\paragraph{(b)}
The product of the $h$ terms of two distinct factors in
\eqref{eq:factorTaylor} gives
\[
{ \frac1{12}}
c_\rho
\sum_{\nu<\mu}
\nabla f_{\rho,\nu}\cdot\nabla f_{\rho,\mu}
\prod_{\substack{\sigma=1\\
\sigma\ne\nu,\mu}}^{2n}
f_{\rho,\sigma}.
\]
By \eqref{eq:harmonicfactor},
\[
- \Delta_{\bar\partial}
\left(
c_\rho\prod_{\nu=1}^{2n}f_{\rho,\nu}
\right)
=
{ c_\rho}
\sum_{\nu<\mu}
\nabla f_{\rho,\nu}\cdot\nabla f_{\rho,\mu}
\prod_{\substack{\sigma=1\\
\sigma\ne\nu,\mu}}^{2n}
f_{\rho,\sigma}.
\]
Summing over $\rho$ therefore gives
\begin{equation}\label{eq:secondorder-b}
-\frac1{12} \Delta_{\bar\partial}Q_{i\bar j}.
\end{equation}

\paragraph{(c)}
The product of the $h$ term in \eqref{eq:factorTaylor} with the
$h$ term in \eqref{eq:potentialTaylor} gives
\begin{equation}\label{eq:secondorder-c}
-{ \frac1{12}}
\nabla|\partial F|^2\cdot\nabla Q_{i\bar j}.
\end{equation}

\paragraph{(d)}
The quadratic term of the scalar exponential in
\eqref{eq:potentialTaylor} gives
\begin{equation}\label{eq:secondorder-d}
\frac1{6}
Q_{i\bar j} \Delta_{\bar\partial}|\partial F|^2.
\end{equation}

\paragraph{(e)}
The last term in \eqref{eq:potentialTaylor}, which is the square of
the linear variation of $|\partial F|^2$, gives
\begin{equation}\label{eq:secondorder-e}
{ \frac1{24}}
Q_{i\bar j}
\left|
\nabla|\partial F|^2
\right|^2.
\end{equation}

Combining \textup{(a)}--\textup{(e)}, the contribution with
$(k,\ell)=(2n-2,2)$ is
\begin{equation}\label{eq:bridgeordertwo}
e^{-|\partial F|^2}
\left[
-\frac1{12} \Delta_{\bar\partial}Q_{i\bar j}
-{ \frac1{12}}
\nabla|\partial F|^2\cdot\nabla Q_{i\bar j}
+\frac1{6}
Q_{i\bar j} \Delta_{\bar\partial}|\partial F|^2
+{ \frac1{24}}
Q_{i\bar j}
\left|
\nabla|\partial F|^2
\right|^2
\right].
\end{equation}

\subsubsection*{Step 5. Combining the two constant-order contributions}

The two contributions to the constant term are
\eqref{eq:ellzero-constant}, obtained in Step~3, and
\eqref{eq:bridgeordertwo}, obtained in Step~4.

Since 
$
- \Delta_{\bar\partial}|\partial F|^2
=
{ 4}\operatorname{tr}
\left(
(\partial_z^2F){ \overline{\partial_z^2F}^{\,T}}
\right),$ adding \eqref{eq:ellzero-constant} to
\eqref{eq:bridgeordertwo} gives
\begin{align}
&e^{-|\partial F|^2}
\Big(
-\frac1{12} \Delta_{\bar\partial}Q_{i\bar j}
-{ \frac1{12}}
\nabla|\partial F|^2\cdot\nabla Q_{i\bar j}
+\frac1{12}
Q_{i\bar j} \Delta_{\bar\partial}|\partial F|^2
+{ \frac1{24}}
Q_{i\bar j}
\left|
\nabla|\partial F|^2
\right|^2
\Big).
\label{eq:constant-before-product-rule}
\end{align}
Finally,
\[
\begin{aligned}
 -\Delta_{\bar\partial}
\left(
e^{-|\partial F|^2}Q_{i\bar j}
\right)
=
e^{-|\partial F|^2}
\Big(
& -\Delta_{\bar\partial}Q_{i\bar j}
{ -}\nabla|\partial F|^2\cdot\nabla Q_{i\bar j}
+{ \frac{1}{2}}
Q_{i\bar j}
\left|\nabla|\partial F|^2\right|^2
+
Q_{i\bar j}
 \Delta_{\bar\partial}|\partial F|^2
\Big).
\end{aligned}
\]
Therefore the complete constant-order contribution of
$
\frac{n-1}{n}N^{1,0}+\frac1nN^{0,1}
$
is
\begin{equation}\label{eq:densitysum}
-\frac1{12}
 \Delta_{\bar\partial}
\left(
e^{-|\partial F|^2}Q_{i\bar j}
\right).
\end{equation}
\subsubsection*{Step 6. The differential part of $\ph$}

Consider
\[
-\frac1n
\sum_{a=1}^n
\left(
z_a\partial_{z_a}
-\bar z_a\partial_{\bar z_a}
\right).
\]
The derivatives act on the first variable of the integral kernel
before restriction to the diagonal. Apart from the factor $-1/n$,
the corresponding expression is
\begin{equation}\label{eq:Ekernel}
\begin{aligned}
&[\eta\bar\eta]
\int_{\C^n}
\operatorname{str}\Bigg[
\left(
\sum_{a=1}^n\left(z_a\partial_{z_a}-\bar z_a\partial_{\bar z_a}\right)
\right)
K_{\mathscr D_h}(1;z,w;\eta,\bar\eta)
\Bigg]_{w=z}
\,d^{2n}z.
\end{aligned}
\end{equation}
For the Gaussian factor,
\[
\left.
\partial_{x^\alpha}
e^{-|x-y|^2/2h^2}
\right|_{y=x}
=0.
\]
For every coefficient $f$ evaluated along the Brownian bridge,
\begin{equation}\label{eq:endpointbridge}
\partial_{x^\alpha}
f\left(
(1-u)x+uy+ h\mathcal B_u
\right)
=
(1-u)
(\partial_\alpha f)
\left(
(1-u)x+uy+ h\mathcal B_u
\right).
\end{equation}
Lemma~\ref{lem:matrixremainder} justifies taking this derivative before
restriction to the diagonal.

By Step~2, only $(k,\ell)=(2n,0)$ can contribute to the constant term.
Thus all coefficients may be evaluated at $z$ in this computation.
Applying Duhamel's formula to
\[
e^{
-hL_F
-\eta\,\partial \phi_i\wedge
-\bar\eta\,(\partial \phi_j\wedge)^*
}
\]
and using the factor $1-\tau$ in \eqref{eq:endpointbridge}, the
contribution of differentiation in the first variable is
\[
\begin{aligned}
-\int_0^1&(1-\tau)
\operatorname{str}\Big[
e^{-\tau(
hL_F+\eta\,\partial \phi_i\wedge
+\bar\eta\,(\partial \phi_j\wedge)^*)}
\\
&\cdot
\Big(
\sum_{a=1}^n\left(z_a\partial_{z_a}-\bar z_a\partial_{\bar z_a}\right)
\Big)
\left(
hL_F
+\eta\,\partial \phi_i\wedge
+\bar\eta\,(\partial \phi_j\wedge)^*
\right)
e^{-(1-\tau)(
hL_F+\eta\,\partial \phi_i\wedge
+\bar\eta\,(\partial\phi_j\wedge)^*)}
\Big]
\,d\tau.
\end{aligned}
\]
Each of
\[
hL_F,\qquad
\eta\,\partial \phi_i\wedge,\qquad
\bar\eta\,(\partial \phi_j\wedge)^*
\]
has even total parity in the graded tensor product. Cyclicity of the
pointwise supertrace therefore introduces no sign, and
$\int_0^1(1-\tau)\,d\tau=1/2$. Including differentiation of the scalar
factor gives
\[
\frac12
\Big(
\sum_{a=1}^n\left(z_a\partial_{z_a}-\bar z_a\partial_{\bar z_a}\right)
\Big)
\left[
e^{-|\partial F|^2}
\operatorname{str}
\left(
e^{
-hL_F
-\eta\,\partial \phi_i\wedge
-\bar\eta\,(\partial \phi_j\wedge)^*
}
\right)
\right].
\]
Its coefficient of $\eta\bar\eta$ is zero by
\eqref{eq:unweightedzero-one}. Hence the differential part of $\ph$
contributes no a constant term.

\subsubsection*{Step 7. The nonidentity elements of $\Z_n$}

For a trace-class operator $T$ commuting with the $\Z_n$-action,
\[
\Str_{\cE}(T|_{\cE})
=\frac1n\sum_{g\in\Z_n}
\Str_{L^2\Omega^{*,*}(\C^n)}(g^*T).
\]
Thus $g=1$ gives the contribution computed in Steps~3--6; it remains
to estimate $g\ne1$ on the full space.

After replacing $g$ by $g^{-1}$ in the group sum, the corresponding
kernel is evaluated at $(z,gz)$. Set $z=\sqrt{2}hZ$. Then
\[
e^{-|z-gz|^2/2h^2}
=
e^{-|1-g|^2|Z|^2},
\qquad
(2\pi h^2)^{-n}d^{2n}z
=
\pi^{-n}d^{2n}Z,
\]
and
\[
(1-u)z+ugz+ h\mathcal B_u
=
h\left(
{ \sqrt{2}}(1-u)Z+{ \sqrt{2}}ugZ+\mathcal B_u
\right).
\]
Since $F$ and $\phi_i$ are homogeneous of degree $n$, for $W\in\C^n$ we have
\[
\begin{aligned}
|\partial F(hW)|^2
&=
h^{2n-2}|\partial F(W)|^2,\\
hL_F(hW)
&=
h^{n-1}L_F(W),\\
(\partial \phi_i)(hW)\wedge
&=
h^{n-1}(\partial \phi_i)(W)\wedge,\\
((\partial \phi_j)(hW)\wedge)^*
&=
h^{n-1}
((\partial \phi_j)(W)\wedge)^*.
\end{aligned}
\]
The two odd operators in the coefficient of $\eta\bar\eta$ supply
$h^{2n-2}$. Homogeneity gives
$|\partial F|^2-h\|L_F\|\ge-C_K$ for $0<h\le1$, and every moment
of $\sup_{0\le u\le1}|\mathcal B_u|$ is finite. Thus, for some integer
$D\ge0$ depending only on the polynomial degrees and derivative orders,
\[
\left|
[\eta\bar\eta]\,
\Str_{L^2\Omega^{*,*}(\C^n)}\left(
g^*\ph e^{-\mathscr D_h}
\right)
\right|
\le
Ch^{2n-2}
\int_{\C^n}
e^{-|1-g|^2|Z|^2}(1+|Z|)^D\,d^{2n}Z
=
O(h^{2n-2}).
\]

For the differential part of $\ph$, write the kernel first with
independent variables $x=hX$ and $y=hY$ and then set $Y=gX$.
Differentiation of $e^{-|X-Y|^2/{ 2}}$ produces only a polynomial factor in
$X,Y$, and differentiation of the remaining coefficients does not
change the factor $h^{2n-2}$. Hence the same estimate holds for the
differential part. Therefore, for every $g\ne1$,
\[
[\eta\bar\eta]\,
\Str_{L^2\Omega^{*,*}(\C^n)}\left(
g^*\ph e^{-\mathscr D_h}
\right)
=
O_{C^\infty}(h^{2n-2}).
\]

\subsubsection*{Step 8. Integration and the remainder}

For $K\Subset B$, homogeneity and $\operatorname{Crit}(F_b)=\{0\}$ give
\[
|\partial F_b(z)|^2\ge c_K|z|^{2n-2}.
\]
Since $Q_{i\bar j}$ and its derivatives have polynomial growth,
$
e^{-|\partial F_b|^2}Q_{i\bar j}
$
and its  derivatives decay rapidly at infinity, uniformly for
$b\in K$. Hence Stokes'
theorem gives
\[
\int_{\C^n}
 \Delta_{\bar\partial}
\left(
e^{-|\partial F_b|^2}Q_{i\bar j}
\right)\,d^{2n}z
=0.
\]
Together with \eqref{eq:densitysum}, this proves
\[
(\mathcal A_0)_{i\bar j}=0.
\]
The $O_{C^\infty}(h^2)$ remainder follows directly from
Lemma~\ref{lem:matrixremainder}.

\end{proof}

\def\determinantline{1}
\if\determinantline0
\section{Determinant lines }
\label{sec:determinant}

\subsection{The raw LG determinant metric}

Decompose $\cH$ into the integer eigenspaces
$\mathcal H_a=\ker(\ph-a)$, $\cH=\bigoplus_a\mathcal H_a$, and let
$h_a=h_{F_b}|_{\mathcal H_a}$ be the restricted Hermitian metric.
Define the graded determinant line and its $L^2$ metric by
\begin{equation}\label{eq:LGdetline}
 \Lambda_{\LG}=\bigotimes_a(\det\mathcal H_a)^{(-1)^n a},
 \qquad
 h_{\det,\LG}=\bigotimes_a(\det h_a)^{(-1)^n a}.
\end{equation}
Negative tensor powers mean dual lines. Our sign makes
\begin{equation}\label{eq:detcurvature}
 R(h_{\det,\LG})=\Str_{\mathcal H}(\ph R^{\mathcal H}).
\end{equation}
For $n=5$ this is $-\omega_{\mathcal H}$, where
\begin{equation}\label{eq:omegaH}
 \omega_{\mathcal H}:=\Tr_{\mathcal H}(C\wedge C^*).
\end{equation}

\begin{proposition}[Flatness of the raw spectral metric]
\label{prop:rawflat}
For a quintic family the squared-norm metric
\begin{equation}\label{eq:rawQmetric}
 h_{Q,\LG}=e^{\cT^{\LG}}h_{\det,\LG}
\end{equation}
has zero Chern curvature.
\end{proposition}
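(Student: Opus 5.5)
We have to show that the Chern curvature $R(h_{Q,\LG})=-\ddbar\log h_{Q,\LG}$ of the squared-norm metric vanishes. Taking $\ddbar\log$ of \eqref{eq:rawQmetric} splits the curvature into a torsion part and a determinant part:
\[
R(h_{Q,\LG})=-\ddbar\cT^{\LG}+R(h_{\det,\LG}).
\]
We then compute each term from results already established and check that they cancel.

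\textbf{Torsion term.} We use the unmodified identity \eqref{eq:rawanomaly} from the proof of Theorem~\ref{thm:modifiedanomalyearly}. There, Proposition~\ref{prop:smalltime} kills the finite part $\FP_{s\to0}$ in \eqref{eq:boundaryformula}, which gives
\[
\ddbar\cT^{\LG}=\Str_{\cH}(\ph R^{\cH}).
\]
In components this reads $\partial_i\partial_{\bar j}\cT^\LG=\Str_\cH(\ph R^\cH_{i\bar j})$.

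\textbf{Determinant term.} By \eqref{eq:detcurvature}, the curvature of $h_{\det,\LG}$ is the same form $\Str_{\mathcal H}(\ph R^{\mathcal H})$. To justify \eqref{eq:detcurvature}, argue as follows.
\begin{itemize}
\item By \eqref{eq:pqDelta}, $\ph$ commutes with $P_b$, and by Proposition~\ref{prop:familyheat}(iii) its graded ranks are locally constant. Hence each eigenspace $\mathcal H_a$ is a smooth subbundle of $\cH$.
\item We need $\mathcal H_a$ to be holomorphic with $\nabla^\cH$ preserving it, so that the curvature of $(\det\mathcal H_a,\det h_a)$ is $\Tr_{\mathcal H_a}R^{\cH}$. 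The point is that $\ph$ is $\nabla^\cH$-parallel: $d_B$ commutes with $\ph$, because $\ph$ involves only fiber variables and the number operators. So $[\nabla^\cH,\ph]=P[d_B,\ph]P=0$.
\item With this, the determinant metric of $\mathcal H_a$ has curvature $\Tr_{\mathcal H_a}R^\cH$. Since $\cH^k=0$ for $k\ne n$, the supertrace is $(-1)^n$ times the trace. Summing with weights $(-1)^n a$ gives $\sum_a(-1)^n a\,\Tr_{\mathcal H_a}R^{\cH}=\Str_\cH(\ph R^\cH)$.
\end{itemize}
This sign convention is exactly why the exponents in \eqref{eq:LGdetline} are $(-1)^na$.

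\textbf{Cancellation and sign conventions.} The two terms cancel once the curvature convention is matched. We take $R(h)=-\ddbar\log h$ for the line metric and treat $e^{\cT^\LG}$ as a conformal factor. Then
\[
R(h_{Q,\LG})=-\ddbar\cT^\LG+R(h_{\det,\LG})=-\Str_\cH(\ph R^\cH)+\Str_\cH(\ph R^\cH)=0.
\]
The main care is in fixing these conventions consistently, namely that $\ddbar$ of $\log$ of the norm squared equals minus the curvature, and that $\cT^\LG$ is $-\zeta'(0)$, i.e.\ $\log$ of the regularized determinant with the paper's sign.

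\textbf{Main obstacle.} The substantive input is that the finite part vanishes (Proposition~\ref{prop:smalltime}) together with the justification of \eqref{eq:boundaryformula}. Both are already proved for general $n$, and the quintic is the case $n=5$. The only extra check here is that $\mathcal H_a$ is a $\nabla^\cH$-invariant holomorphic subbundle. If the parallelism argument for $\ph$ fails because of the $E$-term, I would fall back on the fact that $\ph$ preserves $\mathcal L_0$ and the $tt^*$ grading from \cite{TY} to get invariance.
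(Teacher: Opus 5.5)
Your proposal is correct and follows the paper's own argument. The paper likewise writes $R(e^{\cT^{\LG}}h_{\det,\LG})=R(h_{\det,\LG})-\ddbar\cT^{\LG}$ and cancels \eqref{eq:detcurvature} against \eqref{eq:rawanomaly}; for $n=5$ it writes both sides as $-\omega_{\mathcal H}$. Your check that $\nabla^{\cH}$ preserves each $\mathcal H_a$, using $[d_B,\ph]=0$ and $[\ph,P]=0$, justifies \eqref{eq:detcurvature}, which the paper only asserts.
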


\begin{proof}
If a squared norm $h$ is multiplied by $e^f$, then
$R(e^fh)=R(h)-\ddbar f$. Equations
\eqref{eq:detcurvature} and \eqref{eq:rawanomaly}, with $n=5$, give
\[
 R(h_{Q,\LG})=-\omega_{\mathcal H}
              -(-\omega_{\mathcal H})=0.
\]
\end{proof}

\subsection{The Calabi--Yau determinant formula}

Let $\pi:\mathcal X\to B$ be a smooth family of Calabi--Yau threefolds
and let
\[
\lambda_{\mathrm{BCOV}}
=\bigotimes_{p,q=0}^3
\left(\det R^q\pi_*\Omega^p_{\mathcal X/B}\right)^{(-1)^{p+q}p}
\]
be its BCOV determinant line, where $\Omega^p_{\mathcal X/B}$ denotes
relative holomorphic $p$-forms. Write
$h_{L^2,\mathrm{CY}}$ for its corrected $L^2$ metric and
$h_{Q,\mathrm{CY}}$ for its corrected Quillen metric. The FLY metric
ratio and curvature formulas read
\cite[Definition~4.13 and Theorem~4.9]{FLY}
\begin{equation}\label{eq:FLYdictionary}
 \tau_{\mathrm{BCOV}}
 =\frac{h_{Q,\mathrm{CY}}}{h_{L^2,\mathrm{CY}}},
 \qquad
 R(h_{Q,\mathrm{CY}})=\frac{\chi(X_F)}{12}\omega_{\mathrm{WP}},
 \qquad R(h)=-\ddbar\log h.
\end{equation}
Here $\omega_{\mathrm{WP}}$ is the CY Weil--Petersson form. The
corrections in the two metrics include the volume and integral-lattice
factors of \cite[Definition~4.13]{FLY}. On a fixed-polarization family,
$\omega_{\mathcal H}$ denotes the trace form of the Higgs field on
primitive middle cohomology; it agrees with \eqref{eq:omegaH} for the
associated LG family by \cite[Theorem~1.4]{TY}. The varying primitive
part satisfies
\begin{equation}\label{eq:L2CYcurv}
 R(h_{L^2,\mathrm{CY}})=-\omega_{\mathcal H}.
\end{equation}

\begin{proposition}[CY determinant anomaly]
\label{prop:CYdictionary}
Define $F^{\det}_{1,\mathrm{CY}}=-\frac12\log\tau_{\mathrm{BCOV}}$.
Then
\begin{equation}\label{eq:CYanomalyours}
 \ddbar F^{\det}_{1,\mathrm{CY}}
 =\frac12\omega_{\mathcal H}
  +\frac{\chi(X)}{24}\omega_{\mathrm{WP}}.
\end{equation}
\end{proposition}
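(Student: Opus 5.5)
We propose to prove Proposition~\ref{prop:CYdictionary} directly from the FLY dictionary \eqref{eq:FLYdictionary} together with the curvature identity \eqref{eq:L2CYcurv}. The only tool needed is how the curvature changes under multiplication of a metric by a positive function. With the convention $R(h)=-\ddbar\log h$, multiplying a squared norm by $e^{g}$ shifts the curvature by $-\ddbar g$. This is the same observation used in the proof of Proposition~\ref{prop:rawflat}.

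The first step is to take logarithms in $\tau_{\mathrm{BCOV}}=h_{Q,\mathrm{CY}}/h_{L^2,\mathrm{CY}}$. We regard both metrics as metrics on the same line $\lambda_{\mathrm{BCOV}}$, so their ratio is a well-defined positive function on $B$. This gives
\[
\log\tau_{\mathrm{BCOV}}=\log h_{Q,\mathrm{CY}}-\log h_{L^2,\mathrm{CY}} .
\]
Applying $-\ddbar$ to both sides yields
\[
-\ddbar\log\tau_{\mathrm{BCOV}}=R(h_{Q,\mathrm{CY}})-R(h_{L^2,\mathrm{CY}}).
\]
We then substitute $R(h_{Q,\mathrm{CY}})=\frac{\chi(X)}{12}\omega_{\mathrm{WP}}$ from \cite[Theorem~4.9]{FLY} and $R(h_{L^2,\mathrm{CY}})=-\omega_{\mathcal H}$ from \eqref{eq:L2CYcurv}. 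Multiplying the resulting identity by $\tfrac12$ gives exactly \eqref{eq:CYanomalyours}, since $F^{\det}_{1,\mathrm{CY}}=-\frac12\log\tau_{\mathrm{BCOV}}$.

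The substantive point, and the expected obstacle, is justifying \eqref{eq:L2CYcurv} with the correct sign and normalization. We plan to argue as follows.
\begin{enumerate}[(1)]
\item The corrected $L^2$ metric in \cite[Definition~4.13]{FLY} differs from the naive $L^2$ metric by volume and lattice factors. On a fixed-polarization family these factors are locally constant, or become constant after fixing the Kähler class, so they do not affect the curvature.
\item The factors $\det R^q\pi_*\Omega^p$ attached to the nonvarying cohomology, namely the powers of the polarization class, carry flat metrics.
\item By Griffiths transversality, the curvature of the Hodge bundle $\det H^{p,q}_{\mathrm{prim}}$ with its $L^2$ metric is a combination of $\Tr(\theta\wedge\theta^*)$-type terms, where $\theta$ is the Kodaira--Spencer Higgs field.
\item Summing these curvatures with the weights $(-1)^{p+q}p$, the terms telescope to $-\Tr(C\wedge C^*)$ restricted to primitive middle cohomology.
\item Under the identification of \cite[Theorem~1.4]{TY}, this trace equals $-\omega_{\mathcal H}$.
\end{enumerate}
The telescoping in step (4) mirrors the computation \eqref{eq:rawanomaly} on the LG side: the identity $[\ph,C_i]=-C_i$ turns the weighted supertrace of the curvature into a single trace of $C\bar C$. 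We will carry out the CY analogue of that computation, using that the Higgs field lowers $p$ by one.
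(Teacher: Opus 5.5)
Your argument is the paper's: take logarithms in $\tau_{\mathrm{BCOV}}=h_{Q,\mathrm{CY}}/h_{L^2,\mathrm{CY}}$, apply $\ddbar$ with $\ddbar\log h=-R(h)$, substitute $R(h_{Q,\mathrm{CY}})=\frac{\chi(X)}{12}\omega_{\mathrm{WP}}$ and $R(h_{L^2,\mathrm{CY}})=-\omega_{\mathcal H}$, and rescale by $\pm\frac12$. The paper treats the $L^2$ curvature identity as a stated input rather than deriving it, so your Griffiths-curvature and telescoping sketch in steps (1)--(5) goes beyond what the paper does. Step (5) is also unnecessary here, because on the CY side $\omega_{\mathcal H}$ is by definition the trace form of the Higgs field on primitive middle cohomology.
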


\begin{proof}
Taking $\ddbar$ in the logarithm of the ratio
\eqref{eq:FLYdictionary} and using $\ddbar\log h=-R(h)$ gives
\[
 \ddbar\log\tau_{\mathrm{BCOV}}
 =R(h_{L^2,\mathrm{CY}})-R(h_{Q,\mathrm{CY}})
 =-\omega_{\mathcal H}-\frac{\chi(X)}{12}\omega_{\mathrm{WP}}.
\]
Multiplication by $-1/2$ proves the result.
\end{proof}

\subsection{The coefficient from LG state-space data}

For one homogeneous isolated singularity, the Jacobi ring is
\[
 \operatorname{Jac}(F)
 =\C[z_1,\ldots,z_n]/(\partial_{z_1}F,\ldots,\partial_{z_n}F).
\]
The partial derivatives form a regular sequence of degree $n-1$.
Writing $\operatorname{Hilb}(q)$ for the graded dimension generating
series, with formal variable $q$, gives
\begin{equation}\label{eq:JacHilbert}
 \operatorname{Hilb}_{\operatorname{Jac}(F)}(q)
 =(1+q+\cdots+q^{n-2})^n.
\end{equation}

\begin{proposition}[LG computation of the Euler coefficient]
\label{prop:sectorindex}
Let $\mu_{\mathrm{inv}}=\dim\operatorname{Jac}(F)^{\mu_n}$. Then
\begin{align}
 \mu_{\mathrm{inv}}
 &=\frac{(n-1)^n+(n-1)(-1)^n}{n},\label{eq:muinv}\\
 I_{\mathrm{coh}}
 &:=(-1)^n\mu_{\mathrm{inv}}+(n-1)
   =\frac{(1-n)^n+n^2-1}{n}
   =\chi(X_F).\label{eq:cohindex}
\end{align}
Here the invariant Jacobi sector has the parity of primitive middle
cohomology, and the $n-1$ rank-one nonidentity sectors correspond to the
ambient even cohomology.
\end{proposition}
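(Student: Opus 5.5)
Write $\mathbb{Z}_n=\mu_n=\langle\zeta\rangle$, $\zeta=e^{2\pi i/n}$, acting diagonally on $\C^n$ by $z\mapsto\zeta z$. I will compute $\mu_{\mathrm{inv}}$ by averaging graded characters over the group. Then I will verify the closed form for $I_{\mathrm{coh}}$ and match it with the known Euler characteristic of a degree-$n$ hypersurface in $\mathbb P^{n-1}$.

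\textbf{Step 1: action on the Jacobi ring.} The group acts on $\operatorname{Jac}(F)$ through the grading. Since $F$ is homogeneous of degree $n$, the ideal $(\partial_{z_a}F)$ is homogeneous, so $\zeta$ acts on the degree-$d$ piece by $\zeta^d$. The quotient is taken with the natural action on polynomials. One could instead twist by the action on $dz_1\wedge\cdots\wedge dz_n$, which multiplies by $\zeta^n=1$; for $\mathbb{Z}_n$ this makes no difference, so either convention gives the same invariants. By the Hilbert series \eqref{eq:JacHilbert},
\[
\mu_{\mathrm{inv}}=\frac1n\sum_{k=0}^{n-1}\operatorname{Hilb}_{\operatorname{Jac}(F)}(\zeta^k)
=\frac1n\sum_{k=0}^{n-1}\bigl(1+\zeta^k+\cdots+\zeta^{k(n-2)}\bigr)^n .
\]
For $k=0$ the summand is $(n-1)^n$. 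For $k\neq0$, the sum of all $n$ powers of $\zeta^k$ vanishes, so $1+\zeta^k+\cdots+\zeta^{k(n-2)}=-\zeta^{k(n-1)}=-\zeta^{-k}$. Its $n$-th power is therefore $(-1)^n\zeta^{-kn}=(-1)^n$. Summing the $n-1$ nonidentity terms gives \eqref{eq:muinv}.

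\textbf{Step 2: the index identity.} Substituting \eqref{eq:muinv} into $I_{\mathrm{coh}}$ gives
\[
(-1)^n\mu_{\mathrm{inv}}+(n-1)=\frac{(1-n)^n+(n-1)+n(n-1)}{n}=\frac{(1-n)^n+n^2-1}{n}.
\]
Here I used $(-1)^n(n-1)^n=(1-n)^n$ and $(-1)^{2n}=1$.

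\textbf{Step 3: matching with $\chi(X_F)$.} For a smooth hypersurface $X$ of degree $d$ in $\mathbb P^{N}$, with $N=n-1$ and $d=n$, the tangent sequence gives
\[
c(TX)=\frac{(1+H)^{n}}{1+nH}.
\]
Hence
\[
\chi(X)=\int_X c_{n-2}=n\,[H^{n-2}]\frac{(1+H)^n}{1+nH},
\]
where the factor $n$ comes from $\deg X=n$. The coefficient is
\[
[H^{n-2}]\frac{(1+H)^n}{1+nH}=\sum_{j=0}^{n-2}\binom{n}{j}(-n)^{n-2-j}.
\]
To evaluate it, complete the binomial sum: $\sum_{j=0}^{n}\binom{n}{j}(-n)^{n-j}=(1-n)^n$. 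The two missing terms are $j=n-1$, contributing $-n^2$, and $j=n$, contributing $1$. Dividing by $(-n)^2=n^2$ gives
\[
\sum_{j=0}^{n-2}\binom{n}{j}(-n)^{n-2-j}=\frac{(1-n)^n+n^2-1}{n^2}.
\]
Multiplying by $n$ yields $\chi(X_F)=\frac{(1-n)^n+n^2-1}{n}$, which equals the expression in Step~2.

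\textbf{Main obstacle.} The computations themselves are routine. The substantive point is the interpretive sentence of the proposition: that the invariant sector carries the parity $(-1)^n$ of middle cohomology, and that the $n-1$ nonidentity sectors give the ambient classes $H^0,\dots,H^{n-2}$. I would justify this from the state-space isomorphism $\mathcal H_b\cong(\Omega^n_{\mathrm{poly}}/dF\wedge\Omega^{n-1}_{\mathrm{poly}})^{\mathbb Z_n}$ of \cite{TY}, which places everything in degree $n$. I would also use the Griffiths residue identification of this space with the primitive cohomology $H^{n-2}_{\mathrm{prim}}(X_F)$, in degree $n-2\equiv n \pmod 2$. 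The count of ambient classes is then $\dim\bigoplus_{i=0}^{n-2}\C H^i=n-1$.

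The cleanest route is Griffiths' theorem: it gives $\dim H^{n-2}_{\mathrm{prim}}=\mu_{\mathrm{inv}}$ directly. Combined with the Lefschetz hyperplane theorem, which gives the ambient part, this yields $\chi=(-1)^n\mu_{\mathrm{inv}}+(n-1)$ and serves as an independent check on Step~3.
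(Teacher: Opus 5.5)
Your proposal is correct and takes the same route as the paper: the roots-of-unity filter applied to $(1+q+\cdots+q^{n-2})^n$, the elementary simplification, and the adjunction formula $c(TX_F)=(1+H)^n/(1+nH)$ with $\int_{X_F}H^{n-2}=n$ to identify the result with $\chi(X_F)$. You go further than the paper in two places: you carry out the coefficient extraction explicitly by completing the binomial sum, and you add a Griffiths residue/Lefschetz cross-check for the interpretive remark, which the paper only asserts.
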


\begin{proof}
Apply the roots-of-unity filter to \eqref{eq:JacHilbert}. At $q=1$ the
value is $(n-1)^n$. If $q=\zeta^j\ne1$ and $\zeta^n=1$, then
\[
 1+q+\cdots+q^{n-2}=-q^{-1},
\]
whose $n$-th power is $(-1)^n$. Averaging over the $n$ roots proves
\eqref{eq:muinv}. Adding the $n-1$ even rank-one sectors gives the first
expression in \eqref{eq:cohindex}; elementary simplification gives the
second. Independently, with $H=c_1(\mathcal O_{X_F}(1))$ and $c(TX_F)$
the total Chern class, adjunction yields
\[
 c(TX_F)=\frac{(1+H)^n}{1+nH},\qquad
 \int_{X_F}H^{n-2}=n.
\]
Extracting the coefficient of $H^{n-2}$ gives the last equality in
\eqref{eq:cohindex}.
\end{proof}

This proposition uses only LG polynomial and sector data to recover the
integer multiplying the vacuum correction. It does not construct a
modular orbifold path integral.
\fi
\section{The modified LG torsion}
\label{sec:modified}

\subsection{The vacuum and its residue form}

The harmonic section $w_0(F_b)$ and its metric are defined in
\S\ref{sec:setup}. The corresponding holomorphic volume form on Calabi-Yau manifold $X_b\subset \C P^{n-1}$ is the Poincar\'e residue:
\begin{equation}\label{eq:specifiedresidue}
\Omega_{F_b}
=\operatorname{Res}_{X_b}\Big(
\frac{\sum_{a=1}^n(-1)^{a-1}z_a\,
dz_1\wedge\cdots\wedge\widehat{dz_a}\wedge\cdots\wedge dz_n}{F_b}
\Big).
\end{equation}
We normalize the Poincar\'e residue by
$\operatorname{Res}((dF_b/F_b)\wedge\alpha)=\alpha|_{X_b}$.
The  correspondence in
\cite[proof of Theorem~1.3, p.~29]{TY} gives
\begin{equation}\label{eq:LGresiduemetric}
h_{F_b}(w_0(F_b),w_0(F_b))
=c_n\|\Omega_{F_b}\|_{\mathrm{CY}}^2,\qquad c_n>0,
\end{equation}
where $c_n$ depends only on the fixed dimension and normalization and
\[
\|\Omega_{F_b}\|_{\mathrm{CY}}^2
:=\int_{X_b}i^{(n-2)^2}\Omega_{F_b}\wedge\overline{\Omega_{F_b}}.
\]

\def\valuezero{1}
\if\valuezero1
\subsection{The value at zero of the zeta function}
\label{sec:scalardescent}

\begin{lemma}\label{lem:N2reduction}
For every $s>0$,
\begin{equation}\label{eq:N2reduction}
\Str\!\left(\ph\qh(e^{-s\Delta_F}-P_F)\right)
=\frac12\Str\!\left(N^2(e^{-s\Delta_F}-P_F)\right).
\end{equation}
Here $N:=N^{0,1}+N^{1,0}.$ We also have
\begin{equation}\label{eq:centered}
\Str\left[\hat p\hat q(e^{-s\Delta_{F}}-P_{F})\right]
   =\frac12\Str\!\left[(N-n)^2e^{-s\Delta_{F}}\right].
\end{equation}
\end{lemma}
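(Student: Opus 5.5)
The plan is to prove both identities one eigenspace of $\Delta_F$ at a time, using both twisted differentials. The key structure is that $\bar\partial_F$ and $\partial_F$ act as two anticommuting differentials on each positive eigenspace of $\Delta_F$. By \eqref{eq:pqcommutators}, $\bar\partial_F$ commutes with $\ph$ and raises $\qh$ by one. By the footnote relations, $\partial_F$ commutes with $\qh$; the companion relation $[\ph,\partial_F]=\partial_F$ is proved by the same computation. Hence $\partial_F$ raises $\ph$ by one.

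First I will check the algebraic identities
\[
\bar\partial_F\partial_F+\partial_F\bar\partial_F=0,\qquad \partial_F^*\bar\partial_F+\bar\partial_F\partial_F^*=0 .
\]
The first holds because $\partial\bar\partial+\bar\partial\partial=0$, $\partial(\partial F)=0$, $\bar\partial(\overline{\partial F})=0$, and $dF\wedge d\bar F+d\bar F\wedge dF=0$. The second is the twisted Kähler-identity companion of the footnote to \eqref{eq:differential}. Both operators commute with $\Delta_F$.

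Fix an eigenvalue $\lambda>0$ and let $V_\lambda$ be its eigenspace in $\cE$. It is finite dimensional by the spectral growth estimate used in Proposition~\ref{prop:familyheat}, and the $\Z_n$-action commutes with everything. A $\partial\bar\partial$-lemma argument will then give a decomposition of $V_\lambda$ into blocks
\[
\operatorname{span}\{v,\ \partial_F v,\ \bar\partial_F v,\ \partial_F\bar\partial_F v\},
\]
where $v$ ranges over a basis of $\ker\partial_F^*\cap\ker\bar\partial_F^*\cap V_\lambda$ consisting of joint $(\ph,\qh)$-eigenvectors. These are possible because $\ph,\qh$ commute with all the relevant operators up to the stated shifts. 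Granting this, consider the generating function $\Str_{V_\lambda}(x^{\ph}y^{\qh})$, in fractional powers since $\ph,\qh\in\frac1n\Z$. It factors as $(1-x)(1-y)G_\lambda(x,y)$. Differentiating at $x=y=1$ gives
\[
\Str_{V_\lambda}\ph=\Str_{V_\lambda}\qh=0,\qquad \Str_{V_\lambda}\ph(\ph-1)=\Str_{V_\lambda}\qh(\qh-1)=0,\qquad \Str_{V_\lambda}\ph\qh=G_\lambda(1,1).
\]
Since $N=\ph+qh$ — more precisely $N=\ph+\qh$ — we have $N^2=\ph^2+\qh^2+2\ph\qh$. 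It follows that
\[
\Str_{V_\lambda}N^2=\Str_{V_\lambda}(\ph+\qh)+2\Str_{V_\lambda}\ph\qh=2\Str_{V_\lambda}\ph\qh .
\]
Multiplying by $e^{-s\lambda}$ and summing over $\lambda>0$ gives \eqref{eq:N2reduction}; the trace-class property comes from Proposition~\ref{prop:familyheat}\textup{(i)}.

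For \eqref{eq:centered}, expand $(N-n)^2=N^2-2nN+n^2$. On each $V_\lambda$ with $\lambda>0$ we have $\Str N=0$ and $\Str 1=0$ by the same block decomposition. On the harmonic part, Proposition~\ref{prop:familyheat}\textup{(iii)} gives $\cH^k=0$ for $k\neq n$, so $(N-n)^2P_F=0$. Therefore
\[
\Str\bigl((N-n)^2e^{-s\Delta_F}\bigr)=\Str\bigl(N^2(e^{-s\Delta_F}-P_F)\bigr),
\]
and \eqref{eq:N2reduction} finishes the proof.

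The main obstacle is the block decomposition of $V_\lambda$. I would derive it by proving $\ker\bar\partial_F\cap V_\lambda=\operatorname{im}\bar\partial_F$ on $V_\lambda$, which holds because $\bar\partial_F\bar\partial_F^*/\lambda$ is the projection onto it. Using the anticommutation with $\partial_F$ and $\partial_F^*$, this then splits $V_\lambda$ further. If the four-term blocks prove awkward, a fallback is to pair only via $\bar\partial_F$ to get the factor $(1-y)$, then only via $\partial_F$ to get $(1-x)$, and conclude that $(1-x)(1-y)$ divides the generating function by coprimality.
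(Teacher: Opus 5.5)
Your proof is correct, but your main route differs from the paper's. The paper never decomposes $V_\lambda$ jointly. It uses only that $\bar\partial_F^*/\lambda$ is a contracting homotopy for $\bar\partial_F$ on each positive eigenspace and commutes with $\ph$, which forces $\Str(\ph^2(e^{-s\Delta_F}-P_F))=0$. Symmetrically, $\partial_F^*/\lambda$ contracts $\partial_F$ and commutes with $\qh$, which forces $\Str(\qh^2(e^{-s\Delta_F}-P_F))=0$. The identity $2\ph\qh=N^2-\ph^2-\qh^2$ then gives \eqref{eq:N2reduction}, and \eqref{eq:centered} follows as in your last step.

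Your four-term blocks amount to the observation that $\lambda^{-1/2}\bar\partial_F$, $\lambda^{-1/2}\partial_F$ and their adjoints satisfy two-mode canonical anticommutation relations on $V_\lambda$. Hence $V_\lambda$ is a direct sum of copies of the four-dimensional Fock module, with vacua in $\ker\partial_F^*\cap\ker\bar\partial_F^*$. This needs the mixed relations $\{\partial_F,\bar\partial_F\}=0$ and $\{\partial_F^*,\bar\partial_F\}=0$. The second is not in the footnote. It follows from the identity $[\sum_\ell\iota_{\partial_{\bar z_\ell}}\iota_{\partial_{z_\ell}},\bar\partial_F]=\frac12\partial_F^*$ recorded in \S\ref{sec: proof main}, via the graded Jacobi identity and $\bar\partial_F^2=0$. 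Alternatively, a direct computation shows that the first-order terms of $\{\partial^*,dF\wedge\}$ and $\{(d\bar F\wedge)^*,\bar\partial\}$ cancel.

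The paper's route needs none of this, only $[\ph,\bar\partial_F]=[\ph,\bar\partial_F^*]=0$ and $[\qh,\partial_F]=[\qh,\partial_F^*]=0$. What your route buys is the full factorization $(1-x)(1-y)G_\lambda$. This identifies $\Str_{V_\lambda}\ph\qh=G_\lambda(1,1)$ and kills $\Str_{V_\lambda}$ of every polynomial in $\ph$ alone or in $\qh$ alone at once.

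Your fallback is exactly the paper's proof written with generating functions, and there you do not need coprimality. Divisibility of $\Str_{V_\lambda}(x^{\ph}y^{\qh})$ by $1-y$ alone already kills $\Str_{V_\lambda}\ph^2$, and divisibility by $1-x$ alone kills $\Str_{V_\lambda}\qh^2$, which is all the lemma uses.

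Two minor remarks. On $\cE$, $\Z_n$-invariance forces $E\in n\Z$, so $\ph,\qh$ are integer-valued and ordinary Laurent polynomials suffice. Also, Proposition~\ref{prop:familyheat}\textup{(i)} covers only zeroth-order insertions, whereas $\ph\qh$ contains the differential operators $E$ and $E^2$. So the trace-class property of $\ph\qh e^{-s\Delta_F}$ really comes from the parametrix estimates; the paper takes it for granted as well.
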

\begin{proof}
On a positive eigenspace of eigenvalue $\lambda$, the complex for
$\bar\partial_F$ is contracted by $\bar\partial_F^*/\lambda$.
This contraction commutes with $\ph$ and the group action, so the supertrace with insertion $\ph^2$ vanishes. The same argument with $\partial_F$, $\partial_F^*/\lambda$, which commutes with $\qh$, gives vanishing with insertion $\qh^2$. Now use $2\ph\qh=N^2-\ph^2-\qh^2$.

\vskip 0.1cm
Note also that
\[
 \Str\!\left[\left(\tfrac12(N-n)^2-\tfrac12N^2\right)
       (e^{-s\Delta_{F}}-P_
       F)\right]=0,
 \qquad (N-n)^2P_F=0,
\]
which proves \eqref{eq:centered}.
\end{proof}

For $a\in\C^*$, the coefficientwise unitary dilation for
$z=|a|^{-1/n}y$, followed by a constant unitary rotation of holomorphic
form degree, conjugates $\Delta_{aF}$ to $|a|^{2/n}\Delta_F$.
These maps preserve $\ph$, $\qh$ and the invariant subspace.
Changing the heat time in \eqref{eq:zeta} and differentiating at $w=0$
gives
\begin{align}
\zeta(w;aF)&=|a|^{-2w/n}\zeta(w;F),
\label{eq:zetascaling}\\
\cT^\LG(aF)-\cT^\LG(F)
&=\frac{\zeta(0;F)}{n}\log|a|^2.
\label{eq:scalingunknown}
\end{align}

\begin{proposition}\label{prop:zetazero}
For a smooth degree-$n$ homogeneous polynomial,
\begin{equation}\label{eq:zetazero}
 \zeta(0;F)=\frac{n\chi(X_F)}{12}.
\end{equation}
\end{proposition}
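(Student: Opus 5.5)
The plan is to identify $\zeta(0;F)$ with the constant coefficient in the small-time expansion of the reduced graded heat trace. Then I compute that coefficient by the rescaling and Brownian-bridge method of \S\ref{sec:smalltime}, now without the odd parameter insertions. The last step is to match the result with $\chi(X_F)=\frac{(1-n)^n+n^2-1}{n}$.

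\textbf{Step 1: reduction to a constant term.} By \eqref{eq:infinityinput} the reduced trace decays exponentially as $s\to\infty$. By Proposition~\ref{prop:familyheat}(ii) it has a log-free expansion $\sum a_\lambda s^\lambda$. The standard Mellin argument then gives $\zeta(0;F)=a_0$. By \eqref{eq:centered},
\[
a_0=\FP_{s\to0}\tfrac12\Str\!\left[(N-n)^2e^{-s\Delta_F}\right].
\]
Conjugating by $U_s$ as in \eqref{eq:exactscale1}, with $h=s^{\frac{n}{2(n-1)}}$, turns this into $\FP_{h\to0}\frac12\Str[(N-n)^2e^{-D_h^0}]$. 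Here $D_h^0=|\partial F|^2+h^2\Delta_{\bar\partial}+hL_F$, since $N$ commutes with $U_s$. I then split the trace over $\Z_n$ as in \eqref{eq:invariant-trace-decomposition}. Lemma~\ref{lem:matrixremainder} controls the remainder.

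\textbf{Step 2: local constancy, then reduction to Fermat.} I would first show that $\zeta(0;F)$ does not depend on $F$ in the connected space of nondegenerate degree-$n$ forms. Differentiate $a_0$ along a family $F_b$ using Duhamel (Proposition~\ref{prop:familyheat}(i)). The variation $\partial_{b}\Delta_{F_b}$ equals $[\bar\partial_{F_b}^*,\partial\phi_i\wedge]$ plus its adjoint counterpart. Graded cyclicity and \eqref{eq:pqcommutators} should turn $\partial_i\Str(\ph\qh e^{-s\Delta})$ into $s\frac{d}{ds}$ of a trace with a single $\ph$-insertion, which has a log-free expansion. The $s^0$ coefficient of $s\frac d{ds}(\cdot)$ vanishes, so $\partial_i\zeta(0)=0$, and the same holds for $\partial_{\bar i}$. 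This reduces the problem to the Fermat polynomial $F=\sum_a z_a^n$.

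\textbf{Step 3: Fermat computation, sector by sector.} For Fermat, $D_h^0$ is a graded tensor sum of one-variable operators, and $g=\zeta^k$ acts diagonally. Hence each twisted trace $\Str_{L^2}(g^*(N-n)^2e^{-D_h^0})$ is obtained from one-variable generating traces $\Str(g^*t^{N}e^{-D_h^{(1)}})$ by differentiating twice in $t$ at $t=1$.

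For $g\ne1$, the rescaling $z=\sqrt2hZ$ of Step~7 of Proposition~\ref{prop:smalltime} sends the potential terms to $0$ as $h\to0$. The constant term is then the Gaussian fixed-point integral $|1-g|^{-2n}$ multiplied by an explicit finite-dimensional twisted supertrace of $(N-n)^2$ on $\Lambda^{*,*}$. This gives a rational function of $\zeta^k$, which I then sum over $k$.

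For $g=1$, I expand as in Steps~2--5 of Proposition~\ref{prop:smalltime}. The $(N-n)^2$ insertion now allows fewer $L_F$ factors, so several $(k,\ell)$ orders contribute. The local density should reduce to a total divergence plus a Gaussian-type integral $\int e^{-|\partial F|^2}|\det\partial_z^2F|^2$. I would evaluate that integral by the change of variables $w=\partial_zF$, which is an $(n-1)^n$-sheeted cover. This should produce the $(-1)^n\mu_{\mathrm{inv}}$-type term, with $\mu_{\mathrm{inv}}=\frac{(n-1)^n+(n-1)(-1)^n}{n}$, while the twisted sectors produce the ambient $n-1$ term.

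\textbf{Main obstacle.} I expect the hardest part to be the explicit evaluation of the identity-sector constant, together with the bookkeeping of the twisted-sector sums over $k$ needed to land exactly on $\frac{n}{12}\chi(X_F)$. As a fallback I would compute the spectrum of the one-variable operator $\Delta_{z^n}$ restricted to each $\Z_n$-character. Its zeta value at $0$ is a Hurwitz-type expression in Bernoulli polynomials $B_2(k/n)$, which naturally carries the factor $\frac1{12}$. I would then assemble these by the tensor product formula.
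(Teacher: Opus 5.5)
Step~1 and the sector computations in Step~3 are sound in outline. Step~2 does not work as described. Duhamel's formula, graded cyclicity and \eqref{eq:pqcommutators} give exactly
\[
\partial_{b^i}\Str\!\left(\ph\qh(e^{-s\Delta_{F}}-P_{F})\right)
=-s\,\Str\!\left(\ph\,(\partial\phi_i\wedge)\,\bar\partial_{F}^{*}\,e^{-s\Delta_{F}}\right).
\]
This is $-s$ times a local trace, not $s\frac{d}{ds}$ of one. It becomes a total $s$-derivative only if $\partial\phi_i\wedge=[Y,\bar\partial_F]$ for some local even operator $Y$; the same manipulation then gives $s\frac{d}{ds}\Str(\ph\,Ye^{-s\Delta_F})$.

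For $\phi_i=VF$ with $V$ a linear holomorphic vector field you can take $Y=\mathcal L_V$. But these are exactly the $GL(n,\C)$-orbit directions, which do not move $X_F$ in moduli. For a direction with $[\phi_i]\neq0$ in the Jacobian ring there is no such vector field, and nothing in \eqref{eq:pqcommutators} supplies a substitute. The same obstruction is what forces the error term $\mathcal E_q$ in Lemma~\ref{eq: variation lemma}, where a generating field $V_q$ exists only on a cone. So the vanishing of the $s^0$-coefficient above is a local fact that needs an actual computation, and your argument does not supply it.

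The paper gets it indirectly. It applies \eqref{eq:rawanomaly}, which rests on the vanishing finite part in Proposition~\ref{prop:smalltime}, to the family $aF_b$ on $\C^*\times B$. Since $M_{F_b}=\frac1{na}\{\bar\partial_{aF_b},\iota_{\mathcal E_z}\}$, one has $P_{aF_b}M_{F_b}P_{aF_b}=0$, hence $\partial_a\partial_{\bar b^j}\cT^{\LG}(aF_b)=0$. By \eqref{eq:scalingunknown} this mixed derivative equals $\frac1{na}\partial_{\bar b^j}\zeta(0;F_b)$.

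Alternatively, you can drop Step~2 entirely, because your Step~3 never really uses the Fermat form.
\begin{itemize}
\item For $g\neq1$ the constant term is the free fixed-point term $\frac1{2n}\str(g^*(N-n)^2)/|1-g|^{2n}=|1-g|^{-2}$, for every $F$.
\item In the identity sector, the $\Tr(\partial_z^2F\,\overline{\partial_z^2F}^{T})$-contributions of $(k,\ell)=(2n,0)$ and $(2n-2,2)$ cancel. What remains is a total Laplacian plus a multiple of $\int e^{-|\partial F|^2}|\det\partial_z^2F|^2$.
\item The degree-$(n-1)^n$ substitution $w=\partial_zF$ evaluates that integral independently of $F$.
\end{itemize}
This is Lemma~\ref{lem: coefficient lemma} with $K=I$, and it yields $\frac{(1-n)^n}{12}+\sum_{j=1}^{n-1}\frac1{4\sin^2(\pi j/n)}=\frac{(1-n)^n+n^2-1}{12}$ directly.

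The identity sector contributes $\frac{(1-n)^n}{12}$ and the twisted sectors $\frac{n^2-1}{12}$. This is not the $(-1)^n\mu_{\mathrm{inv}}$ versus $n-1$ split you anticipate, so do not try to force that matching.

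The paper's own route is different. After reducing to Fermat it uses the exact product identity $\Str(\ph\qh(e^{-s\Delta_{F_0}}-P_{F_0}))=\sum_{g}I_g^{n-1}A_g(s)$. Then only the scalar one-variable coefficients $-\frac{n-1}{12}$ and $|1-g|^{-2}$ are needed.

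Your spectral fallback would not work. The scalar operator $\Delta^{(0)}_{z^n}$ is an anharmonic oscillator with potential $2n^2|z|^{2n-2}$, $n\ge4$, and its spectrum is not explicit; only its local heat coefficients are computable.
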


\begin{proof}
\noindent\textbf{Step 1: constancy follows from anomaly.}
Consider the family $G_{a,b}=aF_b$ on $\C^*\times B$.
Let $\mathcal E_z=\sum_\nu z_\nu\partial_{z_\nu}$ be the holomorphic
Euler vector field (distinct from the rotation generator $E$).
We write $G=G_{a,b}$ when $a,b$ are fixed, and
$\{S,T\}=ST+TS$ for the anticommutator. On forms, homogeneity gives
\[
 \{\bar\partial_G,\iota_{\mathcal E_z}\}
 =M_{\mathcal E_zG}=nM_G.
\]
The anticommutator with $\bar\partial$ is zero since $\mathcal E_z$
is holomorphic. Since $\partial_aG=F_b=G/a$, the identity implies
$P_GM_{\partial_aG}P_G=0$: harmonic forms are rapidly decreasing
by \cite[Lemma~A.3]{TY},
the contraction has polynomial coefficients, and $P_G\bar\partial_G
=\bar\partial_GP_G=0$. The equation \eqref{eq:rawanomaly} therefore gives
\[
 0=\partial_a\partial_{\bar b^j}\cT^\LG(aF_b).
\]
Apply this to \eqref{eq:scalingunknown}, we have:
\[
 0=\frac{1}{na}\,\partial_{\bar b^j}\zeta(0;F_b).
\]
Since $\zeta(0;F_b)$ is smooth and real, we also have $\partial_{ b^j}\zeta(0;F_b)=0$.
It follows that all of its derivatives in the $B$-directions vanish. {The smooth locus of degree-$n$ homogeneous polynomials is the complement
of the discriminant hypersurface and is therefore connected. Hence
$\zeta(0;F)$ is constant on this locus, so it suffices to evaluate
it at the Fermat polynomial $F_0=z_1^n+\cdots+z_n^n$.}

\smallskip
\noindent\textbf{Step 2: the exact Fermat product trace.}
For the one-variable polynomial $z^n$, let
$\Delta_{z^n}^{(0)}=\frac12\Delta_{\R^2}+2n^2|z|^{2n-2}$ be the
restriction of $\Delta_{z^n}$ to functions. For $g\in\Z_n$, set
\[
\begin{aligned}
 A_g(s)&=\Tr_{L^2(\C)}\!\left(g^*e^{-s\Delta^{(0)}_{z^n}}\right),\\
 I_g&=\Str_{L^2\Omega^{*,*}(\C)}\!\left(g^*e^{-s\Delta_{z^n}}\right).
\end{aligned}
\]
The degree-zero and degree-two operators both equal the scalar
Schr\"odinger operator, and $g^*$ acts trivially on
$dz\wedge d\bar z$. They therefore have the same heat trace $A_g(s)$.
The supertrace $I_g$ is independent of $s$ by cancellation
on positive eigenspaces. 
\cite[Theorem~3.1(3) and \S3.3]{TY} identifies its harmonic
space with the classes $dz,z\,dz,\ldots,z^{n-2}dz$. Thus
\[
 I_g=-\sum_{k=1}^{n-1}g^k
   =\begin{cases}-(n-1),&g=1,\\1,&g\ne1.\end{cases}
\]
Let
\[
 Z_g(u;s):=\Str_{L^2\Omega^{*,*}(\C)}(g^*u^Ne^{-s\Delta_{z^n}})
   =A_g(s)(1-u)^2+uI_g.
\]
We have $\Str_{L^2\Omega^{*,*}(\C^n)}\!\left(g^*u^Ne^{-s\Delta_{F_0}}\right)=Z_g(u;s)^n$. Thus
\[\Str_{L^2\Omega^{*,*}(\C^n)}\!\left(N^2g^*e^{-s\Delta_{F_0}}\right)=(u\partial_u)^2Z_g(u;s)^n|_{u=1}=2n I_g^{n-1}A_g(s)+n^2I_g^n.\]
The last term is exactly the harmonic contribution.
Subtract it, multiply by $1/2$ and $1/n$.
By Lemma \ref{lem:N2reduction}
\begin{equation}\label{eq:Fermattrace}
 \Str\!\left(\ph\qh(e^{-s\Delta_{F_0}}-P_{F_0})\right)=\sum_{g\in\Z_n}I_g^{n-1}A_g(s).
\end{equation}

\smallskip
\noindent\textbf{Step 3: the constant coefficients.}
Let $a_{0,g}$ be the constant coefficient in the small-$s$ expansion
of $A_g(s)$. In this step, $\nabla$ is the real Euclidean gradient. For $g=1$, the scalar potential is $2n^2|z|^{2n-2}$. With
$h=s^{n/(2(n-1))}$, the scalar Brownian-bridge expansion through order
$h^2$, including its prefactor $h^{-2}$, gives the constant term
\begin{align*}
a_{0,1}
 &={ \frac1{2\pi}}
   \int_{\R^{2}}e^{-2n^2|z|^{2n-2}}
       \left(\frac{\Delta_{\bar{\p}}(n^2|z|^{2n-2})}{3}
        +\frac{\left|\nabla(n^2|z|^{2n-2})\right|^2}{{ 6}}\right)d^2z\\
 &=\frac{1}{12\pi}\int_{\R^{2}}e^{-2n^2|z|^{2n-2}}\Delta_{\bar{\p}}(n^2|z|^{2n-2})\,d^2z
 =-\frac{n-1}{12}.
\end{align*}

For
$g=e^{2\pi ij/n}\ne1$, equivariant localization 
gives
\begin{equation}\label{eq:nonidscalar}
 a_{0,g}=\frac{1}{2\pi s}\int_{\R^2}e^{-|1-g|^2|z|^2/(2s)}d^2z=\frac1{|1-g|^2}=\frac1{4\sin^2(\pi j/n)}.
\end{equation}

Equation~\eqref{eq:Fermattrace} now gives
\[
 \zeta(0;F_0)
 =(1-n)^{n-1}\left(-\frac{n-1}{12}\right)
  +\sum_{j=1}^{n-1}\frac1{4\sin^2(\pi j/n)}=\frac{(1-n)^n+n^2-1}{12}
            =\frac{n\chi(X_F)}{12}.
\]
\end{proof}

\fi
\subsection{Independence of linear coordinates}
\label{sec:GLdescent}

We now prove invariance of genus one term under linear changes of coordinates. 

\begin{lemma}
\label{lem:GLpluriharmonic}
Let $u:GL(n,\C)\to\R$ be pluriharmonic and satisfy
$u(AU)=u(A)$ for all $U\in U(n)$. Then there are real constants
$c,c_0$ such that
\[
 u(A)=c\log|\det A|^2+c_0.
\]
\end{lemma}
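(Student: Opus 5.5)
The plan is to push $u$ down to the quotient $GL(n,\C)/U(n)$ and use the polar decomposition. Every $A\in GL(n,\C)$ factors uniquely as $A=PU$ with $P=(A\bar A^T)^{1/2}$ positive definite Hermitian and $U\in U(n)$. By right $U(n)$-invariance, $u(A)=v(A\bar A^T)$ for a function $v$ on the cone $\mathcal P_n$ of positive definite Hermitian matrices. Replacing $u$ by $u-u(I)$ we may assume $u(I)=0$.

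First, reduce to the diagonal torus. Take $D=\operatorname{diag}(e^{w_1},\dots,e^{w_n})$ with $w\in\C^n$. This is a holomorphic map $\C^n\to GL(n,\C)$, so $\tilde u(w):=u(D)$ is pluriharmonic on $\C^n$. Since $D=|D|\cdot\operatorname{diag}(e^{i\operatorname{Im}w_a})$ and the second factor is unitary, $\tilde u$ depends only on $x=\operatorname{Re}w\in\R^n$. A pluriharmonic function depending only on real parts satisfies $\partial_{w_a}\partial_{\bar w_b}\tilde u=\frac14\partial_{x_a}\partial_{x_b}\tilde u=0$. Hence $\tilde u$ is affine: $\tilde u(w)=\sum_a c_a\operatorname{Re}w_a+c_0$.

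Next, use conjugation by permutation matrices. For $\sigma\in U(n)$ a permutation matrix, the map $A\mapsto \sigma^{-1}A\sigma$ is holomorphic, so $u'(A):=u(\sigma^{-1}A\sigma)$ is again pluriharmonic and right $U(n)$-invariant. More to the point, we can identify $u$ via its dependence on $A\bar A^T$. Left multiplication by a unitary $V$ changes $A\bar A^T$ to $VA\bar A^TV^{-1}$, so we need left invariance too. This is the main obstacle, since only right invariance is assumed.

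I would get left invariance as follows. Fix $V\in U(n)$ and set $g(A)=u(VA)-u(A)$. This is pluriharmonic and right $U(n)$-invariant. Consider the holomorphic map $\C\ni t\mapsto Ve^{tX}$ with $X$ skew-Hermitian and $V=e^{X}$. Better still, choose a one-parameter complexification: for fixed $A$, the function $t\mapsto u(e^{tX}A)$ is harmonic in $t\in\C$. Writing $e^{tX}A$ for $t=i\theta+r$ gives $e^{i\theta X}$ times a factor that is not unitary, which is awkward.

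Instead I would use right invariance applied to $A^{-1}$-type tricks. Observe that $u(A)$ depends only on $A\bar A^T$, and the set of $A$ with fixed $A\bar A^T$ is the fiber $AU(n)$. The function $h\mapsto v(h)$ on $\mathcal P_n$ pulls back to a pluriharmonic function. Its complex Hessian along the holomorphic curves $t\mapsto A(I+tY)$, computed at $A=I$, gives $\operatorname{Hess}v(I)[Y+\bar Y^T,Y+\bar Y^T]$ plus first-order terms $dv(I)[Y\bar Y^T]$. Vanishing of this for all $Y$, and at all base points by translating with $A$, is a second-order PDE forcing $v$ to be "affine in $\log$" along flats. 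Combined with the torus computation, it forces $c_a=c$ for all $a$.

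Concretely, the plan is:
\begin{enumerate}
\item Apply the torus step at the base point $A_0$ using the holomorphic curve $w\mapsto A_0 D(w)$ and its right-$U(n)$ invariance. This shows $u(A_0\,\operatorname{diag}(e^{x}))$ is affine in $x$ for every $A_0$.
\item Use the curves $t\mapsto A_0 e^{tY}$ for arbitrary Hermitian $Y$ (diagonalize $Y=WDW^{-1}$ with $W$ unitary, and absorb $W^{-1}$ on the right). This shows $t\mapsto u(A_0e^{tY})$ is affine in $\operatorname{Re}t$, with slope a linear functional $\ell_{A_0}(Y)$ on Hermitian matrices.
\item Show $\ell_{A_0}$ is independent of $A_0$ and a multiple of the trace. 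Affinity along all geodesics $e^{tY}$ of the symmetric space $\mathcal P_n=GL/U$ at every point makes $v$ totally geodesic-affine. Nonabelian commutators of $Y_1,Y_2$ then force $\ell$ to vanish on traceless $Y$: compare the two routes $e^{Y_1}e^{Y_2}$ versus the geodesic through both, using curvature of $\mathcal P_n$ on its $SL$ factor.
\end{enumerate}
An affine function on the irreducible nonflat symmetric space $SL(n,\C)/SU(n)$ is constant, since its gradient would be a parallel vector field, which cannot exist there. So $u(A)=c\log|\det A|^2+c_0$. The hardest step is this last rigidity, and I would argue it via parallel gradient and the holonomy of $SL(n,\C)/SU(n)$.
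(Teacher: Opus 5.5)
Your argument is correct, but it follows a different route from the paper's.

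\textbf{The paper's route.} It stays at first order. It sets $\alpha=\partial u$, a closed holomorphic $1$-form. The map $G\mapsto\alpha_{AG}(HG)-\alpha_A(H)$ is holomorphic and vanishes on the maximal totally real submanifold $U(n)$, hence vanishes on all of $GL(n,\C)$. So $\alpha$ is right $GL(n,\C)$-invariant, with $\alpha_A(H)=\ell(HA^{-1})$. Evaluating $d\alpha=0$ on right-invariant fields gives $\ell([X,Y])=0$, so $\ell=c\Tr$ because $[\mathfrak{gl}_n,\mathfrak{gl}_n]=\mathfrak{sl}_n$. The circle $e^{it}I$ then shows that $c$ is real.

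\textbf{Your route.} You use only harmonicity of $u$ along the holomorphic curves $t\mapsto A_0e^{tY}$ with $Y$ Hermitian. Together with right invariance, this makes $v$ affine along every geodesic $s\mapsto A_0e^{sY}U(n)$ of $\mathcal P_n=GL(n,\C)/U(n)$. Hence $\nabla^2v=0$, so $\operatorname{grad}v$ is parallel, and the rigidity of the symmetric space kills its traceless part. Left invariance is never needed; it comes out as a consequence. In effect you replace analytic continuation off a totally real submanifold by second-order Riemannian rigidity. This is heavier machinery, but it shows geometrically why only $\log\det$ survives: it is the unique affine function, coming from the flat factor.

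\textbf{Tightening the write-up.}
\begin{enumerate}
\item Your step 2 already contains step 1.
\item The left-invariance detour, the Hessian formula along $A(I+tY)$, and the ``compare the two routes'' heuristic can all be deleted.
\item The jump from $SL(n,\C)/SU(n)$ back to $\mathcal P_n$ is not spelled out. The cleanest fix avoids holonomy and the de Rham splitting altogether. If $X$ is parallel, then $R(Y,Z)X=0$. At the origin, $R(Y,Z)X=-[[Y,Z],X]$ for Hermitian $X,Y,Z$ (up to sign convention). Commutators of Hermitian matrices span $\mathfrak{su}(n)$, so $X$ commutes with $\mathfrak{su}(n)$ and is a real multiple of $I$.
\item Transport this to every point using the isometries $P\mapsto gPg^{*}$. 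With the invariant metric one has $\operatorname{grad}\log\det|_P=P$, which is the image of $I$. This gives $\operatorname{grad}v=c(P)\,\operatorname{grad}\log\det$. Both fields are parallel and the second never vanishes, so $dc=0$, i.e.\ $c$ is constant.
\item Smoothness of $v$, which you need for the Hessian, follows from real-analyticity of the pluriharmonic function $u$ together with the smooth section $P\mapsto P^{1/2}$.
\end{enumerate}
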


\begin{proof}
Set $\alpha=\partial u$, then
\[\bar\p\alpha=\bar\p\p u=0,\qquad d\alpha=\p^2u+\bar\p\p u=0.
\]
Right unitary invariance gives $R_U^*\alpha=\alpha$ for $U\in U(n)$,
where $R_U(A)=AU$.

Fix $A\in GL(n,\C)$, $H\in T_AGL(n,\C)\simeq M_n(\C)$, and set
\[f_{A,H}(G):=\alpha_{AG}(HG)-\alpha_A(H).\]
Then $f_{A,H}$ is holomorphic in $G\in GL(n,\C)$ and vanishes on $U(n)$. Since $U(n)$ is the real-analytic maximal totally real submanifold of $GL(n,\C)$, we have
\[f_{A,H}\equiv0,\qquad\text{i.e.}\quad R_G^*\alpha=\alpha,~~\text{for any } G\in GL(n,\C).\]

Let $\ell=\alpha_I$, where $I$ is the identity matrix. Right invariance gives $\alpha_A(H)=\ell(HA^{-1})$. Evaluate $d\alpha=0$ on two
right-invariant vector fields $X^R,Y^R$ given by $X,Y\in\mathfrak{gl}(n,\C)$. Then
\[0=d\alpha(X^R,Y^R)=-\alpha([X^R,Y^R])=\alpha([X,Y]^R)=\ell([X,Y]).\]
Thus,
\[\ell\big|_{\mathfrak{sl}(n,\C)}=0,\qquad \ell(X)=c\Tr(X),\quad\text{for some }c\in\C,\]
and furthermore,
\[\alpha=c\Tr(A^{-1}dA).\]
Along $A(t)=e^{it}I$, right unitary invariance gives
$0=\frac{d}{dt}u(A(t))=2\Re(c\,in)$.
It follows that $c$ is real.
Finally,
$\partial\log|\det A|^2=\Tr(A^{-1}dA)$ globally, without choosing a branch of $\log\det A$. Both complex derivatives of $u-c\log|\det A|^2$ vanish, so this difference is constant.
\end{proof}

\begin{proposition}
\label{prop:GLinvariance}
For every $A\in GL(n,\C)$, with the fixed Euclidean spectral definition
and the distinguished vacuum $w_0$,
\begin{align}
 \cT^\LG(F\circ A)-\cT^\LG(F)
   &=\frac{\chi(X_F)}{12}\log|\det A|^2,
      \label{eq:TGL}\\
 h_{F\circ A}(w_0(F\circ A),w_0(F\circ A))&=|\det A|^{-2}h_F(w_0(F),w_0(F)),
      \label{eq:hGL}\\
 \cT^\LG_{\mathrm{mod}}(F\circ A)
   &=\cT^\LG_{\mathrm{mod}}(F).
      \label{eq:TmodGL}
\end{align}

\end{proposition}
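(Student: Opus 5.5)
The plan is to reduce everything to Lemma~\ref{lem:GLpluriharmonic} applied to the function $u(A):=\cT^\LG(F\circ A)-\cT^\LG(F)$ on $GL(n,\C)$, and to compute the metric identity \eqref{eq:hGL} directly. Identity \eqref{eq:TmodGL} then follows at once from Definition~\ref{def:modifiedearly}: $\frac{\chi(X_F)}{12}\log|\det A|^2-\frac{\chi(X_F)}{12}\log|\det A|^2=0$.

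\textbf{Step 1: unitary invariance.} For $U\in U(n)$, pullback by the linear map $z\mapsto Uz$ is unitary on $L^2\Omega^{*,*}(\C^n)$. It commutes with the $\Z_n$-action, since the scalars $e^{2\pi i/n}$ are central, and it preserves $N^{1,0}$, $N^{0,1}$ and the Euclidean metric. It also commutes with $E$, because $\sum_a(z_a\partial_{z_a}-\bar z_a\partial_{\bar z_a})$ is $U(n)$-invariant. It intertwines $\bar\partial_{F\circ U}$ with $\bar\partial_F$, so it conjugates $\Delta_{F\circ U}$ to $\Delta_F$ and preserves $\ph,\qh$. Hence $\zeta(w;F\circ U)=\zeta(w;F)$, and so $u(AU)=u(A)$, after applying the same argument to $F\circ A$ in place of $F$.

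\textbf{Step 2: pluriharmonicity.} Consider the holomorphic family $F_A:=F\circ A$ over $B=GL(n,\C)$. Each $F_A$ is homogeneous of degree $n$ with an isolated critical point, so \eqref{eq:rawanomaly} gives $\partial_i\partial_{\bar j}\cT^\LG=(-1)^n\Tr_\cH(C_i\bar C_j)$, with $\phi_i=\partial_{A_{ab}}(F\circ A)=z_b(\partial_aF)(Az)$. In the right-invariant frame, the tangent vector $X\in\mathfrak{gl}(n,\C)$ at $A$ gives $\phi_X=\langle \partial_z F_A, A^{-1}XA z\rangle$, a linear vector field applied to $F_A$.

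I expect the main obstacle to be showing $C_X=P M_{\phi_X}P=0$. The plan is to write $\phi_X=\mathcal V(F_A)$ with $\mathcal V=\sum (Yz)_a\partial_{z_a}$ holomorphic linear, and use $\{\bar\partial_{F_A},\iota_{\mathcal V}\}=M_{\mathcal V F_A}+\mathcal L$-type terms. More precisely, $\{\bar\partial_G,\iota_{\mathcal V}\}=M_{\mathcal VG}$ on forms, since $\{df\wedge,\iota_{\mathcal V}\}=\mathcal V f$ and $\{\bar\partial,\iota_{\mathcal V}\}=0$ for holomorphic $\mathcal V$. This is exactly the argument of Step~1 of Proposition~\ref{prop:zetazero}: harmonic forms decay rapidly, $\iota_{\mathcal V}$ has polynomial coefficients, and $P\bar\partial_G=\bar\partial_GP=0$ kill $PM_{\phi_X}P$. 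Thus $C\equiv0$, $\ddbar u=0$, and $u$ is pluriharmonic. Smoothness is provided by Proposition~\ref{prop:familyheat}.

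\textbf{Step 3: the constant.} Lemma~\ref{lem:GLpluriharmonic} gives $u=c\log|\det A|^2+c_0$, and $c_0=0$ since $u(I)=0$. To find $c$, restrict to $A=aI$: then $F\circ aI=a^nF$, and \eqref{eq:scalingunknown} with Proposition~\ref{prop:zetazero} gives
\[
u(aI)=\frac{\zeta(0;F)}{n}\log|a|^{2n}=\frac{\chi(X_F)}{12}\log|a|^{2n}=\frac{\chi(X_F)}{12}\log|\det(aI)|^2 .
\]
Hence $c=\chi(X_F)/12$, which proves \eqref{eq:TGL}.

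\textbf{Step 4: the metric.} Let $\Psi=(A^{-1})^*$ act by pullback, $\omega\mapsto (A^{-1})^*\omega$. Then $A^*$ intertwines $\bar\partial_F$ and $\bar\partial_{F\circ A}$ and maps polynomial growth to polynomial growth. We have $A^*(dz_1\wedge\cdots\wedge dz_n)=\det A\,dz_1\wedge\cdots\wedge dz_n$ and $A^*S_F(\alpha)-A^*\alpha\in\operatorname{im}\bar\partial_{F\circ A}$. However, $A^*$ does not preserve $\Delta$-harmonicity for nonunitary $A$, so I would instead use $h_F(w_0,w_0)=c_n\|\Omega_F\|^2_{\mathrm{CY}}$ from \eqref{eq:LGresiduemetric}. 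Since $\Phi_A^*\Omega_F=(\det A)^{-1}\Omega_{F\circ A}$, because the Euler-contracted volume form scales by $\det A$ under $A^*$, and $\Phi_A$ is a biholomorphism, we get $\|\Omega_{F\circ A}\|^2=|\det A|^2\|\Omega_F\|^2$ up to the orientation of the sign convention. The sign of the exponent must be checked carefully against \eqref{eq:hGL}, and I would fix it by testing $A=aI$, where $F\circ aI=a^nF$ and the scaling of Proposition~\ref{prop:zetazero} applies explicitly to $h$.
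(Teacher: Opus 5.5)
Your Steps 1--3 reproduce the paper's proof of \eqref{eq:TGL}. $C$ vanishes along the orbit because $M_{\dot F}=\{\bar\partial_{F\circ A},\iota_v\}$ for a holomorphic linear field $v$, so \eqref{eq:rawanomaly} makes $u$ pluriharmonic. $U(n)$-invariance and Lemma~\ref{lem:GLpluriharmonic} then give $u=c\log|\det A|^2$, and the scalar case with \eqref{eq:scalingunknown} and Proposition~\ref{prop:zetazero} fixes $c=\chi(X_F)/12$. Using right-invariant frames instead of the paper's curves $Ae^{tB}$ is immaterial.

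The gap is in Step 4, where your exponent has the wrong sign. The relation $\Phi_A^*\Omega_F=(\det A)^{-1}\Omega_{F\circ A}$ does not follow from $A^*\vartheta=(\det A)\vartheta$; it is the inverse of what does. Indeed $A^*(\vartheta/F)=(\det A)\,\vartheta/(F\circ A)$. Poincar\'e residues commute with pullback by the biholomorphism $z\mapsto Az$, which carries $\{F\circ A=0\}$ onto $\{F=0\}$, so $\Phi_A^*\Omega_F=(\det A)\,\Omega_{F\circ A}$; this is the paper's Step 3.

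Taken at face value, your version gives $h_{F\circ A}(w_0,w_0)=|\det A|^{2}h_F(w_0,w_0)$. That contradicts \eqref{eq:hGL} and yields $\cT^\LG_{\mathrm{mod}}(F\circ A)-\cT^\LG_{\mathrm{mod}}(F)=\frac{\chi(X_F)}{6}\log|\det A|^2$ instead of $0$. So your opening claim that \eqref{eq:TmodGL} ``follows at once'' is not supported by what you actually derive.

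No orientation convention can absorb this. $\Phi_A$ is holomorphic, hence orientation preserving, and change of variables gives
\[
\|\Omega_F\|^2_{\mathrm{CY}}
=\int_{X_{F\circ A}}\Phi_A^*\bigl(i^{(n-2)^2}\Omega_F\wedge\overline{\Omega_F}\bigr)
=|\det A|^2\,\|\Omega_{F\circ A}\|^2_{\mathrm{CY}},
\]
which together with \eqref{eq:LGresiduemetric} is exactly \eqref{eq:hGL}.

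Your fallback check at $A=aI$ is also not available as stated. Proposition~\ref{prop:zetazero} and \eqref{eq:zetascaling} record only how $\zeta$ scales; nothing there gives the scaling of $h$. Two valid checks:
\begin{enumerate}[(i)]
\item $\Omega_{a^nF}=a^{-n}\Omega_F$ directly from the residue, since $\vartheta/(a^nF)=a^{-n}\vartheta/F$ and $X_{a^nF}=X_F$.
\item For $A=\lambda I$ with $\lambda>0$, the homothety satisfies $A^*\bar\partial_F=\bar\partial_{F\circ A}A^*$ and $A^*\bar\partial_F^{\,*}=\lambda^{-2}\bar\partial_{F\circ A}^{\,*}A^*$. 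So $A^*$ maps harmonic forms to harmonic forms, is an $L^2$-isometry in total degree $n$, and sends $dz_1\wedge\cdots\wedge dz_n$ to $\lambda^n dz_1\wedge\cdots\wedge dz_n$. Hence $w_0(F\circ A)=\lambda^{-n}A^*w_0(F)$ and $h_{F\circ A}(w_0,w_0)=\lambda^{-2n}h_F(w_0,w_0)=|\det A|^{-2}h_F(w_0,w_0)$.
\end{enumerate}
Both confirm the exponent $-2$.
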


\begin{proof}
\noindent\textbf{Step 1: variations along the $GL(n,\C)$-orbit vanish on the harmonic space.}
Fix $F$. Every tangent vector at $A\in GL(n,\C)$ is represented by
\[
A(t)=Ae^{tB},
\qquad B\in\mathfrak{gl}(n,\C).
\]
The corresponding variation of the polynomial is
\[
\dot F_{A,B}(z)
:=
\left.\frac{d}{dt}(F\circ Ae^{tB})(z)\right|_{t=0}
=
\sum_j(Bz)_j\,\partial_{z_j}(F\circ A)(z).
\]
In the notation of \eqref{eq:higgs}, the operator corresponding to the
tangent vector $\dot A=AB$ is therefore
\[
C_{A,B}
=
P_{F\circ A}\,M_{\dot F_{A,B}}\,P_{F\circ A}.
\]
For the holomorphic vector field
\[
v_B=\sum_j(Bz)_j\partial_{z_j},
\]
we have
\[
M_{\dot F_{A,B}}
=
\{\bar\partial_{F\circ A},\iota_{v_B}\}.
\]
Hence
\[
C_{A,B}=0,
\]
because
\[
P_{F\circ A}\bar\partial_{F\circ A}
=
\bar\partial_{F\circ A}P_{F\circ A}
=0.
\]
Thus the operator $C$ vanishes in every tangent direction to the
$GL(n,\C)$-orbit.

\smallskip
\noindent\textbf{Step 2: determine the  torsion's dependence on $A$.}
Equation~\eqref{eq:rawanomaly} and Step 1 give $\partial\bar\partial\cT^\LG(F\circ A)=0$.
If $U\in U(n)$, pullback by $U$ is a unitary map of the Euclidean
form spaces and intertwines the operators for $F\circ A$ and $F\circ AU$.
It commutes with $\Z_n$, the total rotation generator, and the two
modified gradings. Consequently the reduced weighted traces agree
and $\cT^\LG(F\circ AU)=\cT^\LG(F\circ A)$. Lemma~\ref{lem:GLpluriharmonic} now gives
\[
 \cT^\LG(F\circ A)-\cT^\LG(F)=c\log|\det A|^2.
\]
Taking $A=rI$, $r>0$, and applying \eqref{eq:scalingunknown} with
$a=r^n$ gives $c=\zeta(0;F)/n$. Proposition~\ref{prop:zetazero}
therefore gives $c=\chi(X_F)/12$, proving \eqref{eq:TGL}.

\smallskip
\noindent\textbf{Step 3: compute the vacuum norm's dependence on $A$.}
Let $\mathrm{vol}=dz_1\wedge\cdots\wedge dz_n$ and
$\vartheta=\iota_{\mathcal E_z}\mathrm{vol}$, the numerator in
\eqref{eq:specifiedresidue}. The linear map $z\mapsto Az$ preserves
the Euler vector field and satisfies
$A^*\mathrm{vol}=(\det A)\mathrm{vol}$. Thus
\[
 A^*\vartheta=(\det A)\vartheta,\qquad
 A^*\!\left(\frac{\vartheta}{F}\right)
       =(\det A)\frac{\vartheta}{F\circ A}.
\]
Taking Poincar\'e residues gives, for
$\Phi_A:X_{F\circ A}\to X_F$,
\[
 \Phi_A^*\Omega_F=(\det A)\Omega_{F\circ A}.
\]
Hence
\[
 \|\Omega_{F\circ A}\|_{\mathrm{CY}}^2
       =|\det A|^{-2}\|\Omega_F\|_{\mathrm{CY}}^2.
\]
Then \eqref{eq:LGresiduemetric} gives \eqref{eq:hGL}. Finally, \eqref{eq:TmodGL} follows from \eqref{eq:TGL} and \eqref{eq:hGL}.
\end{proof}

\section{Proof of \Cref{thm: main}}\label{sec: proof main}
By \Cref{cor: harmonic}, it suffices to analyze the asymptotic behavior
of $F_1^{\mathrm{LG}}(f)$ near the boundary of the compactified moduli space.

\subsection{Model near boundary of the compactified moduli}
For this subsection, let
\[
L=\binom{2n-1}{n}-1,
\qquad
U=\mathbb P H^0(\mathbb P^{n-1},\mathcal O(n))\setminus D
=\mathbb P^L\setminus D,
\]
where $D$ is the reduced discriminant hypersurface. Write
\[
\pi:\mathcal X\longrightarrow U
\]
for the universal family of smooth degree-$n$ Calabi--Yau
hypersurfaces in $\mathbb P^{n-1}$. Let
$D^{\mathrm{nod}}\subset D$ denote the locus of hypersurfaces with
exactly one ordinary double point.

The  variety
\[
\mathcal{I}:=\left\{
([F],[x])\in\mathbb P^L\times\mathbb P^{n-1}:
\partial F(x)=0
\right\}
\]
is an irreducible projective bundle over $\mathbb P^{n-1}$: for fixed $x$, the $n$
conditions
\[
\partial_{z_1}F(x)=\cdots=\partial_{z_n}F(x)=0
\]
are independent linear conditions on $F$, then the fiber is isomorphic to $\mathbb P^{L-n}$, and $\dim\mathcal{I}=(n-1)+(L-n)=L-1$. Hence, Its image is the
irreducible hypersurface $D$. A generic point of $D$ corresponds to a
hypersurface whose only singularity is one ordinary double point, and
\begin{equation}\label{eq:discriminantdegree}
\deg D=n(n-1)^{n-1},
\qquad
\operatorname{codim}_{\mathbb P^L}
\bigl(D\setminus D^{\mathrm{nod}}\bigr)\ge2.
\end{equation}

\begin{proposition}
\label{prop:finiteholonomy}
The positively oriented boundary $\gamma$ of a small disk transverse
to $D^{\mathrm{nod}}$ generates
\[
H_1(U,\Z)
\simeq
\Z_{n(n-1)^{n-1}}.
\]

\end{proposition}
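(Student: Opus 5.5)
The plan is to use the standard computation of the first homology of a hypersurface complement in projective space. It has two parts: show that $H_1(U,\Z)$ is generated by meridians of the components of $D$, and identify the single relation coming from a line.

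\textbf{Meridians generate.} Since $\mathbb P^L$ is simply connected and $D$ is an irreducible hypersurface by the discussion above \eqref{eq:discriminantdegree}, a Zariski--van Kampen / Lefschetz-type argument gives that $\pi_1(U)$ is normally generated by meridians of $D$. Concretely, any loop in $U$ bounds a disk in $\mathbb P^L$. Put this disk in general position with respect to $D$. By the codimension bound in \eqref{eq:discriminantdegree}, the disk can be made to avoid $D\setminus D^{\mathrm{nod}}$ (real dimension count: a $2$-disk versus a locus of real codimension $\ge4$). It then meets $D^{\mathrm{nod}}$, which is smooth, transversally at finitely many points. Hence the loop is homologous in $U$ to a sum of $\pm$ meridians around points of $D^{\mathrm{nod}}$. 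Since $D^{\mathrm{nod}}$ is connected (it is the smooth locus of an irreducible variety, up to a codimension-two subset), all positively oriented meridians are freely homotopic to one another. So $\gamma$ generates $H_1(U,\Z)$.

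\textbf{The relation.} Choose a generic line $\ell\subset\mathbb P^L$. It meets $D$ transversally in $\deg D=n(n-1)^{n-1}$ points, all lying in $D^{\mathrm{nod}}$. Then $\ell\setminus D$ is a sphere with $\deg D$ punctures, so the sum of the boundary loops vanishes. Each boundary loop is homologous to $\gamma$, which gives $\deg D\cdot\gamma=0$. Thus $H_1(U,\Z)$ is a quotient of $\Z_{n(n-1)^{n-1}}$.

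\textbf{Exact order.} To show $\gamma$ has order exactly $d=\deg D$, I would exhibit a surjection $H_1(U,\Z)\to\Z_d$ sending $\gamma\mapsto1$. The discriminant is given by a homogeneous polynomial $\Delta$ of degree $d$ on $H^0(\mathcal O(n))$; since $D$ is reduced and irreducible, $\Delta$ can be taken to vanish to order one along $D^{\mathrm{nod}}$. The map $[F]\mapsto \Delta(F)/\lambda(F)^d$, with $\lambda$ a fixed linear form, is not global. Instead, use the cyclic cover: $\{(F,t):t^d=\Delta(F)\}/\C^*$, or equivalently the class of the $\C^*$-bundle $\mathcal O(-1)|_U$ trivialized up to $d$-th roots by $\Delta$. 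This gives a homomorphism $\pi_1(U)\to\mu_d$ under which a meridian goes to a primitive root, because $\Delta$ vanishes to first order along $D^{\mathrm{nod}}$. Alternatively, one can quote the Alexander duality / Gysin computation: $H_1(\mathbb P^L\setminus D)\cong\Z/\deg D$ for any irreducible hypersurface $D$.

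The main obstacle is the generation step and the exact-order step. The generation step needs care about general position and the connectedness of $D^{\mathrm{nod}}$. For the exact order, the cleanest route is to cite the known result $H_1(\mathbb P^L\setminus D)\cong\Z/(\deg D)$ for irreducible $D$, and then insert the degree $n(n-1)^{n-1}$ from \eqref{eq:discriminantdegree}.
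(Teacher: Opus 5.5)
Your argument is correct, but it is organized differently from the paper's.

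The paper sets $M=\mathbb P^L\setminus(D\setminus D^{\mathrm{nod}})$ and notes that removing a set of real codimension at least $4$ leaves $H_1$ and $H_2$ unchanged. It identifies $H_2(M,U;\Z)\cong H_0(D^{\mathrm{nod}};\Z)\cong\Z$ by the Thom isomorphism; the generator is a small transverse disk with boundary $\gamma$. Everything is then read off from the single exact sequence $H_2(M)\to H_2(M,U)\to H_1(U)\to H_1(M)=0$, whose first map is multiplication by the intersection number $\deg D$ of a line with $D$.

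Your first two steps unpack the two ends of this sequence geometrically. The general-position disk argument is the surjectivity coming from $H_1(M)=0$, with connectedness of $D^{\mathrm{nod}}$ playing the role of $H_0(D^{\mathrm{nod}})=\Z$. The punctured generic line computes the image of $H_2(M)$.

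The real difference is the exact order. In the paper it comes for free from exactness at $H_2(M,U)$. You instead need an independent character, namely the $\mu_d$-covering $\{\Delta=1\}\to U$. Equivalently, take the winding number of the reduced discriminant polynomial $\Delta$ on the affine cone; it descends to $H_1(U)\to\Z/d$ because $\Delta(e^{i\theta}F)=e^{id\theta}\Delta(F)$, so the $\C^*$-fibre loop has winding $d$. This character sends $\gamma$ to a generator precisely because $\Delta$ vanishes to first order along $D^{\mathrm{nod}}$.

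The paper's route is more economical: one sequence gives generation, relation and exact order at once. Yours avoids the Thom isomorphism, gives normal generation of $\pi_1(U)$ by meridians, and exhibits explicitly the cyclic cover that detects $\gamma$.

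One small point: your parenthetical reason for the connectedness of $D^{\mathrm{nod}}$ is loose. The clean justification is that $D^{\mathrm{nod}}$ is a nonempty Zariski-open subset of the irreducible variety $D$ contained in its smooth locus, hence connected.
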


\begin{proof}
Let $M:=\mathbb P^L\setminus(D\setminus D^{\mathrm{nod}})$. The inclusion $M\hookrightarrow\mathbb P^L$ induces
\[
H_i(M;\Z)\simeq H_i(\mathbb P^L;\Z)\quad(i=1,2),\qquad
H_2(M,U;\Z)\simeq H_2(\mathbb P^L,U;\Z),
\]
by general position, since $\mathbb P^L\setminus M$ has real codimension at least $4$.

The Thom isomorphism for $(M,U)$ and the relative homology sequence give
\[
\underbrace{H_2(M;\Z)}_{\Z}
\xrightarrow{\ \cdot n(n-1)^{n-1}\ }
\underbrace{H_2(M,U;\Z)}_{\Z}
\longrightarrow H_1(U;\Z)
\longrightarrow H_1(M;\Z)=0.
\]
The relative generator is represented by a small oriented disk transverse to $D^{\mathrm{nod}}$, whose boundary is $\gamma$. The first map is multiplication by the intersection number of a projective line with $D$, namely
\[
\deg D=n(n-1)^{n-1}.
\]
Hence
\[
H_1(U;\Z)
\simeq
\Z_{n(n-1)^{n-1}}.
\]
\end{proof}

Near a point in $D^{\mathrm{nod}}$, write $(q,b')$ for local chart with $D=\{q=0\}$. We also assume that, for each fixed $b'$, $q$ parametrizes a transverse affine pencil. Write $F_{q,b'}$ for this polynomial lift and suppress fixed $b'$.

\begin{lemma}\label{lem:axis}
There is $A_q\in U(n)$, depending smoothly on $(q,b')$, such that
\begin{equation}\label{eq:polynomial}
 \begin{split}
 (F_q\circ A_q)(w,\xi)
 =a_qw^n+w^{n-2}Q_q(\xi)
    +\sum_{j=3}^n w^{n-j}P_{j,q}(\xi),\qquad
 z=(w,\xi)\in\mathbb C\times\mathbb C^{n-1}=\C^n.
 \end{split}
\end{equation}
Here $Q_q$ is quadratic with uniformly invertible Hessian, each
$P_{j,q}$ is homogeneous of degree $j$, all coefficients are bounded,
and $c|q|\leq|a_q|\leq C|q|$ for fixed positive constants $c,C$.
Moreover,
\begin{equation}\label{eq:gradient-global}
 |\partial F_q(z)|\geq c|q||z|^{n-1}.
\end{equation}
\end{lemma}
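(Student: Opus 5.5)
We will put the nodal point at $e_1=[1:0:\cdots:0]$ by a unitary change of coordinates and expand in the resulting coordinates. Fix a base point $(0,b_0')\in D^{\mathrm{nod}}$. The hypersurface $\{F_{0,b'}=0\}$ has a unique singular point $[x(b')]$, an ordinary double point. Because the singularity is nondegenerate, $x(b')$ is a smooth function of $b'$ by the implicit function theorem applied to $\partial F=0$ on the incidence variety $\mathcal I$. Normalize $|x(b')|=1$ and choose $A_{q}\in U(n)$, smooth in $(q,b')$ near the base point, with $A_q e_1=x(b')$. Extending a unit vector field smoothly to a unitary frame is possible locally. We take $A_q$ independent of $q$.

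Write $z=(w,\xi)$ and expand the degree-$n$ polynomial $G_q:=F_q\circ A_q$ in powers of $w$:
\[
G_q(w,\xi)=a_qw^n+w^{n-1}\ell_q(\xi)+w^{n-2}Q_q(\xi)+\sum_{j\ge3}w^{n-j}P_{j,q}(\xi).
\]
At $q=0$ the point $e_1$ is singular, so $\partial G_0(e_1)=0$. This gives $\partial_wG_0(e_1)=na_0=0$ and $\partial_\xi G_0(e_1)=\ell_0=0$. The ordinary double point condition says that the Hessian of the affine chart $G_0(1,\xi)$ at $\xi=0$ is invertible, which is exactly the invertibility of the Hessian of $Q_0$.

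The linear term $\ell_q$ is generally nonzero for $q\ne0$. To remove it, I would let the point move with $q$. The map $(x,q)\mapsto\partial_\xi G_q(x)$ has invertible $\xi$-derivative, so there is a smooth point $x(q,b')$ where $\partial_\xi$ vanishes in the rotated frame. Redefine $A_q$ so that $A_qe_1$ is this normalized point; then $\ell_q\equiv0$ and $Q_q$ still has uniformly invertible Hessian for small $|q|$.

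Next, $a_q=G_q(e_1)/1$ is smooth and vanishes at $q=0$. Transversality of the pencil to $D$ should give $\partial_q a_q|_{q=0}\neq0$. I would argue this through the local equation of the discriminant. With $\ell_q=0$, Euler's relation gives $\partial G_q(e_1)=(na_q,0)$, so $[F_q]\in D$ near $q=0$ if and only if $a_q=0$. Hence $a_q$ is a local defining function of $D$ along the pencil. Since $D$ is reduced and smooth at nodal points, $a_q=u\cdot q$ with $u$ a nonvanishing smooth function, which yields $c|q|\le|a_q|\le C|q|$. Boundedness of all coefficients is automatic on a compact neighborhood.

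The gradient bound \eqref{eq:gradient-global} is where I expect the main difficulty, and I would prove it by homogeneity. Since $|\partial F_q|=|\partial G_q\circ A_q^{-1}|$ with $A_q$ unitary, it suffices to work with $G_q$ on the unit sphere and show $|\partial G_q(z)|\ge c|q|$ for $|z|=1$. Split the sphere into two regions.

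\emph{Away from the node.} In a region $|\xi|\ge\delta|w|$ the sphere stays a fixed distance from $e_1$. There $\partial G_0\neq0$, because $e_1$ is the only singular point of $X_{F_0}$, and by Euler's relation any critical point of $G_0$ on the sphere projects to a singular point. Compactness and continuity then give $|\partial G_q|\ge c>c|q|$ for small $q$.

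\emph{Near the node.} Take $|\xi|\le\delta|w|$ and $|w|\sim1$, and write $\xi=w\eta$. Then
\[
\partial_\xi G_q=w^{n-1}\bigl(\nabla Q_q(\eta)+O(|\eta|^2)\bigr),\qquad \partial_w G_q=w^{n-1}\bigl(na_q+O(|\eta|^2)\bigr).
\]
The first identity shows $|\partial_\xi G_q|\gtrsim|\eta|$. If $|\eta|^2\ll|a_q|$, then $|\partial_wG_q|\gtrsim|a_q|\gtrsim|q|$. Otherwise $|\eta|\gtrsim|q|^{1/2}\ge|q|$. In either case $|\partial G_q|\ge c|q|$.

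Finally, I would check that $\delta$ and the constants can be chosen uniformly in $b'$ on a compact set. This should follow because all the estimates above are continuous in $b'$ and the Hessian of $Q_q$ is uniformly invertible there.
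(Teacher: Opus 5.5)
Your overall route matches the paper's. You move the base point with $q$ so that the $w^{n-1}$ term disappears, rotate unitarily, read off $a_q$ from Euler's identity, and prove the gradient bound on the unit sphere by splitting into a cone about the node and its complement. The last part is fine and essentially coincides with the paper's argument: your split by $|\eta|^2$ versus $|a_q|$ plays the role of the paper's split by $|\xi|$ versus $|q|$.

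\textbf{The main gap is in the step that removes $\ell_q$.} You apply the implicit function theorem to $(x,q)\mapsto\partial_\xi G_q(x)$ in the \emph{fixed} frame, then rotate so that the new first column is $e=x/|x|$. This does not give $\ell_q\equiv0$.
\begin{itemize}
\item For a unitary $A_1$ with $A_1e_1=e$ one has $\partial_z(G_q\circ A_1)(e_1)=A_1^{T}\partial_zG_q(e)$ and $A_1^{T}\bar e=e_1$. So the new linear term vanishes iff $\partial_zG_q(e)\in\C\,\bar e$.
\item At $x=(1,y)$ with $\partial_\xi G_q(x)=0$, Euler's identity gives $\partial_zG_q(x)=nG_q(x)\,e_1$. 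This lies in $\C\,\bar x$ only if $y=0$ or $G_q(x)=0$.
\item So generically a linear term of size about $|a_q|\,|y|=O(|q|^2)$ survives.
\end{itemize}
The condition you actually need is $\partial_zG_q(e)=nG_q(e)\,\bar e$ with $|e|=1$; equivalently, $e$ is a critical point of $|G_q|^2$ on the unit sphere. In the chart $e=(1,y)/(1+|y|^2)^{1/2}$, with $g_q(y)=G_q(1,y)$, it reads
\[
\partial_yg_q(y)-\frac{n\,g_q(y)}{1+|y|^2}\,\bar y=0 .
\]
This equation is not holomorphic in $y$, so you need the real implicit function theorem. Its linearization at $(q,y)=(0,0)$ is still the Hessian $\partial_y^2g_0(0)$ of $Q_0$, because the $\bar y$-term carries the factor $g_0(0)=0$. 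This is exactly the role of $\Phi$ in the paper's proof.

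If by ``in the rotated frame'' you meant the moving-frame map $(y,q)\mapsto[A(y)^{T}\partial_zG_q(e(y))]_\xi$, where $A(y)$ is unitary with first column $e(y)$, then that is the right equation. You must still check that the derivative of $A(y)$ drops out at the base point, which uses $\partial_zG_0(e_1)=0$.

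\textbf{A second, smaller gap is the lower bound $c|q|\le|a_q|$.} You infer it from the fact that $a_q$ vanishes along the pencil exactly at $q=0$. But in the fixed frame $a_q=G_q(e_q)$ with $e_q=e(y_q)$, and this is only smooth in $q$, since $e_q$ is not holomorphic in $q$. Its zero set does not control its vanishing order: $q^2$ and $|q|^2$ have the same zero set. Reducedness of $D$ concerns a holomorphic defining function, not $a_q$.

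What is needed is the first-order expansion. Since $G_0(e_1)=0$, $\partial_zG_0(e_1)=0$ and $e_q-e_1=O(|q|)$, one gets $a_q=\dot G(e_1)\,q+O(|q|^2)$ with $\dot G=\partial_qG_q|_{q=0}$. The tangent space of $D$ at a one-nodal hypersurface is the hyperplane of polynomials vanishing at the node, so transversality of the pencil to $D$ gives $\dot G(e_1)\neq0$.
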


\begin{proof}
Fix $b'$ and suppress it from the notation. Choose coordinates such
that the node of $F_0$ is $[1:0:\cdots:0]$, and set
\[
f_q(y)=F_q(1,y),\qquad
\Phi(q,y)=\partial_yf_q(y)
-\frac{nf_q(y)}{1+|y|^2}\,\bar y.
\]
Since
\[
\Phi(0,0)=0,
\qquad
D_y\Phi(0,0)=\partial_y^2f_0(0),
\]
the real implicit-function theorem gives a smooth solution
$y_q=O(|q|)$ of $\Phi(q,y_q)=0$. Set
\[
e_q=\frac{(1,y_q)}{(1+|y_q|^2)^{1/2}}.
\]
The equation $\Phi(q,y_q)=0$ and Euler's identity give
\be\label{eq: gradient}
\partial F_q(e_q)=nF_q(e_q)\,\overline{e_q}.
\ee
Choose $A_q\in U(n)$ smoothly whose first column is $e_q$, and write
$\widetilde F_q=F_q\circ A_q$. Then by \eqref{eq: gradient},
$\partial_\xi\widetilde F_q(1,0)=0$.
By homogeneity,
\[
\widetilde F_q(w,\xi)
=a_qw^n+w^{n-2}Q_q(\xi)
+\sum_{j=3}^n w^{n-j}P_{j,q}(\xi).
\]
The Hessian of $Q_0$ is nondegenerate. Hence, after shrinking the
parameter chart, the Hessians of $Q_q$ are uniformly invertible and
all coefficients are uniformly bounded.

Moreover, the transversality implies
\[
a_q=F_q(e_q)
=(1+|y_q|^2)^{-n/2}f_q(y_q)
=\lambda(b')q+O(|q|^2),
\qquad \lambda(b')\ne0,
\]
where the last assertion follows from transversality to the
discriminant. Thus
\[
c|q|\le |a_q|\le C|q|.
\]

On the unit sphere and in a sufficiently small cone
$|\xi|\le\delta$ about the $w$-axis, the normal form gives
\[
|\partial_\xi\widetilde F_q(w,\xi)|\ge c|\xi|,
\qquad
\left|
\partial_w\widetilde F_q(w,\xi)-na_qw^{n-1}
\right|
\le C|\xi|^2.
\]
If $|\xi|\ge|q|$, the first estimate gives $|\partial\widetilde F_q(w,\xi)|\ge c|q|$. If $|\xi|<|q|$, the second gives
\[
|\partial_w\widetilde F_q(w,\xi)|
\ge n|a_q||w|^{n-1}-C|\xi|^2
\ge c'|q|.
\]
Outside this cone, compactness and the absence of another projective critical point give a positive lower bound independent of $q$.
Since $A_q$ is unitary and $\partial F_q$ is homogeneous of degree
$n-1$, it follows that
\[
|\partial F_q(z)|\ge c|q|\,|z|^{n-1}.
\]
{Finally, by the real implicit-function theorem with parameters and a smooth unitary
completion, $y_{q,b'}$ and $A_{q,b'}$ may be chosen smoothly in $(q,b')$, uniformly for $b'$ in compact subsets.}
\end{proof}

For $A\in U(n)$, its pullback on forms is
\[
 A^*\bigl(f(z)\,dz^I\wedge \dd\bar z^J\bigr)
 =(f\circ A)(z)\,\dd(Az)^I\wedge \dd\overline{(Az)}^J.
\]
It is unitary on $L^2\Omega^{*,*}(\mathbb C^n)$, and for $\delta_F=\bar\p_F$ or $\bar\p_F^*$,
\[
 A_q^*\delta_{F_q}(A_q^*)^{-1}=\delta_{F_q\circ A_q},\qquad
 A_q^*\Delta_{F_q}(A_q^*)^{-1}=\Delta_{F_q\circ A_q}.
\]
It commutes with $N$ and the $\Z_n$ action, so the $\ph\qh$-graded heat trace is
unchanged. Hence, from now on, we may assume that\be \text{$F_q$ has the form given on the
right-hand side of \eqref{eq:polynomial}}.\ee

\subsection{Reduction of the problem}


Set
\[
u([F])=F_1^{\LG}(F)-F_1^{\mathrm{BCOV}}(X_F).
\]
By presentation invariance and Corollary~\ref{cor: harmonic}, $u$ is a single-valued real pluriharmonic function on $U$. Now we are going to show that $u$ is a constant.

{Since $U$ is the complement of the discriminant, it is enough to control $u$ near its generic nodal locus. Lemma \ref{lem:extension} below gives the
required removability criterion: on one nonempty open nodal chart, it suffices to prove
\[
\frac{1}{2\pi}\int_0^{2\pi}
|u(re^{i\theta},b')|\,d\theta
=o(r^{-1}).
\]

On a transverse nodal chart, the definition of $F_1^{\LG}$ gives
\[
u([F_q])
=-\frac12\cT^{\LG}(F_q)
-\frac{\chi(X)}{24}
 \log h_{F_q}(w_0,w_0)
-F_1^{\mathrm{BCOV}}(X_{F_q}).
\]
Lemma~\ref{lem:nodalvacuum} gives
\[
\log h_{F_q}(w_0,w_0)=O(1),
\]
whereas the nodal degeneration formula
\cite[Theorem~7.3]{EFM} gives
\begin{equation}\label{eq:cy-input}
F_1^{\mathrm{BCOV}}(X_{F_q})=O(|\log|q||).
\end{equation}
Both terms are $o(|q|^{-1})$. Therefore the only remaining estimate is the following.}

\begin{proposition}\label{prop:reduction of main theorem}
{As $q\to0$,
\[
\cT^{\LG}(F_q)=o(|q|^{-1}),
\]
uniformly in $\arg q$ and for the tangential parameter $b'$ in
compact subsets of the nodal locus.}
\end{proposition}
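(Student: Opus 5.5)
The plan is to show the stronger bound $\cT^{\LG}(F_q)=O(|\log|q||)$, uniformly in $\arg q$ and in $b'$ on compact sets. By the reduction above, we may assume $F_q$ is in the normal form \eqref{eq:polynomial}. Write $\Theta_q(s)=\Str(\ph\qh(e^{-s\Delta_{F_q}}-P_{F_q}))$ and split the Mellin transform at $s=1$:
\[
\cT^{\LG}(F_q)=-\FP_{w=0}\Big(\frac{d}{dw}\frac{1}{\Gamma(w)}\int_0^1 s^{w-1}\Theta_q(s)\,ds\Big)-\int_1^\infty \Theta_q(s)\,\frac{ds}{s}.
\]
The first piece is a finite combination of the small-time coefficients $a_\lambda(q)$ of Proposition~\ref{prop:familyheat}\textup{(ii)} and the integral of the remainder over $[0,1]$. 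The second piece is controlled by a spectral gap together with a trace bound. So I need three $q$-uniform ingredients: bounds on the $a_\lambda(q)$, a bound on $\Tr e^{-\Delta_{F_q}}$, and a lower bound on the first nonzero eigenvalue $\lambda_1(q)$.

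\textbf{Step 1 (phase-space bounds).} From the normal form and the cone argument in Lemma~\ref{lem:axis} I expect the refined estimate
\[
|\partial F_q(w,\xi)|^2\ \ge\ c\big(|w|^{2n-4}|\xi|^2+|q|^2|w|^{2n-2}\big)\quad\text{near the $w$-axis,}
\]
while $|\partial F_q|^2\ge c|z|^{2n-2}$ holds away from it. The Brownian-bridge representation of Lemma~\ref{lem:matrixremainder} gives the pointwise bounds
\[
|a_\lambda(q)|,\ \Tr e^{-s\Delta_{F_q}}\ \lesssim\ \int e^{-c s^{\gamma}|\partial F_q|^2}\,(\text{polynomial weights}).
\]
Integrating out $\xi$ first produces $|w|^{-(2n-4)(n-1)}$ times a polynomial weight. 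Since $(2n-4)(n-1)>2$ for $n\ge4$, the $w$-integral converges at infinity even when $q=0$; the region $|w|\lesssim 1$ is harmless. Hence these quantities are bounded, up to at most $O(|\log|q||)$. Here the hypothesis $n\ge4$ should enter.

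\textbf{Step 2 (adiabatic reduction along the $w$-axis).} On the region $|w|\gg1$, the transverse operator in $\xi$ is the Witten Laplacian of the nondegenerate quadratic $w^{n-2}Q_q(\xi)$. It has one-dimensional kernel, a harmonic-oscillator ground state, and gap $\gtrsim|w|^{n-2}$. Projecting onto this ground state should reduce $\Delta_{F_q}$, up to errors decaying in $|w|$, to a one-variable twisted Laplacian for $a_qw^n$ tensored with the transverse vacuum. I would compare the heat traces by Duhamel against this model. For the one-variable model, the scaling \eqref{eq:zetascaling} in one variable shows that its torsion equals a $q$-independent constant plus a multiple of $\log|a_q|^2=O(|\log|q||)$. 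This model also gives the gap $\lambda_1(q)\gtrsim|q|^{2/n}$. Combining this gap with the bounded trace from Step~1 and with the constancy of the graded harmonic ranks yields
\[
\Big|\int_1^\infty\Theta_q\,\frac{ds}{s}\Big|\lesssim \int_1^\infty e^{-c|q|^{2/n}s}\,\frac{ds}{s}=O(|\log|q||).
\]

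\textbf{Main obstacle.} The hard step is Step~2: making the decoupling uniform as $q\to0$. This requires uniform heat-kernel and remainder estimates on the region of size $|w|\sim|q|^{-1/n}$, where the one-variable model lives, with errors that stay summable against the weight $|w|^{-(2n-4)(n-1)}$. I would first try to handle it by gluing: use the scaled Brownian-bridge parametrix away from the axis, and the fibered model near the axis with an IMS localization. The localization error is of order $|w|^{-2}$, which is integrable after the transverse Gaussian factor. Once Steps~1 and~2 are in place, $\cT^{\LG}(F_q)=O(|\log|q||)=o(|q|^{-1})$ follows.
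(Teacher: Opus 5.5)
\textbf{There is a genuine gap, and it sits in Step~1.} The failure of Step~1 also sinks the plan of bounding the two pieces of your split at $s=1$ separately.

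Your ``polynomial weights'' are not of lower order. By the computation of Proposition~\ref{prop:smalltime}, with the insertion $\tfrac12(N-n)^2$ of Lemma~\ref{lem:N2reduction}, the coefficient $a_{-n/(n-1)}(q)$ of the only negative power $s^{-n/(n-1)}$ in $\Theta_q$ is a nonzero dimensional constant times
\[
\int_{\C^n}e^{-2|\partial_zF_q|^2}\,\Tr\operatorname{adj}\bigl((\partial_z^2F_q)\overline{\partial_z^2F_q}^{\,T}\bigr)\,d^{2n}z .
\]
The integrand here is nonnegative. Near the axis, the principal minor obtained by deleting the $w$-row and $w$-column is at least $|\det\partial_\xi^2F_q|^2\ge c|w|^{2(n-1)(n-2)}$. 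This exactly cancels the transverse volume factor $|w|^{-(2n-4)(n-1)}$ you obtain.

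On the region $\{R\le|w|\le c|q|^{-1/(n-1)},\ |\xi|\le|w|^{-(n-2)}\}$ one has $|\partial_zF_q|\le C$. Hence $|a_{-n/(n-1)}(q)|\gtrsim|q|^{-2/(n-1)}$, and your small-time piece contains $\frac{n-1}{n}a_{-n/(n-1)}(q)$, which is of that size.

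Your own one-variable model shows the same for the trace. The operator $\Delta_{aw^n}$ is unitarily equivalent to $|a|^{2/n}\Delta_{w^n}$, which has $\asymp|a|^{-2/(n-1)}$ eigenvalues below $1$. So $\Tr e^{-\Delta_{F_q}}$ is not bounded, and the bound $\int_1^\infty e^{-c|q|^{2/n}s}\,ds/s$ does not follow.

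Worse, in the model the two pieces are each of size $|a|^{-2/(n-1)}$ and cancel only in their sum. There $\Theta(s)=A(|a|^{2/n}s)$, where $A>0$ is the scalar heat trace with $A(\sigma)\sim b\,\sigma^{-n/(n-1)}$, $b>0$. Thus $\int_1^\infty\Theta\,ds/s=\int_{|a|^{2/n}}^\infty A(\sigma)\,d\sigma/\sigma\gtrsim|a|^{-2/(n-1)}$, and this is cancelled only by the coefficient term of the small-time piece. Consequently, no argument that estimates the pieces in absolute value at a fixed split time can give $O(|\log|q||)$.

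What is really missing is a heat-trace remainder estimate on $(0,1]$ that is uniform, or at least controlled, as $q\to0$. Lemma~\ref{lem:matrixremainder} only gives constants on compact subsets of the smooth locus. Step~2 would have to supply an adiabatic gluing theorem for the $\zeta$-regularized torsion, uniform on the region $|w|\lesssim|q|^{-1/(n-1)}$, and you only sketch this. With such control, even crude bounds of order $|q|^{-2/(n-1)}\log(1/|q|)$ would already be $o(|q|^{-1})$ for $n\ge4$.

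\textbf{How the paper avoids this.} It never needs uniform control near the degenerating axis. It differentiates in $|q|$ and uses the vector field $V_q$ on the nodal cone to split $|q|\partial_{|q|}\Delta_{F_q}$ into three parts:
\begin{enumerate}
\item a commutator with $\mathcal L_{\chi(V_q+\bar V_q)}$, which is invisible in the supertrace;
\item a $\bar\partial_{F_q}$-exact term, whose net contribution is only the constant coefficient of $\Str(B(K)(N-n)e^{-t\Delta_{F_q}})$; this is bounded via the cancellation of the $\Tr(\partial^2F_q(\partial^2F_q)^*)$ terms and the degree formula;
\item an error $\mathcal E_q$ supported off the cone, where, after modifying $F_q$ inside the cone, Dai--Yan estimates apply uniformly.
\end{enumerate}
Together with the gap $c|q|^{4/n}$, this yields $\cT^{\LG}(F_q)=O(|q|^{-2/n})$.
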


{We first prove the two auxiliary lemmas used in this reduction, and then turn to Proposition~\ref{prop:reduction of main theorem}.}

\begin{lemma}\label{lem:extension}
Let $u$ be a real, single-valued pluriharmonic function on $U$.
Suppose that on one nonempty open nodal chart, for each fixed $b'$,
\begin{equation}\label{eq:weak-angular}
\frac1{2\pi}\int_0^{2\pi}|u(re^{i\theta},b')|\,d\theta
          =o(r^{-1})\qquad(r \to0).
\end{equation}
Then $u$ is constant on $U$.
\end{lemma}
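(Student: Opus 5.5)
We will show that $u$ extends across the generic nodal locus as a pluriharmonic function on $M=\mathbb P^L\setminus(D\setminus D^{\mathrm{nod}})$. We then extend it over the remaining codimension-$\ge2$ set, obtaining a pluriharmonic function on the compact manifold $\mathbb P^L$, which is constant. The work is in the local extension at a nodal point. Since $u$ is real pluriharmonic and single-valued, on a small polydisk chart $\Delta^*\times V$ (with $D=\{q=0\}$) we write $u=\Re g+ c\log|q|$. Here $g$ is holomorphic on $\Delta^*\times V$, and $c$ is a real constant; it is the period of $d^cu$ around $\gamma$, which is locally constant in $b'$.

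\textbf{Step 1 (Laurent analysis for fixed $b'$).} Fix $b'$ and expand $g(q,b')=\sum_{k\in\Z} g_k(b')q^k$. The Fourier coefficients of $\theta\mapsto u(re^{i\theta},b')$ are $\tfrac12(g_kr^k)$ and $\tfrac12(\overline{g_{-k}}r^{-k})$ for $k\neq0$, and $\Re g_0+c\log r$ for $k=0$. Each is bounded in absolute value by $\frac1{2\pi}\int_0^{2\pi}|u|\,d\theta=o(r^{-1})$. For $k\ge1$, this forces $g_{-k}r^{-k}=o(r^{-1})$, so $g_{-k}=0$. Hence $g$ is holomorphic across $q=0$ for every $b'$ in the open nodal chart where \eqref{eq:weak-angular} holds. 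By Hartogs/Riemann extension (the Laurent coefficients $g_{-k}(b')$ are holomorphic in $b'$ and vanish), $g$ extends holomorphically to $\Delta\times V$.

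\textbf{Step 2 (killing the log term).} The remaining obstruction is $c\log|q|$, and \eqref{eq:weak-angular} does not exclude it. Here we use Proposition~\ref{prop:finiteholonomy}. The harmonic conjugate of $u$ is defined up to periods, and the period homomorphism $H_1(U,\Z)\to\R$, $\sigma\mapsto\int_\sigma d^cu$, must vanish on the torsion group $\Z_{n(n-1)^{n-1}}$. Since $\gamma$ generates this group, $\int_\gamma d^cu=2\pi c=0$. Thus $u=\Re g$ extends pluriharmonically across $q=0$ on this chart.

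\textbf{Step 3 (globalization).} By Step 2, $c=0$ is a global statement about $\gamma$, so every nodal chart has vanishing log coefficient. Now we need the extension on \emph{every} nodal chart, not just the given one. To get it, we pass to the pluriharmonic function on the universal cover-free setting: $du$ is globally exact, so $u=\Re G$ with $G$ holomorphic on $U$ (single-valued because all periods vanish). Then $e^{G}$, or $G$ itself, is holomorphic on $U$ and holomorphic across the nodal locus on one open chart. The obstruction to extension is the pole/essential order along the irreducible $D^{\mathrm{nod}}$. We would argue that $D^{\mathrm{nod}}$ is connected, since it is a dense open subset of the irreducible $D$. The set of points of $D^{\mathrm{nod}}$ near which $G$ extends holomorphically is open, and we claim it is closed: the Laurent coefficients $G_{-k}(b')$ along local slices are holomorphic in $b'$ and vanish on an open set, hence vanish identically by the identity theorem along the connected manifold $D^{\mathrm{nod}}$. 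So $G$ extends to $M$. Since $\mathbb P^L\setminus M$ has codimension $\ge2$, Hartogs extends $G$ to $\mathbb P^L$, where it is constant. Hence $u$ is constant.

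The main obstacle is Step 3's propagation of the removability from one chart to all of $D^{\mathrm{nod}}$, which requires making the identity-theorem argument rigorous in the normal-crossing-free but only locally defined slices. We would handle this by covering $D^{\mathrm{nod}}$ by charts with overlapping $q$-coordinates and checking that vanishing of the principal part is independent of the choice of transverse coordinate.
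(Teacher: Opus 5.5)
Your proposal is correct and is essentially the paper's argument. Finiteness of $H_1(U,\mathbb Z)$ makes $u=\operatorname{Re}G$ for a single-valued holomorphic $G$; the Fourier identity kills the negative Laurent coefficients on the given chart; the identity theorem propagates this along the connected $D^{\mathrm{nod}}$; and Hartogs together with compactness of $\mathbb P^L$ finishes. The only difference in route is that you first allow a local $c\log|q|$ term and then remove it using the torsion of $H_1$, whereas the paper invokes $H_1(U,\mathbb Z)\otimes\mathbb C=0$ at the outset.

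One small repair is needed in Step~1. The $(-k)$-th Fourier coefficient is the single combined quantity $\tfrac12\bigl(g_{-k}r^{-k}+\overline{g_k}\,r^{k}\bigr)$, not two separate coefficients that can each be bounded. Bounding this combined coefficient by $o(r^{-1})$ still gives $|g_{-k}|\le o(r^{k-1})+|g_k|r^{2k}\to0$, exactly as in the paper's identity \eqref{eq:laurent}.
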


\begin{proof}

The form $2\partial u$ is closed and holomorphic. Since  $H_1(U,\mathbb Z)\otimes\C=0$, there is a single-valued holomorphic function $g$
on $U$ with $d g=2\partial u$. Adding a real constant to $g$ gives
$\Re g=u$.

On a punctured product chart, expand
$g(q,b')=\sum_{\ell\in\mathbb Z}c_\ell(b')q^\ell$.
The coefficients, defined by contour integrals in $q$, are holomorphic
in $b'$. For every integer $k\geq1$,
\begin{equation}\label{eq:laurent}
 c_{-k}(b')+r^{2k}\overline{c_k(b')}
   =\frac{2r^k}{2\pi}\int_0^{2\pi} u(re^{i\theta},b')e^{ik\theta}\,d\theta.
\end{equation}
The right-hand side is $o(r^{k-1})$ by \eqref{eq:weak-angular}; it tends to zero even when $k=1$. Therefore $c_{-k}=0$ for all $k\geq1$, and $g$ extends across the chosen open subset of $D$.

Note that $D^{\mathrm{nod}}$ is connected, any two of its points can be joined by a path cover by a finite chain of overlapping product
charts. In the next chart the negative Laurent coefficients vanish on an open set of tangential parameters supplied by the previous chart. Thus they vanish on the whole connected
tangential charts. Repeating this argument extends $g$ across $D^{\mathrm{nod}}$.
Note that $D\setminus D^{\mathrm{nod}}$ has codimension at least two in
$\mathbb P(H^0(\mathbb P^{n-1},\mathcal O(n)))$, so Hartogs extension gives a holomorphic function on all of $\mathbb P(H^0(\mathbb P^{n-1},\mathcal O(n)))$. Compactness makes this function constant.
\end{proof}
We also have:
\begin{lemma}
\label{lem:nodalvacuum}
On every transverse one-node chart $(q,b')$,
\begin{equation}\label{eq:vacuum-bounded}
 0<c\leq h_{F_q}(w_0,w_0)\leq C<\infty
\end{equation}
for small $q$, uniformly on compact tangential sets.
\end{lemma}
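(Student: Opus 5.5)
The plan is to route the estimate through the Calabi--Yau side, using \eqref{eq:LGresiduemetric}:
\[
h_{F_q}(w_0,w_0)=c_n\|\Omega_{F_q}\|_{\mathrm{CY}}^2 .
\]
This reduces the lemma to showing that the $L^2$ norm of the Poincar\'e residue form $\Omega_{F_q}$ on $X_{F_q}$ stays bounded above and below as $q\to0$. Working on the CY side avoids any analysis of the degenerating twisted Laplacian.

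For the upper bound, I will work in the normalized coordinates of Lemma~\ref{lem:axis}. In the affine chart $w=1$, the residue is $\Omega_{F_q}=d\xi_1\wedge\cdots\wedge d\xi_{n-1}/\partial_{\xi_k}f_q$ on the region where $\partial_{\xi_k}f_q\neq0$. Away from a fixed neighborhood of the node $[1:0]$, the family $X_{F_q}$ is smooth and converges smoothly to $X_{F_0}$. There $\Omega_{F_q}$ converges uniformly, so that part of the integral is bounded above and below.

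Near the node, the local equation is
\[
f_q(\xi)=a_q+Q_q(\xi)+O(|\xi|^3), \qquad |a_q|\sim|q|.
\]
After a holomorphic Morse change of coordinates, which depends smoothly on the parameters, this becomes $\sum_{k=1}^{n-1}\eta_k^2=-a_q$ up to a bounded nonvanishing Jacobian factor. The local contribution is then, up to bounded factors,
\[
\int_{\{\sum\eta_k^2=\epsilon,\ |\eta|<\delta\}}\Bigl|\frac{d\eta}{d\sigma}\Bigr|^2, \qquad \epsilon=-a_q,
\]
where $d\eta/d\sigma$ denotes the residue of $d\eta_1\wedge\cdots\wedge d\eta_{n-1}$ along the quadric $\sigma:=\sum\eta_k^2=\epsilon$. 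Rescaling $\eta=\sqrt{\epsilon}\,\tilde\eta$ shows this form has norm $|\epsilon|^{(n-3)/2}$ times the norm of the standard residue form on the fixed affine quadric $\{\sum\tilde\eta_k^2=1\}$, integrated over the region $|\tilde\eta|<\delta/\sqrt{|\epsilon|}$.

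On that quadric the residue form decays at infinity, and in real dimension $2(n-2)$ the integral of its square norm over $|\tilde\eta|<R$ grows like $R^{2(n-2)-2(n-2)+\ldots}$. A cleaner way to say this is that the local integrand is comparable to $|\eta|^{-2}\,d\mathrm{vol}$ on a $(n-2)$-dimensional complex cone. This is integrable near $0$ exactly when $2(n-2)>2$, i.e.\ $n\ge4$. That yields a bound uniform in $\epsilon$, and also shows convergence to $\|\Omega_{F_0}\|^2$ on the nodal fiber.

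The lower bound is immediate: the part of the integral away from the node already converges to a positive number.

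Uniformity on compact tangential sets follows because every constant above depends continuously on $b'$. The main obstacle is the local dominated-convergence estimate near the vanishing cycle. Concretely, I need $|\Omega_{F_q}|^2\lesssim|\eta|^{-2}$ uniformly in $q$ on the smoothing quadrics. I would prove this by writing $\Omega$ as $d\eta_2\wedge\cdots/(2\eta_1)$ on the sector where $|\eta_1|\ge|\eta|/\sqrt{n}$. The restriction $n\ge4$ is exactly what makes the integral converge. This is the classical fact that nodal degenerations are at finite Weil--Petersson distance in dimension $\ge2$, and it is consistent with \cite[Theorem~7.3]{EFM}.
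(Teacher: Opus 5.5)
Your proposal is correct and takes essentially the same route as the paper. Both reduce to the residue norm via \eqref{eq:LGresiduemetric}, put the node in holomorphic Morse form $\sum y_j^2=t$ with $|t|\sim|q|$, use the scaling $S_r^*\rho_t=r^{n-3}\rho_{t/r^2}$ so that the local contribution is controlled by powers $r^{2n-6}$ (this is where $n\ge4$ enters), and obtain both bounds away from the node by smooth convergence. The only difference is bookkeeping: the paper sums dyadic annuli between $r_0\sim|q|^{1/2}$ and $\delta$, while your sector/graph-coordinate bound $|\rho_t|^2\lesssim|\eta|^{-2}$ integrates $|\eta'|^{-2}$ over a ball in $\mathbb C^{n-2}$ directly, which also makes your garbled growth-rate sentence unnecessary.
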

\begin{proof}
Fix a compact set $K$ of tangential parameters $b'$.
All constants below are uniform for $b'\in K$ and sufficiently
small nonzero $q$. By the identification of the vacuum norm
with the residue norm, it suffices to prove uniform upper and
positive lower bounds for
\[
\int_{X_{F_q}}
i^{(n-2)^2}\Omega_{F_q}\wedge\overline{\Omega_{F_q}}.
\]

By the holomorphic Morse lemma with parameters, there are local
coordinates $y=(y_1,\ldots,y_{n-1})$ near the node in which
\[
X_{F_q}=\Big\{\sum_{j=1}^{n-1}y_j^2=t(q,b')\Big\},
\qquad
t(q,b')=c_{b'}q+O(q^2),
\qquad c_{b'}\ne0.
\]
After shrinking the parameter chart, we have
\[
c_0|q|\le |t(q,b')|\le C_0|q|
\]
for fixed positive constants $c_0,C_0$.
Choose $\delta>0$ so that these coordinates are defined on
$B_{2\delta}=\{|y|<2\delta\}$ uniformly in the parameters.
Writing
\[
Q_t=\Big\{y\in\mathbb C^{n-1}:
                 \sum_{j=1}^{n-1}y_j^2=t\Big\},
\qquad
\rho_t=
\operatorname{Res}_{Q_t}
\frac{dy_1\wedge\cdots\wedge dy_{n-1}}
     {\sum_{j=1}^{n-1}y_j^2-t},
\]
we have on this coordinate neighborhood
\[
\Omega_{F_q}=a(q,b',y)\rho_t,
\qquad\text{for some}\quad
0<c\le |a(q,b',y)|\le C.
\]

For the dilation $S_r(x)=rx$, the residue formula gives
\[
S_r^*\rho_t=r^{n-3}\rho_{t/r^2}.
\]
Fix $A>0$ sufficiently large that $C_0/A^2<1/2$, and set
\[
r_0=A|q|^{1/2}.
\]
We may assume $r_0<\delta$.

First consider the inner region $|y|\le r_0$.
Under $y=r_0x$, the rescaled parameter satisfies
\[
\frac{c_0}{A^2}
\le\left|\frac{t}{r_0^2}\right|
\le\frac{C_0}{A^2}.
\]
Thus it ranges over a compact set disjoint from zero.
Moreover, on $Q_{t/r_0^2}$,
\[
|x|^2
=\Big|\sum_{j=1}^{n-1}x_j^2\Big|
=\left|\frac{t}{r_0^2}\right|
\ge\frac{c_0}{A^2}.
\]
Consequently, the defining gradient $2x$ is uniformly bounded
away from zero. Thus
\[
\int_{Q_{t/r_0^2}\cap\{|x|\le1\}}
i^{(n-2)^2}\rho_{t/r_0^2}
\wedge\overline{\rho_{t/r_0^2}}
\le C.
\]
Using the scaling identity, we obtain
\[
\int_{X_{F_q}\cap B_{r_0}}
i^{(n-2)^2}\Omega_{F_q}\wedge\overline{\Omega_{F_q}}
\le C r_0^{2n-6}
\le C|q|^{n-3}.
\]

Next, let $r_0\le r\le\delta$.
Similarly,
\[
\int_{X_{F_q}\cap\{r\le|y|\le2r\}}
i^{(n-2)^2}\Omega_{F_q}\wedge\overline{\Omega_{F_q}}
\le Cr^{2n-6}.
\]

Choose $J\ge0$ so that
\[
2^Jr_0\le\delta<2^{J+1}r_0,
\qquad r_j=2^jr_0.
\]
The  ball $B_{r_0}$ and the annuli
$\{r_j\le|y|\le2r_j\}$, $0\le j\le J$, cover $B_\delta$.
Since $n\ge4$, we have
\[
\sum_{j=0}^Jr_j^{2n-6}
\le
\frac{r_J^{2n-6}}{1-2^{-(2n-6)}}
\le C\delta^{2n-6}.
\]
Thus
\[
\int_{X_{F_q}\cap B_\delta}
i^{(n-2)^2}\Omega_{F_q}\wedge\overline{\Omega_{F_q}}
\le C\bigl(|q|^{n-3}+\delta^{2n-6}\bigr)
\le C.
\]

Away from $B_\delta$, the family remains smooth at $q=0$,
and the residue forms converge smoothly. This gives a uniform
upper bound for the integral over the complement of $B_\delta$.
 The integral over the complement of $B_\delta$ also gives a uniform lower bound.
\end{proof}

\subsection{Proof of \Cref{prop:reduction of main theorem}}

In this subsection, $C,c,\ldots$ denote constants that are, in particular,
independent of $q$. Their values may vary from one occurrence to another.
We fix a sufficiently small constant $q_0>0$ and assume that
\[
0< |q|\leq q_0.
\]

First, we establish the following spectral gap.

{\begin{lemma}\label{lem:spectral-gap}
Every nonzero eigenvalue $\lambda$ of $\Delta_{F_q}$ satisfies
\be\label{eq: spectral gap}
\lambda\ge c|q|^{4/n}.
\ee
\end{lemma}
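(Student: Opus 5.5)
We will prove the gap by a semiclassical Agmon/IMS localization argument for the Schr\"odinger-type operator
\[
\Delta_{F_q}=\Delta_{\bar\partial}+|\partial F_q|^2+L_{F_q},
\]
combined with a supersymmetric (Hodge-theoretic) reduction near the degenerating direction. The guiding heuristic comes from the normal form \eqref{eq:polynomial}. In the $\xi$-directions the potential is nondegenerate uniformly in $q$. In the $w$-direction the model is the one-variable LG potential $a_qw^n$. By the scaling \eqref{eq:zetascaling} (the dilation conjugating $\Delta_{aF}$ to $|a|^{2/n}\Delta_F$), the one-variable operator $\Delta_{a_qw^n}$ is unitarily equivalent to $|a_q|^{2/n}\Delta_{w^n}$. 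Hence its nonzero spectrum is bounded below by $c|a_q|^{2/n}\ge c|q|^{2/n}$. The exponent $4/n$ in the statement is weaker than this, which leaves room for losses from the cross terms $w^{n-j}P_{j,q}(\xi)$ and from the localization errors.

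\textbf{Step 1: Reduction to a lower bound on the complement of the kernel.} Since $\Delta_{F_q}=\bar\partial_{F_q}\bar\partial_{F_q}^*+\bar\partial_{F_q}^*\bar\partial_{F_q}$, it suffices to show
\[
\|\bar\partial_{F_q}u\|^2+\|\bar\partial_{F_q}^*u\|^2\ge c|q|^{4/n}\|u\|^2
\]
for $u\perp\cH_{F_q}$. By Proposition~\ref{prop:familyheat}(iii) the kernel is concentrated in degree $n$, and the spectrum away from degree $n$ coincides with the spectrum of $\bar\partial_{F_q}$ on exact forms. It is therefore enough to bound the quadratic form on forms of degree $\ne n$, and on $\operatorname{im}\bar\partial_{F_q}^*$ in degree $n$.

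\textbf{Step 2: Partition of unity adapted to the scale.} Take a homogeneous (degree-$0$) partition of unity on $\C^n\setminus\{0\}$ splitting a narrow cone $\{|\xi|\le\delta|w|\}$ around the $w$-axis from its complement, and combine it with a radial cutoff at radius $R_q$. The natural choice is $R_q\sim|q|^{-1/n}$, the length scale of $a_qw^n$. Outside the cone, compactness gives $|\partial F_q|\ge c|z|^{n-1}$ with $c$ independent of $q$. There the Witten operator dominates through $|\partial F_q|^2-\|L_{F_q}\|\ge c|z|^{2n-2}-C|z|^{n-2}$, while the IMS error $\sum|\nabla\chi_k|^2$ is $O(|z|^{-2})$. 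The radial piece with $|z|\lesssim R$, for large fixed $R$, is handled separately below. On the large-$|z|$ part the potential bound \eqref{eq:gradient-global}, $|\partial F_q|^2\ge c|q|^2|z|^{2n-2}$, exceeds the Hessian term $\|L_{F_q}\|\le C|z|^{n-2}$ exactly when $|z|\gtrsim|q|^{-2/n}$. At that scale the localization error is $O(|q|^{4/n})$, and this is where I expect the exponent $4/n$ to come from.

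\textbf{Step 3: The inner region.} For $|z|\lesssim|q|^{-2/n}$ inside the cone, I would compare with the product model $a_qw^n+Q_q(\xi)$, treating the remaining terms as perturbations. The rescaling $w\mapsto|q|^{-1/n}w$ puts the $w$-factor at unit scale. The $\xi$-factor, being quadratic after factoring out $w^{n-2}$, has a nondegenerate transverse Hessian of size $|w|^{n-2}$, so the transverse harmonic-oscillator (Witten) gap is $\gtrsim|w|^{n-2}$, which is large away from a compact region. The lowest transverse mode then reduces the problem to an effective one-variable LG operator in $w$, whose gap is $\gtrsim|q|^{2/n}$ by the scaling argument above.

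\textbf{Main obstacle.} The hardest part is controlling the off-diagonal couplings between the transverse ground state and the excited states, coming from the terms $w^{n-j}P_{j,q}(\xi)$ and from $w$-dependence of the transverse frame. I would try a Feshbach/Schur-complement estimate: the coupling is $O(|w|^{\cdot}|\xi|^2)$ while the transverse gap is $\gtrsim|w|^{n-2}$, and I would verify that the resulting loss is at most a factor $|q|^{2/n}$, consistent with the target $|q|^{4/n}$. If this proves too delicate, the fallback is a cruder argument. On each region, use the bound $\langle\Delta_{F_q}u,u\rangle\ge\langle(|\partial F_q|^2-\|L_{F_q}\|)u,u\rangle$ where it is positive. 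In the remaining bounded region $|z|\lesssim|q|^{-2/n}$, use norm-resolvent continuity as $q\to0$ together with the constancy of the kernel dimension (Proposition~\ref{prop:familyheat}(iii)) to exclude small eigenvalues.
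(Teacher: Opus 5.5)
\textbf{There is a genuine gap.} All of the difficulty sits in the inner region $\{|z|\lesssim|q|^{-2/n}\}$, and you never actually control it.

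Your outer estimate is fine. It is the confining bound $\langle\Delta_{F_q}u,u\rangle\ge\||z|u\|^2$ for $u$ supported in $|z|\ge R|q|^{-2/n}$, which the paper also uses. But IMS localization only transfers lower bounds from the pieces to the whole. You would therefore still need a bound $\gtrsim|q|^{4/n}$ for the inner piece orthogonal to the approximate kernel, and Step~3 does not supply it:
\begin{itemize}
\item The Born--Oppenheimer/Feshbach reduction is only announced. The transverse gap $\sim|w|^{n-2}$ collapses as $w\to0$, so the reduction to an effective operator in $w$ cannot be uniform near the origin.
\item You do not explain how the harmonic forms split between the origin and the $w$-axis, or why the couplings cost at most $|q|^{2/n}$.
\item The fallback fails outright. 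The region $|z|\lesssim|q|^{-2/n}$ is not uniformly bounded, and $F_0$ is critical along the whole $w$-axis, so $0$ lies in the essential spectrum of $\Delta_{F_0}$. Norm-resolvent continuity and constancy of the kernel dimension give a gap for each fixed $q\ne0$, but no rate as $q\to0$, precisely because the gap really closes there.
\end{itemize}

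\textbf{The missing idea is that no inner analysis is needed.} The paper couples the crude confinement with a supersymmetric homotopy. For $\theta=-\frac i2\sum_j\bar z_j\,dz_j\wedge$ and $\omega=-\frac{\sqrt{-1}}2\sum_j d\bar z_j\wedge dz_j$ one has
\[
[\bar\partial_{F_q},\theta]=\omega\wedge,\qquad [\Delta_{F_q},\omega\wedge]=0,\qquad [(\omega\wedge)^*,\omega\wedge]=n-N .
\]
Since $\omega\wedge u$ is again an eigenform and $\theta$ has linear growth, an eigenform $u$ of degree $k<n$ with eigenvalue $\lambda$ satisfies
\[
(n-k)\|u\|^2\le\|\omega\wedge u\|^2\le C\sqrt\lambda\,\||z|u\|\,\|u\|.
\]
The case $k>n$ is handled the same way using adjoints.

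A two-piece IMS argument with the cover $|z|<2R|q|^{-2/n}$, $|z|>R|q|^{-2/n}$ supplies the other half. It uses only the trivial bound $|z|^2\le 4R^2|q|^{-4/n}$ on the inner piece, and gives $\||z|u\|^2\le(\lambda+C|q|^{-4/n})\|u\|^2$. Together these force $\lambda\ge c|q|^{4/n}$ in every degree $k\ne n$. A nonzero eigenvalue in degree $n$ is carried to degree $n\pm1$ by $\bar\partial_{F_q}$ or $\bar\partial_{F_q}^*$, which finishes the proof.

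So the exponent $4/n$ comes from this uncertainty-type inequality applied to states confined at scale $|q|^{-2/n}$. It does not come from the IMS error at that scale, as you guessed.
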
}

\begin{proof}
We can check directly that 
\begin{equation}\label{eq: gradient Fq}
\begin{gathered}
 |\partial_zF_q(z)|\ge c|q||z|^{n-1}\quad\text{and}\quad
  |\partial^2 F_q(z)|\le C|z|^{n-2}.
\end{gathered}
\end{equation}
Since $n\ge4$ and $|q|<1$, \eqref{eq: gradient Fq} implies, for a sufficiently
large fixed $R$, as long as $|z|\ge R|q|^{-2/n},$
\[
 2|\partial_zF_q(z)|^2-2|L(F_q''(z))|
 \ge c|q|^2|z|^{2n-2}-C|z|^{n-2}\ge |z|^2.
\]

Consequently, if $ \operatorname{supp}u\subset\{|z|\ge R|q|^{-2/n}\}$,
\[
 \langle\Delta_{F_q}u,u\rangle\ge\||z|u\|^2,
 \qquad.
\]

Apply the IMS localization argument (c.f. \cite[\S 3.1]{CFKS1987}) associated with the cover $|z|<2R|q|^{-2/n}$ and
$|z|> R|q|^{-2/n}$, we have
\begin{equation}\label{eq: gap z square}
 \||z|u\|^2\le\langle\Delta_{F_q}u,u\rangle
                         +C|q|^{-4/n}\|u\|^2.
\end{equation}
The K\"ahler identities give
\begin{equation}\label{eq: gap algebra}
\begin{gathered}
 \Big[\bar{\partial}_{F_q},
       -\frac i2\sum_j\bar z_j\,dz_j\wedge\Big]
       =\omega\wedge,\\
 [\Delta_{F_q},\omega\wedge]=0,
 \qquad [ (\omega\wedge)^*,\omega\wedge]=n-N,
\end{gathered}
\end{equation}
where $\omega=:-\frac {\sqrt{-1}}2\sum_jd\bar z_j\wedge dz_j$ and $[\cdot,\cdot]$ is the super-commutator.
\smallskip

Let $u$ be a degree-$k$ eigenform with eigenvalue $\lambda.$
\begin{itemize}
    \item If $k<n$,
using $(dz_j\wedge)^*=2\iota_{\partial_{z_j}}$, \eqref{eq: gap z square} and \eqref{eq: gap algebra}, we obtain 
\begin{equation}\label{eq: gap pair}
\begin{aligned}
 &\quad(n-k)\|u\|^2
 \le\|\omega\wedge u\|^2\\
 &=\left\langle
       -\frac i2\sum_j\bar z_j\,dz_j\wedge u,
       \bar{\partial}_{F_q}^*(\omega\wedge u)
      \right\rangle+\left\langle
       \bar{\partial}_{F_q}u,
       i\sum_j z_j\,\iota_{\partial_{z_j}}(\omega\wedge u)
      \right\rangle\\
 &\le C\sqrt\lambda\,\||z|u\|\,\|u\|\le C\sqrt{\lambda(\lambda+|q|^{-4/n})}\,\|u\|^2.
\end{aligned}
\end{equation}
If $k>n$, using the adjoint of the first identity in
\eqref{eq: gap algebra} and 
$\|(\omega\wedge)^*u\|^2\ge(k-n)\|u\|^2$, we derive estmiate of the form \eqref{eq: gap pair} again. Thus 
\[\lambda\geq c|q|^{4/n},\qquad \text{if }~k\neq n.\]
     \item If $k=n$ and $\lambda>0$, then either $\bar{\partial}_{F_q}u$ or $\bar{\partial}_{F_q}^*u$ is non-zero eigenform.
Thus we also have 
\[\lambda\geq c|q|^{4/n}.\]
\end{itemize}

\end{proof}

{To estimate the radial variation of the heat trace, we treat the $q$-variation near the node as an infinitesimal change of variables.} Now let $0<\varepsilon\ll1$ be a fixed sufficiently small constant.
\begin{lemma}\label{eq: variation lemma1}
There is a holomorphic degree-one vector field $V_q$ on
$\{|\xi|\le 2\varepsilon|w|\}$  such that
\begin{equation}\label{eq: vector bound}
 V_qF_q=|q|\partial_{|q|}F_q,\qquad
 |\nabla^jV_q(z)|\le C_j|z|^{1-j}.
\end{equation}
Here for $q=q_1+\sqrt{-1}q_2$, 
\[
|q|\partial_{|q|}
:=
q_1\partial_{q_1}+q_2\partial_{q_2}.
\]
\end{lemma}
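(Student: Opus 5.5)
The plan is to construct $V_q$ explicitly from the normal form \eqref{eq:polynomial}: a multiple of the Euler-type field $w\partial_w$ absorbs the $w^n$-coefficient of the variation, and a $\xi$-directional field absorbs the rest by inverting the Hessian of $Q_q$ in the cone. Set
\[
G_q:=|q|\partial_{|q|}F_q .
\]
By Lemma~\ref{lem:axis}, and since $A_q$ and all coefficients depend smoothly on $(q,b')$, $G_q$ is a homogeneous polynomial of degree $n$. The normal form contains no monomials $w^{n-1}\xi_k$, and differentiating in $q$ preserves this, so
\[
G_q=b_qw^n+w^{n-2}\widetilde Q_q(\xi)+\sum_{j=3}^n w^{n-j}\widetilde P_{j,q}(\xi),
\qquad b_q=|q|\partial_{|q|}a_q .
\]
Here all coefficients are bounded, and in fact they are $O(|q|)$ because of the prefactor $|q|$. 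Since $a_q=\lambda(b')q+O(|q|^2)$ with $\lambda(b')\neq0$, we get $b_q=\lambda(b')q+O(|q|^2)$. Hence $\alpha_q:=b_q/(na_q)=\tfrac1n+O(|q|)$ is uniformly bounded. I expect this to be the only delicate point, and it rests on the transversality in Lemma~\ref{lem:axis}.

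Next I will look for $V_q=\alpha_q\,w\partial_w+\sum_k\beta_k\partial_{\xi_k}$ with $\beta_k$ homogeneous of degree one. The $w^n$ terms of $G_q$ and $\alpha_q w\partial_wF_q$ cancel exactly, so the remainder $R_q:=G_q-\alpha_q w\partial_wF_q$ is homogeneous of degree $n$, vanishes to second order in $\xi$, and has bounded coefficients. By Hadamard's lemma in $\xi$, I will write
\[
R_q=\sum_l\xi_l\,r_l(w,\xi),\qquad \partial_{\xi_k}F_q=\sum_l M_{kl}(w,\xi)\,\xi_l ,
\]
with $r_l$ and $M_{kl}$ polynomial, homogeneous of degrees $n-1$ and $n-2$, and $r_l=O(|\xi|\,|z|^{n-2})$. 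The normal form gives
\[
M(w,\xi)=w^{n-2}\Bigl(\operatorname{Hess}Q_q+O\bigl(|\xi|/|w|\bigr)\Bigr).
\]
Because $\operatorname{Hess}Q_q$ is uniformly invertible, $M$ is invertible on $\{|\xi|\le2\varepsilon|w|\}$ once $\varepsilon$ is small, with $\|M^{-1}\|\le C|w|^{-(n-2)}$. I will then set $\beta_k:=\sum_l r_l\,(M^{-1})_{lk}$. This gives $\sum_k\beta_k\partial_{\xi_k}F_q=R_q$, hence $V_qF_q=G_q$.

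The field $V_q$ is holomorphic on the cone minus $\{w=0\}$, which inside the cone is just the origin, and it is homogeneous of degree one. On the cone $|w|\simeq|z|$, so $|V_q(z)|\le C|z|$. The derivative bounds $|\nabla^jV_q|\le C_j|z|^{1-j}$ follow by homogeneity from uniform $C^j$ bounds on the compact slice $\{|z|=1,\ |\xi|\le2\varepsilon|w|\}$. These bounds are uniform in $q$ and in $b'$ on compact sets, because the coefficients and $\operatorname{Hess}Q_q^{-1}$ are uniformly bounded there.
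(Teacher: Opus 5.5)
Your proposal is correct and is essentially the paper's proof. The paper also takes $(V_q)_w=\frac{|q|\partial_{|q|}a_q}{na_q}\,w$ to cancel the $w^n$-term, which is bounded because $a_q=\lambda q+O(q^2)$. It then obtains $(V_q)_\xi$ by inverting the integrated Hessian $\int_0^1\partial_\xi^2R_q(w,s\xi)\,ds=w^{n-2}\bigl(\partial_\xi^2Q_q+O(\varepsilon)\bigr)$ on the cone, which is exactly your Hadamard matrix $M$, and the derivative bounds follow from homogeneity as in your argument.
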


{\begin{proof}
By Lemma \ref{lem:axis}, write
\[
F_q(w,\xi)=a_qw^n+R_q(w,\xi),
\qquad
R_q(w,\xi)
=w^{n-2}Q_q(\xi)
+\sum_{j=3}^n w^{n-j}P_{j,q}(\xi).
\]
On
$\{|\xi|\le2\varepsilon|w|\}$,
\[
\int_0^1\partial_\xi^2R_q(w,s\xi)\,ds=w^{n-2}\bigl(\partial_\xi^2Q_q+O(\varepsilon)\bigr),
\]
which is invertible for sufficiently small $\epsilon$. Define
\[
(V_q)_w=\frac{|q|\partial_{|q|}a_q}{na_q}w
\]
and
\[
(V_q)_\xi
=
\left(\int_0^1\partial_\xi^2R_q(w,s\xi)\,ds\right)^{-1}
\int_0^1\partial_\xi
\left(
|q|\partial_{|q|}R_q-
\frac{|q|\partial_{|q|}a_q}{na_q}w\partial_wR_q
\right)(w,s\xi)\,ds .
\]
Then we have
\[
V_qF_q=|q|\partial_{|q|}F_q.
\]
Since $a_q=q\lambda+O(q^2)$, the coefficients are uniformly bounded.
The homogeneity of $V_q$ then yields
\[
|\nabla^jV_q(z)|\le C_j|z|^{1-j}.
\]
\end{proof}}

{Since $V_q$ is defined only on the nodal cone
$\{|\xi|\le2\epsilon|w|\}$, we introduce the following cutoff function $\chi$ to obtain
a globally defined deformation that agrees with $V_q$ on the smaller cone.}
Let $\chi$ be a $\Z_n$-invariant cutoff function, such that
\begin{enumerate}[(i)]
    \item $\chi(z)=1$ if $z\in\{z'=(w,\xi)\in \C^n:|\xi|<\varepsilon|w|,\ |z'|>2\}$.
    \item $\chi(z)=0$ if $z\in\{z'=(w,\xi)\in\C^n:|\xi|>2\varepsilon|w|\}\cup\{z'\in\C^n:|z'|<1\}$
    \item It is homogeneous of degree zero for $|z|>2$.
\end{enumerate}

Set
\begin{equation}\label{eq: defn K}
\begin{split}
 K_{i\bar j}
 &=\partial_{z_i}\partial_{\bar{z}_j}
\left(\chi\operatorname{Re}\sum_{\ell=1}^n\bar
z_\ell(V_q)_\ell\right),\\
B(K)&=\sum_{i,j}K_{i\bar j}\left(dz_i\wedge\iota_{\partial_{z_j}}
       +d\bar z_j\wedge\iota_{\partial_{\bar z_i}}
       -\delta_{ij}\right),
\end{split}
\end{equation}
where for a vector field $V$, $(V)_\ell$ denotes its component corresponding to $z_\ell.$
Then it is clear that
\[|B(K)|\le C.\]

{ Recall that $\partial_{F_q}:=\partial+\overline{\partial F_q}\wedge$, and
\[\partial_{F_q}^2=0,
\qquad
[\partial_{F_q},\partial_{F_q}^*]=\Delta_{F_q}.\]
Here and below $[\cdot,\cdot]$ is still the super-commutator.
We will also use the following commutators in the proof of key lemma \ref{eq: variation lemma}.
\begin{align*}
\Big[
 \sum_{\ell}
 \iota_{\partial_{\bar z_\ell}}\iota_{\partial_{z_\ell}},
 \bar\partial_{F_q}
\Big]
=\,&\frac12\partial_{F_q}^*,
\qquad
\Big[
 \sum_{\ell}
 \iota_{\partial_{\bar z_\ell}}\iota_{\partial_{z_\ell}},
 \partial_{F_q}
\Big]
=-\frac12\bar\partial_{F_q}^*,
\\
\Big[
 \sum_{\ell}
 \iota_{\partial_{\bar z_\ell}}\iota_{\partial_{z_\ell}},
 \bar\partial_{F_q}^*
\Big]=\,&0,
\qquad\qquad
[\bar\partial_{F_q}^*,\partial_{F_q}^*]=0.    
\end{align*}
Moreover, we can write $B(K)$ as 
\[
B(K)
=\Big[
 \sum_{i,j}K_{i\bar j}\,
 dz_i\wedge d\bar z_j\wedge,
 \sum_{\ell}
 \iota_{\partial_{\bar z_\ell}}\iota_{\partial_{z_\ell}}
\Big],
\]
and
\[
[N,\bar\partial_{F_q}]=\bar\partial_{F_q},
\qquad
[N,\bar\partial_{F_q}^*]=-\bar\partial_{F_q}^*,
\qquad
[N,B(K)]=0,
\qquad
[(N-n)^2,\Delta_{F_q}]=0.
\]
}

\begin{lemma}\label{eq: variation lemma}
We have
\begin{equation}\label{eq: heat variation Fq}
\begin{split}
 &\frac{|q|}{2}\partial_{|q|}
       \Str \bigl((N-n)^2e^{-t\Delta_{F_q}}\bigr)=t\partial_t\Str 
          \bigl(B(K)(N-n)e^{-t\Delta_{F_q}}\bigr)-\frac{t}{2}\Str ((N-n)^2\mathcal E_qe^{-t\Delta_{F_q}}).
\end{split}
\end{equation}
Here $\mathcal E_q$ has the form
\[\mathcal E_q
=[D_q^+,\bar\partial_{F_q}^*]
+[\bar\partial_{F_q},D_q^-],\]
where $D_q^\pm$ is some differential operator, which have differential order at most one, degree $\pm1$, and support in
\begin{equation}\label{eq: error support}
 \{|z|\le2\}\cup\{|\xi|\ge\varepsilon|w|\}.
\end{equation}
 Moreover, if we write
$\mathcal E_q=\sum_{|\nu|\le2}e_{\nu,q}\partial^\nu$, then when $|z|\ge2$,
\begin{equation}\label{eq: error coefficient}
 |\partial^\alpha e_{\nu,q}(z)|
 \le C_\alpha|z|^{(n-1)(2-|\nu|)-|\alpha|}.
\end{equation}
\end{lemma}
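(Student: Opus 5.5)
The plan is to treat the radial variation $|q|\partial_{|q|}$ of the family as an infinitesimal holomorphic change of variables on the nodal cone, and to globalize it with the cutoff $\chi$. Everything not generated by this change of variables is collected into the error $\mathcal E_q$.

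First, by Duhamel's formula (Proposition~\ref{prop:familyheat}\textup{(i)}, uniformly in $q$ for fixed $q\neq0$) and $[(N-n)^2,\Delta_{F_q}]=0$,
\[
|q|\partial_{|q|}\Str\bigl((N-n)^2e^{-t\Delta_{F_q}}\bigr)=-t\Str\bigl((N-n)^2\,(|q|\partial_{|q|}\Delta_{F_q})\,e^{-t\Delta_{F_q}}\bigr).
\]
Here $|q|\partial_{|q|}\Delta_{F_q}=[\delta\bar\partial,\bar\partial_{F_q}^*]+[\bar\partial_{F_q},\delta\bar\partial^*]$, with $\delta\bar\partial=\partial(|q|\partial_{|q|}F_q)\wedge$ and $\delta\bar\partial^*$ its adjoint. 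It is convenient to use the equivalent Kähler description through $\partial_{F_q}$ and $\partial_{F_q}^*$, which is where the listed commutators with $\sum_\ell\iota_{\partial_{\bar z_\ell}}\iota_{\partial_{z_\ell}}$ will enter.

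Second, I use Lemma~\ref{eq: variation lemma1}. On the cone, $|q|\partial_{|q|}F_q=V_qF_q=\{\bar\partial_{F_q},\iota_{V_q}\}$. Hence $\delta\bar\partial$ is, up to a term supported where $\chi\ne1$, the commutator of $\bar\partial_{F_q}$ with the first-order operator generated by the real vector field $\operatorname{Re}(\chi V_q)$. This operator is the Lie derivative, corrected by its divergence so that it becomes skew-adjoint up to zeroth order. The point of the potential $\chi\operatorname{Re}\sum_\ell\bar z_\ell(V_q)_\ell$ is that its complex Hessian $K_{i\bar j}$ is exactly the zeroth-order, form-degree-preserving part of this Lie derivative acting on $dz_i$ and $d\bar z_j$. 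Concretely, I would write
\[
|q|\partial_{|q|}\bar\partial_{F_q}=\tfrac12\bigl[\bar\partial_{F_q},\Theta\bigr]+D_q^+,\qquad |q|\partial_{|q|}\bar\partial_{F_q}^*=\tfrac12\bigl[\Theta^*,\bar\partial_{F_q}^*\bigr]+D_q^-,
\]
where $\Theta=\mathcal L_{\chi V_q}+\overline{\mathcal L_{\chi V_q}}$ plus the zeroth-order term $B(K)$ obtained from the formula $B(K)=[\sum K_{i\bar j}dz_i\wedge d\bar z_j\wedge,\sum_\ell\iota_{\partial_{\bar z_\ell}}\iota_{\partial_{z_\ell}}]$ and the listed commutators. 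The factor $\tfrac12$ comes from the real part. The remainders $D_q^\pm$ arise from derivatives of $\chi$ and from $\chi\ne1$, so they are supported in \eqref{eq: error support}. Their coefficient bounds \eqref{eq: error coefficient} follow from $|\nabla^jV_q|\le C|z|^{1-j}$ and from the homogeneity of degree $n-1$ of $\partial F_q$ and of degree $n-2$ of $\partial^2F_q$.

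Third, I insert this into the Duhamel expression. The pure commutator part gives $[\Delta_{F_q},\Theta]$-type terms. By graded cyclicity, using $[N,\bar\partial_{F_q}]=\bar\partial_{F_q}$, $[N,\bar\partial^*_{F_q}]=-\bar\partial^*_{F_q}$ and $[N,B(K)]=0$, the skew-adjoint (Lie-derivative) part of $\Theta$ cancels, since the trace of a commutator with $e^{-t\Delta}$ vanishes. Moving one factor of $N-n$ across $\bar\partial_{F_q}$ and $\bar\partial_{F_q}^*$ leaves
\[
-t\Str\bigl(B(K)(N-n)\Delta_{F_q}e^{-t\Delta_{F_q}}\bigr)=t\partial_t\Str\bigl(B(K)(N-n)e^{-t\Delta_{F_q}}\bigr).
\]
The $D_q^\pm$ terms assemble into $-\tfrac t2\Str((N-n)^2\mathcal E_qe^{-t\Delta_{F_q}})$ with $\mathcal E_q=[D_q^+,\bar\partial_{F_q}^*]+[\bar\partial_{F_q},D_q^-]$.

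The main obstacle is the algebra in the second and third steps. One must check that the zeroth-order remainder after cancelling the skew part is exactly $B(K)$, with the stated normalization including the $-\delta_{ij}$ trace term, and not some other Hermitian matrix built from $\nabla V_q$. One must also check that the cutoff really moves all defects into $\mathcal E_q$ with the claimed support. I would first verify this on the diagonal model $V=\lambda\mathcal E_z$, where $\Theta$ is the dilation generator and $K_{i\bar j}=\tfrac{\lambda}{2}\delta_{ij}$ up to normalization, and then handle the general case by the same computation with variable coefficients. The trace-class justification for unbounded insertions of polynomial growth is supplied by Proposition~\ref{prop:familyheat}\textup{(i)}.
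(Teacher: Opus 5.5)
Your architecture matches the paper's: Duhamel, the Lie derivative along the cut-off field $\chi(V_q+\bar V_q)$, the adjoint producing $B(K)$, cyclicity removing $[\mathcal L_{\chi(V_q+\bar V_q)},\Delta_{F_q}]$, and the defects collected in $D_q^\pm$. Step~3, however, has a genuine gap. The $B(K)$-term does not reduce to $t\partial_t\Str(B(K)(N-n)e^{-t\Delta_{F_q}})$ using only $[N,\bar\partial_{F_q}]=\bar\partial_{F_q}$, $[N,\bar\partial^*_{F_q}]=-\bar\partial^*_{F_q}$, $[N,B(K)]=0$ and cyclicity. The $B(K)$-part of $|q|\partial_{|q|}\Delta_{F_q}$ is $-2[\bar\partial_{F_q},[B(K),\bar\partial^*_{F_q}]]$. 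Moving $(N-n)^2$ through $\bar\partial_{F_q}$ and $\bar\partial^*_{F_q}$ gives exactly
\[
\begin{aligned}
t\Str\bigl((N-n)^2[\bar\partial_{F_q},[B(K),\bar\partial_{F_q}^*]]e^{-t\Delta_{F_q}}\bigr)
={}&-2t\Str\bigl(B(K)(N-n)\Delta_{F_q}e^{-t\Delta_{F_q}}\bigr)\\
&-t\Str\bigl(B(K)(\bar\partial_{F_q}^*\bar\partial_{F_q}-\bar\partial_{F_q}\bar\partial_{F_q}^*)e^{-t\Delta_{F_q}}\bigr).
\end{aligned}
\]
For $K=I$ (so $B(K)=N-n$) the last term can be evaluated spectrally. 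For the variable $K$ at hand it is not determined by the $N$-grading, and as it stands the coefficient is $-2t$ rather than $-t$. What is missing is the identity
\[
\Str\bigl(B(K)(\bar\partial_{F_q}^*\bar\partial_{F_q}-\bar\partial_{F_q}\bar\partial_{F_q}^*)e^{-t\Delta_{F_q}}\bigr)=\partial_t\Str\bigl(B(K)(N-n)e^{-t\Delta_{F_q}}\bigr).
\]
This, not the construction of $\Theta$, is where the commutators with $\Lambda':=\sum_\ell\iota_{\partial_{\bar z_\ell}}\iota_{\partial_{z_\ell}}$ are needed. Write $B(K)=[\kappa,\Lambda']$, with $\kappa$ multiplication by $\sum K_{i\bar j}\,dz_i\wedge d\bar z_j$. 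Use $[\Lambda',\bar\partial_{F_q}]=\frac12\partial^*_{F_q}$, $[\Lambda',\partial_{F_q}]=-\frac12\bar\partial^*_{F_q}$, $[\Lambda',\bar\partial^*_{F_q}]=0$ and $[\bar\partial^*_{F_q},\partial^*_{F_q}]=0$ (supercommutators). Then the difference of the two sides becomes
\[
2\Str\bigl(\kappa\,[\partial_{F_q},\partial^*_{F_q}\Lambda']\,e^{-t\Delta_{F_q}}\bigr)
=-2\Str\bigl([\partial_{F_q},\kappa]\,\partial^*_{F_q}\Lambda'\,e^{-t\Delta_{F_q}}\bigr).
\]
This vanishes because $[\partial_{F_q},\kappa]$ is multiplication by $\partial\bigl(\partial\bar\partial(\chi\Re\sum_\ell\bar z_\ell(V_q)_\ell)\bigr)=0$. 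So the complex-Hessian form of $K$ is needed on all of $\C^n$, including the transition region of $\chi$. That is why $K$ is the Hessian of the cut-off potential rather than a cut-off of the Hessian. You used this structure only for the pointwise identification on the cone.

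Step~2 also has normalization errors. On $\{\chi=1\}$ one has exactly $[\mathcal L_{V_q+\bar V_q},\bar\partial_{F_q}]=|q|\partial_{|q|}\bar\partial_{F_q}$. Indeed, $\mathcal L_{V_q+\bar V_q}$ commutes with $d$, preserves bidegree since $V_q$ is holomorphic, and maps $\partial F_q$ to $\partial(V_qF_q)$. So there is no factor $\frac12$: your $\frac12[\bar\partial_{F_q},\Theta]$ with $\Theta=\mathcal L_{\chi(V_q+\bar V_q)}$ is $-\frac12$ times the variation. Nor can $B(K)$ be added to $\Theta$ in the first identity, since $[\bar\partial_{F_q},B(K)]\neq0$ on the cone. $B(K)$ enters only through the adjoint, via $\mathcal L_{V_q+\bar V_q}+\mathcal L^*_{V_q+\bar V_q}=2B(K)$ on $\{\chi=1\}$. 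Accordingly, $D_q^-$ must absorb $[\mathcal L^*_{\chi(V_q+\bar V_q)}+\mathcal L_{\chi(V_q+\bar V_q)}-2B(K),\bar\partial^*_{F_q}]$, which is supported in the transition region of $\chi$.

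The identity you flagged as the main obstacle is a short check. With $X=V_q+\bar V_q$ one has $\mathcal L_X=X+A$, where $A=\sum_{i,k}\bigl(\partial_k(V_q)_i\,dz_k\wedge\iota_{\partial_{z_i}}+\overline{\partial_k(V_q)_i}\,d\bar z_k\wedge\iota_{\partial_{\bar z_i}}\bigr)$, and $X^*=-X-\operatorname{div}X$. Hence $\mathcal L_X+\mathcal L_X^*=A+A^*-\operatorname{div}X=2B(K)$ with $2K_{i\bar j}=\partial_i(V_q)_j+\overline{\partial_j(V_q)_i}$. The $-\delta_{ij}$ term comes from $\operatorname{div}X=2\operatorname{tr}K$.
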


{
\begin{proof}
On $\{\chi=1\}$, direct calculation gives
\[
[\mathcal L_{V_q+\bar V_q},\bar\partial_{F_q}]
=
|q|\partial_{|q|}\bar\partial_{F_q},
\qquad
\mathcal L_{V_q+\bar V_q}
+\mathcal L_{V_q+\bar V_q}^*
=
2B(K).
\]
Define
\[
D_q^+
:=
|q|\partial_{|q|}\bar\partial_{F_q}
-
[\mathcal L_{\chi(V_q+\bar V_q)},\bar\partial_{F_q}]
\]
and
\[
D_q^-
:=
(D_q^+)^*
-
[
 \mathcal L_{\chi(V_q+\bar V_q)}^*
 +\mathcal L_{\chi(V_q+\bar V_q)}
 -2B(K),
 \bar\partial_{F_q}^*
].
\]
These operators vanish on $\{\chi=1\}$, have order at most one, and
are supported in
\[
\{|z|\le2\}\cup\{|\xi|\ge\epsilon|w|\}.
\]

Taking the adjoint of the first identity and differentiating
$\Delta_{F_q}=[\bar\partial_{F_q},\bar\partial_{F_q}^*]$, we obtain
\begin{equation}
\label{eq:qDelta}
|q|\partial_{|q|}\Delta_{F_q}
=[\mathcal L_{\chi(V_q+\bar V_q)},\Delta_{F_q}]
-2[\bar\partial_{F_q},
      [B(K),\bar\partial_{F_q}^*]]
+\mathcal E_q,
\end{equation}
where
\[
\mathcal E_q
=[D_q^+,\bar\partial_{F_q}^*]
+[\bar\partial_{F_q},D_q^-].
\]

We now take the supertrace. The identities preceding the lemma and
graded cyclicity give
\[
\Str\!\left(B(K)
(\bar\partial_{F_q}^*\bar\partial_{F_q}-\bar\partial_{F_q}\bar\partial_{F_q}^*)
 e^{-t\Delta_{F_q}}
\right)
=
\partial_t
\Str\!\left(
 B(K)(N-n)e^{-t\Delta_{F_q}}
\right),
\]
and hence
\begin{equation}
\label{eq:strBk}
\Str\!\left(
 \frac12(N-n)^2
 [\bar\partial_{F_q},
  [\bar\partial_{F_q}^*,B(K)]]
 e^{-t\Delta_{F_q}}
\right)
=-\frac12\partial_t
\Str\!\left(
 B(K)(N-n)e^{-t\Delta_{F_q}}
\right).
\end{equation}
Since $\mathcal L_{\chi(V_q+\bar V_q)}$ preserves total degree,
graded cyclicity also gives
\begin{equation}
\label{eq:N2L}
\Str\!\left(
 (N-n)^2
 [\mathcal L_{\chi(V_q+\bar V_q)},\Delta_{F_q}]
 e^{-t\Delta_{F_q}}
\right)=0.
\end{equation}

Finally, Duhamel's formula gives
\[
\begin{aligned}
\frac{|q|}{2}\partial_{|q|}
\Str\!\left((N-n)^2e^{-t\Delta_{F_q}}\right)
=
-\frac{t}{2}
\Str\!\left(
 (N-n)^2
 |q|\partial_{|q|}\Delta_{F_q}
 e^{-t\Delta_{F_q}}
\right).
\end{aligned}
\]
Substituting \eqref{eq:qDelta} and using
\[
[\bar\partial_{F_q},
 [B(K),\bar\partial_{F_q}^*]]
=
-
[\bar\partial_{F_q},
 [\bar\partial_{F_q}^*,B(K)]],
\]
together with \eqref{eq:strBk}--\eqref{eq:N2L}, yields
\[
\begin{aligned}
\frac{|q|}{2}\partial_{|q|}
\Str\!\left((N-n)^2e^{-t\Delta_{F_q}}\right)
&=
t\partial_t
\Str\!\left(
 B(K)(N-n)e^{-t\Delta_{F_q}}
\right)
\\
&\quad
-\frac{t}{2}\Str\!\left(
 (N-n)^2\mathcal E_qe^{-t\Delta_{F_q}}\right).
\end{aligned}
\]

The support and coefficient estimates for $\mathcal E_q$ follow
from Lemma~4.7, the homogeneity of $\chi$, and the homogeneity of
$F_q$.
\end{proof}}

Motivated by \Cref{eq: variation lemma}, we need to determine the constant term of
$\Str\bigl(B(K)(N-n)e^{-t\Delta_{F_q}}\bigr)$
and estimate
$\int_0^1 \Str\bigl(\tfrac12(N-n)^2\mathcal E_q
e^{-t\Delta_{F_q}}\bigr)\,dt$.
These are established in \Cref{lem: coefficient lemma} and
\Cref{lem: local heat fp}, respectively.
\begin{lemma}\label{lem: coefficient lemma}
For the matrix $K$ constructed above,
\begin{equation}\label{eq: coefficient bounds BK}
 \left|
 \left[
 \Str \bigl(B(K)(N-n)e^{-t\Delta_{F_q}}\bigr)
 \right]_{t^0}
 \right|\le C.
\end{equation}
Moreover,
\begin{equation}\label{eq: zeta zero}
 \zeta_{F_q}(0)
 =
 \left[
 \tfrac12\Str 
 \bigl((N-n)^2e^{-t\Delta_{F_q}}\bigr)
 \right]_{t^0}
\end{equation}
is independent of $q$. Here $[\cdot]_{t^0}$ means taking the constant term. The asymptotic expansions of supertraces above are of the form
{
\[
\sum_{\alpha<0}a_\alpha t^\alpha
      +a_0+O(t^{\delta}),
 \qquad \delta>0,
\]}
where the sum is finite.
\end{lemma}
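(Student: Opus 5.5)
The second claim comes first, since it is nearly formal. By Lemma~\ref{lem:N2reduction}, $\Str(\ph\qh(e^{-t\Delta_{F_q}}-P_{F_q}))=\frac12\Str((N-n)^2e^{-t\Delta_{F_q}})$. The zeta function \eqref{eq:zeta} has small-time expansion given by Proposition~\ref{prop:familyheat}(ii), and the large-time remainder decays exponentially by the spectral gap of Lemma~\ref{lem:spectral-gap}. The standard Mellin argument then identifies $\zeta_{F_q}(0)$ with the $t^0$ coefficient of this supertrace; the harmonic projection contributes nothing, because $(N-n)^2P_{F_q}=0$. Proposition~\ref{prop:zetazero} says $\zeta(0;F)=n\chi(X_F)/12$ for every smooth $F$, so \eqref{eq: zeta zero} is independent of $q$. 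The log-free form of the expansion, a finite sum of negative powers plus a constant plus $O(t^\delta)$, also comes from Proposition~\ref{prop:familyheat}(ii) for fixed $q\neq0$. For the $B(K)$ insertion, the same Dai--Yan parametrix applies once Proposition~\ref{prop:familyheat}(i) is used with the bounded endomorphism $B(K)(N-n)$.

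For \eqref{eq: coefficient bounds BK} I would not track $q$-dependence through the Dai--Yan parametrix, since its constants degenerate with $|\partial F_q|\ge c|q||z|^{n-1}$. Instead I would use that the constant coefficient of a heat trace with a local insertion is a local quantity. Write $T(t)=\Str(B(K)(N-n)e^{-t\Delta_{F_q}})$. Rescale exactly as in \eqref{eq: defn Us} with $h=t^{n/(2(n-1))}$, so the constant term equals $\FP_{h\to0}\Str(U B(K)U^{-1}(N-n)e^{-\mathscr D_h})$, where $\mathscr D_h=|\partial F_q|^2+h^2\Delta_{\bar\partial}+hL_{F_q}$. The bridge expansion of Lemma~\ref{lem:matrixremainder} then writes this as a finite sum of integrals $\int_{\C^n}e^{-|\partial F_q|^2}P(z)\,d^{2n}z$. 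Here $P$ is a universal polynomial expression in derivatives of $F_q$, $\overline{F_q}$, $K(h^{-1/(n-1)}\cdot)$, and $\operatorname{str}$ of products of $L_{F_q}$ and $(N-n)$.

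The key observation is that $K$ is homogeneous of degree $0$ away from $|z|\le2$: $V_q$ has degree one and $\chi$ is homogeneous of degree zero. Under the rescaling $z\mapsto h^{-1/(n-1)}z$ the function $K$ is therefore unchanged outside a ball shrinking to the origin. Consequently every constant-order density is a $\Z_n$-weighted combination of terms of the form $e^{-|\partial F_q|^2}\cdot(\text{homogeneous of total weight }-2n)$, or an exact divergence as in Step~5 of Proposition~\ref{prop:smalltime}. Counting degrees and passing to polar coordinates reduces each such integral to a sphere integral of a bounded density multiplied by $\int_0^\infty e^{-r^{2n-2}|\partial F_q(\theta)|^2}r^{-1}\,dr$-type expressions. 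These expressions must cancel, because the actual expansion is log-free. What survives is an integral over the sphere of quantities of the type $\operatorname{str}(\cdots)\,|\partial F_q(\theta)|^{-2k}\cdot(\text{Hessian factors})$, which can be bounded with $|B(K)|\le C$ once the pointwise supertrace counting of Step~2 forces enough Hessian factors to cancel the negative powers of $|\partial F_q|$.

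\textbf{Main obstacle.} The hard point is this uniformity near the node direction $\xi=0$, where $|\partial F_q|\sim|q|$. I would first try to exploit the fact that $K$ and $B(K)$ are, on the cone $\{\chi=1\}$, exactly the Hermitian part of the Lie derivative of the infinitesimal change of variables $V_q$. On that cone the constant term should then equal the variation of the constant term of $\Str((N-n)^2e^{-t\Delta})$ under the flow of $V_q$, which is $0$ by the $q$-independence just proved. This leaves only contributions supported in $\{|\xi|\ge\varepsilon|w|\}\cup\{|z|\le2\}$. On that region $|\partial F_q|$ is bounded below uniformly in $q$ by Lemma~\ref{lem:axis}, because outside the nodal cone the bound is independent of $q$. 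The Dai--Yan estimates are therefore uniform there and give the bound $C$.
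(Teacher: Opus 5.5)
Your argument for \eqref{eq: zeta zero} is fine, and shorter than the paper's. For $q\neq0$ the polynomial $F_q$ is smooth, so Proposition~\ref{prop:zetazero} already gives $\zeta_{F_q}(0)=n\chi(X)/12$; the paper instead recomputes the constant term by taking $K=I$.

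The gap is in \eqref{eq: coefficient bounds BK}, which is the real content of the lemma, and your resolution of the main obstacle rests on a false claim. In Lemma~\ref{eq: variation lemma}, $B(K)$ enters only through $t\partial_t\Str(B(K)(N-n)e^{-t\Delta_{F_q}})$, and $t\partial_t$ annihilates the $t^0$ coefficient. So when you take constant terms there, the $q$-independence of $\zeta_{F_q}(0)$ constrains only the $\mathcal E_q$-term. It says nothing about $[\Str(B(K)(N-n)e^{-t\Delta_{F_q}})]_{t^0}$. That coefficient is, up to the $\mathcal E_q$-term, the radial derivative $|q|\partial_{|q|}\cT^{\LG}(F_q)$ of the torsion. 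It is a metric-anomaly quantity for the change of metric generated by $V_q$, not a variation of $\zeta(0)$, and its cone part does not vanish. On $\{\chi=1\}$ one has $K_{i\bar j}=\frac12\bigl(\partial_{z_i}(V_q)_j+\overline{\partial_{z_j}(V_q)_i}\bigr)$. Hence $\Delta_{\bar\partial}K=0$ there, and after the cancellation described below the constant-order density is a multiple of $\Tr(K)\,|\det\partial^2F_q|^2e^{-2t|\partial_zF_q|^2}$. Here $\Tr K=\Re\operatorname{div}V_q$, and along the node direction it tends to $\frac{2-(n-1)(n-2)}{2n}\neq0$. As $q\to0$, $n-1$ of the $(n-1)^n$ preimages of a generic value of $\partial_zF_q$ accumulate there, so this is a nonzero contribution of order one.

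The polar-coordinate argument does not give uniformity either. After radial integration, the radial integrals converge; they are not of the log-divergent type you describe. The sphere densities are negative powers of $|\partial_zF_q(\theta)|$ times Hessian expressions. Near $\xi=0$ they are pointwise huge: for instance $|\det\partial^2F_q|^2|\partial_zF_q|^{-2n}\sim|q|^{2-2n}$ on a set of measure $\sim|q|^{2n-2}$. Boundedness appears only after integration, and no count of Hessian factors produces it. Worse, the $(2n,0)$ term contains $\Tr(\partial^2F_q(\partial^2F_q)^*)\sum_{i,j}(-1)^{i+j}K_{i\bar j}\cdot(\text{minor})$, whose integral is of order $|q|^{-2}$.

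The missing ideas are the ones in the paper's proof:
\begin{enumerate}[(i)]
\item Compute the $(2n,0)$ and $(2n-2,2)$ contributions explicitly, and observe that their $\Tr(\partial^2F_q(\partial^2F_q)^*)$-terms cancel exactly.
\item Integrate by parts so that $K$ is differentiated only through $\Delta_{\bar\partial}K$. This is supported in $\{1\le|z|\le2\}\cup\{\varepsilon|w|\le|\xi|\le2\varepsilon|w|\}$, where $|\partial_zF_q|\ge c|z|^{n-1}$ uniformly in $q$.
\item Bound the remaining $\Tr(K)|\det\partial^2F_q|^2$ term by $\|K\|_\infty$ times the area formula
\[
t^n\int_{\C^n}e^{-2t|\partial_zF_q|^2}|\det\partial^2F_q|^2\,dz=(n-1)^n(\pi/2)^n,
\]
which is independent of $q$ because it only sees the degree of the gradient map.
\item Note that the nonidentity sectors are negligible because $K$ vanishes near the fixed point $0$.
\end{enumerate}
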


\begin{proof}
Fix $q\ne0$. Proceeding as in the proof of \Cref{prop:smalltime}, only
\[(k,\ell)=(2n-2,0),\,(2n,0),\,(2n-2,2)\] 
terms contribute through order zero, the remaining terms are $O_q(t)$.

\begin{enumerate}[(i)]
    \item For $(k,\ell)=(2n-2,0)$, the contribution is
\begin{equation}\label{eq: sgort contri 1}
\begin{aligned}
 &\quad\frac{2(-4)^{n-1}t^{n-2}}{(2\pi)^n}
 \sum_{i,j=1}^n(-1)^{i+j}
 \int_{\mathbb C^n}
 e^{-2t|\partial_zF_q|^2}K_{i\bar j}
 \det\!\bigg[
 \Big(
 \sum_{r=1}^n
 \partial_{z_a}\partial_{z_r}F_q\,
 \overline{\partial_{z_b}\partial_{z_r}F_q}
 \Big)_{\substack{a\ne i\\b\ne j}}
 \bigg]\,dz\\
 &
 =c_{2n-2,0}(q)t^{-n/(n-1)}+O_q(t^{n-2}).
\end{aligned}
\end{equation}
    \item For $(k,\ell)=(2n,0)$, the contribution is
\begin{equation}\label{eq: sgort contri 2}
\begin{aligned}
 &\frac{2(-4)^{n-1}t^n}{3(2\pi)^n}
 \int_{\mathbb C^n}e^{-2t|\partial_zF_q|^2}
 \biggl[
 \Tr\!\left(\partial^2F_q(\partial^2F_q)^*\right)
 \sum_{i,j=1}^n(-1)^{i+j}K_{i\bar j}\\
 &\qquad\times
 \det\!\bigg[
 \Big(
 \sum_{r=1}^n
 \partial_{z_a}\partial_{z_r}F_q\,
 \overline{\partial_{z_b}\partial_{z_r}F_q}
 \Big)_{\substack{a\ne i\\b\ne j}}
 \bigg]
 -\Tr(K)|\det\partial^2F_q|^2
 \biggr]\,dz\\
 &\qquad
 =c_{2n,0}(q)+O_q(t^n).
\end{aligned}
\end{equation}
     \item or $(k,\ell)=(2n-2,2)$, integration by parts gives
\begin{equation}\label{eq: sgort contri 3}
\begin{aligned}
 &\frac{2(-4)^{n-1}}{(2\pi)^n}
 \sum_{i,j=1}^n(-1)^{i+j}
 \int_{\mathbb C^n}e^{-2t|\partial_zF_q|^2}
 \biggl[
 -\frac{t^{n-1}}{12}\Delta_{\bar\p}K_{i\bar j}
 -\frac{t^n}{3}K_{i\bar j}
 \Tr\!\left(\partial^2F_q(\partial^2F_q)^*\right)
 \biggr]\\
 &\qquad\times
 \det\!\bigg[
 \Big(
 \sum_{r=1}^n
 \partial_{z_a}\partial_{z_r}F_q\,
 \overline{\partial_{z_b}\partial_{z_r}F_q}
 \Big)_{\substack{a\ne i\\b\ne j}}
 \bigg]\,dz=c_{2n-2,2}(q)+O_q(t^{n-1}),
\end{aligned}
\end{equation}
where we used
$
 -\Delta_{\bar\partial}(4|\partial_zF_q|^2)
 =
 8\Tr\!\left(\partial^2F_q(\partial^2F_q)^*\right).
$
\end{enumerate}

To verify the powers in
\eqref{eq: sgort contri 1}--\eqref{eq: sgort contri 3},
we use the homogeneity of $F_q$ and the degree-zero homogeneity
of $k$ for $|z|\ge2$.
A term indexed by $(k,\ell)$ has 
$t^{k+\ell/2-n}$ and spatial degree $k(n-2)-\ell$;
spatial integration therefore gives the power
\[
 t^{\,k+\ell/2-n
       -\frac{2n+k(n-2)-\ell}{2(n-1)}}
 =
 t^{\,\frac{n(k+\ell-2n)}{2(n-1)}}.
\]
Thus the three powers are, respectively,
$t^{-n/(n-1)}$, $t^0$, and $t^0$.
The compact region $|z|\le2$ contributes the errors
$O(t^{n-2})$, $O(t^n)$, and $O(t^{n-1})$
in the three formulas, uniformly in $q$.

The terms containing
$\Tr(\partial^2F_q(\partial^2F_q)^*)$
in \eqref{eq: sgort contri 2} and
\eqref{eq: sgort contri 3} cancel.
For the remaining determinant term, the area formula gives
\begin{equation}\label{eq: coe degree}
 t^n\int_{\mathbb C^n}
 e^{-2t|\partial_zF_q|^2}
 |\det\partial^2F_q|^2\,dz
 =
 (n-1)^n(\pi/2)^n.
\end{equation}
Since $\|k\|_\infty\le C$, this term is uniformly bounded.

For the remaining term in
\eqref{eq: sgort contri 3}, we have
\[
 \operatorname{supp}(\Delta_{\bar\p}K)
 \subset
 \{1\le|z|\le2\}
 \cup
 \{|z|\ge2,\ \varepsilon|w|\le|\xi|
  \le2\varepsilon|w|\}.
\]
On the second set,
$|\partial_zF_q|\ge c|z|^{n-1}$ and
$|\Delta_{\bar\p}K|\le C|z|^{-2}$.
Each minor in \eqref{eq: sgort contri 3}
is bounded by $C|z|^{2(n-1)(n-2)}$, so
\begin{equation}\label{eq: lap K bound}
\begin{aligned}
 &t^{n-1}\int_{\mathbb C^n}
 e^{-2t|\partial_zF_q|^2}
 |\Delta_{\bar\p}K|\,
 |z|^{2(n-1)(n-2)}\,dz\\
 &\qquad\le
 Ct^{n-1}
 +Ct^{n-1}\int_{|z|\ge2}
 e^{-ct|z|^{2n-2}}
 |z|^{2(n-1)(n-2)-2}\,dz
 \le C
 \qquad(0<t\le1).
\end{aligned}
\end{equation}
Equations \eqref{eq: coe degree} and
\eqref{eq: lap K bound} therefore bound
$c_{2n,0}(q)+c_{2n-2,2}(q)$ uniformly in $q$.
For $g\in\mu_n\setminus\{1\}$, the only fixed point is zero,
where $K$ vanishes; hence the corresponding contribution is
$O_q(t^M)$ for every $M$.

Finally, take $K=I$, so that $B(K)=N-n$ and
$\Delta_{\bar\p}K=0$.
The cancellation above and \eqref{eq: coe degree} give
\[
 \left[
 \tfrac12\Str_{\C^n}\bigl((N-n)^2e^{-t\Delta_{F_q}}\bigr)
 \right]_{t^0}
 =
 \frac{(-1)^n n(n-1)^n}{12}.
\]
For $g\ne1$, the constant coefficient is the free
fixed-point coefficient, and therefore
\begin{equation}\label{eq: zeta zero-value}
\begin{aligned}
 &\left[
 \tfrac12\Str 
 \bigl((N-n)^2e^{-t\Delta_{F_q}}\bigr)
 \right]_{t^0}=
 \frac{(-1)^n(n-1)^n}{12}
 +
 \frac1{2n}
 \sum_{\substack{g\in\mu_n\\g\ne1}}
 \frac{\str\bigl(g^*(N-n)^2\bigr)}
      {|\det_{\mathbb R}(1-g)|}.
\end{aligned}
\end{equation}
This expression is independent of $q$.
\end{proof}

Let $\FP\int_0^1 h(t)dt$ denote the Hadamard finite part at $t=0$.
More precisely, if
\be\label{eq: only rational power}
 h(t)=\sum_{\alpha<-1}a_\alpha t^\alpha
      +a_{-1}t^{-1}+O(t^{-1+\delta}),
 \qquad \delta>0,
\ee
where the sum is finite, we define
\[
 \FP\int_0^1 h(t)\,dt
 :=
 \lim_{\epsilon\to0^+}
 \Big[
   \int_\epsilon^1 h(t)\,dt
   +\sum_{\alpha<-1}
      \frac{a_\alpha}{\alpha+1}\epsilon^{\alpha+1}
   +a_{-1}\log\epsilon
 \Big].
\]

Let $H=L^2\Omega^\bullet(\mathbb C^n)$. For operators on $H$, we use $\|\cdot\|$, 
$\|\cdot\|_{\mathrm{HS}}$, and $\|\cdot\|_{\mathrm{tr}}$ to denote, respectively, the operator norm, the Hilbert--Schmidt norm, and the trace norm. More precisely,
\begin{itemize}
    \item for $L\in\mathcal B(H)$, 
\[
\|L\|=\sup_{\|u\|_{H}=1}\|Lu\|_{H};
\]
    \item for a Hilbert--Schmidt operator $L$, choose any
orthonormal basis $\{e_j\}_{j\ge1}$ of $H$,
\[
\|L\|_{\mathrm{HS}}^2
=\sum_{j=1}^{\infty}\|Le_j\|_{H}^2=\Tr(L^*L);
\]    
    \item for a trace-class operator $L$,
\[
\|L\|_{\mathrm{tr}}
=\Tr\bigl((L^*L)^{1/2}\bigr)
=\sum_{j=1}^{\infty}s_j(L),
\]
Here $s_j(L)$ are the singular
values of $L$, counted with multiplicity.
\end{itemize}

\begin{lemma}\label{lem: local heat fp}
Uniformly in $q$,
\begin{equation}\label{lem: local fp}
 \left|\FP\int_0^1
 \Str \bigl(\tfrac12(N-n)^2\mathcal E_q
                         e^{-t\Delta_{F_q}}\bigr)\,dt\right|
 \le C.
\end{equation}
If $E_q$ is a differential operator of order at most one with support and
coefficient bounds as in Lemma~\ref{eq: variation lemma},
then
\begin{equation}\label{eq: HS norm}
 \|E_qe^{-\Delta_{F_q}/4}\|_{\mathrm{HS}}\le C.
\end{equation}
\end{lemma}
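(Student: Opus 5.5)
I would prove the Hilbert--Schmidt bound \eqref{eq: HS norm} first and then deduce \eqref{lem: local fp} from it. For \eqref{eq: HS norm}, write $E_q=\sum_{|\nu|\le1}e_{\nu,q}\partial^\nu$. Then
\[
\|E_qe^{-\Delta_{F_q}/4}\|_{\mathrm{HS}}^2=\int\!\!\int|E_{q,x}K_q(1/4;x,y)|^2\,dx\,dy .
\]
I would bound this with Gaussian-type pointwise estimates for the heat kernel of $\Delta_{F_q}$ and its first spatial derivatives, taken at the fixed time $1/4$. These should come from the Brownian bridge representation of Lemma~\ref{lem:matrixremainder} (with $h=1$) and Kato's inequality. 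The key input is \emph{uniformity} on the support \eqref{eq: error support}. On the region $\{|\xi|\ge\varepsilon|w|,\ |z|\ge2\}$ the node direction is avoided, so $|\partial F_q(z)|\ge c|z|^{n-1}$ uniformly in $q$. This is the compactness argument at the end of the proof of Lemma~\ref{lem:axis}, applied away from the cone, and it gives the weight $e^{-c|z|^{2n-2}}$ along bridges starting in that region. Combined with the coefficient bounds \eqref{eq: error coefficient}, which are polynomial, the integral converges with $q$-independent constants. The compact piece $|z|\le2$ is harmless. The one subtlety is that bridges starting in the support may wander into the nodal cone, where the potential degenerates to $|q|^2|z|^{2n-2}$. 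I would handle this with a large-deviation bound: staying near the starting point for time $1/4$ costs only a uniform Gaussian factor. I would also use $|\partial F_q|^2-|L_{F_q}|\ge -C$ uniformly, which holds because $L_{F_q}=O(|z|^{n-2})$ and near the cone $|\partial_\xi F_q|\gtrsim|\xi||z|^{n-2}$.

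For \eqref{lem: local fp} I would split the integral at $t=1$ into two parts and treat $t\in(0,1]$ in two regimes.

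\textbf{Small time, $0<t\le t_0$.} Here I would run the small-time Brownian expansion of Proposition~\ref{prop:smalltime} with $h=t^{n/(2(n-1))}$, using the rescaling $U_t$ of \eqref{eq: defn Us}. The support of $\mathcal E_q$ lies outside the nodal cone (modulo the compact set). There the rescaled potential is uniformly confining, so each coefficient of $t^\alpha$ is bounded uniformly in $q$. The homogeneity count from the proof of Lemma~\ref{lem: coefficient lemma} shows that only finitely many negative powers occur and that there is no $\log$ term. The factor $\mathcal E_q=[D^+_q,\bar\partial^*_{F_q}]+[\bar\partial_{F_q},D^-_q]$ can be moved by graded cyclicity, which commutes with $(N-n)^2$ up to degree shifts. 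After this, its contribution becomes terms of the form $\Str(D\,\bar\partial_{F_q}e^{-t\Delta})$, whose local expansion coefficients are controlled by the same confining estimate. The finite part is therefore uniformly bounded.

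\textbf{Remaining range, $t\in[t_0,1]$.} I would write
\[
e^{-t\Delta}=e^{-\Delta/4}e^{-(t-1/2)\Delta}e^{-\Delta/4},
\]
apply cyclicity, and write $\mathcal E_q$ as a sum of products of first-order operators with the stated supports. Cauchy--Schwarz in the Hilbert--Schmidt norm then bounds the trace by $\|E_qe^{-\Delta/4}\|_{\mathrm{HS}}\,\|E'_qe^{-\Delta/4}\|_{\mathrm{HS}}\le C$ by \eqref{eq: HS norm}, since $\|e^{-s\Delta}\|\le1$. For smaller $t$ one uses the same decomposition with $t/4$, at the price of a power $t^{-m}$ that is absorbed by the small-time regime above.

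I expect the main obstacle to be the uniformity of the heat-kernel estimates on the error support, namely controlling bridge excursions into the degenerating nodal cone. If the large-deviation argument turns out to be too crude, I would instead use IMS localization with Agmon estimates to show exponential decay of $e^{-t\Delta_{F_q}}$ from the support region into the cone.
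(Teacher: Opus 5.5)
There is a genuine gap, and it sits exactly where you place "the one subtlety".

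\textbf{The uniform lower bound you rely on is false.} You use $|\partial F_q|^2-\|L_{F_q}\|\ge -C$ uniformly in $q$, but this fails along the nodal axis. At $\xi=0$ the normal form of Lemma~\ref{lem:axis} gives $\partial_\xi F_q(w,0)=0$ and $\partial_wF_q(w,0)=na_qw^{n-1}$, so $|\partial F_q|^2\le C|q|^2|w|^{2n-2}$ there. On the other hand, $\partial_\xi^2F_q(w,0)=w^{n-2}\partial_\xi^2Q_q$ is uniformly invertible, so $\|L_{F_q}(w,0)\|\ge c|w|^{n-2}$. For $|w|\ll|q|^{-2/n}$ the difference is therefore about $-c|w|^{n-2}$, which is unbounded below as $q\to0$. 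This is why the spectral-gap lemma only obtains confinement for $|z|\ge R|q|^{-2/n}$. Your estimate $|\partial_\xi F_q|\gtrsim|\xi||z|^{n-2}$ does not help inside the tube $|\xi|\lesssim|w|^{-(n-2)/2}$.

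\textbf{Consequence for the bridge argument.} Once you bound the path-ordered exponential by $e^{\int\|L\|}$, a bridge from the error support that lingers in this tube for time $\tau$ gains a factor of order $e^{c\tau|w|^{n-2}}$. The Gaussian cost $e^{-c|w|^2}$ of reaching distance $|w|$ does not beat this for $n\ge5$, and for $n=4$ it is a matter of constants. The only thing that balances the gain is the cost of confining the path to width $|w|^{-(n-2)/2}$, i.e.\ the supersymmetric cancellation of the Witten Laplacian. A large-deviation bound using only a Gaussian factor cannot see this. To make it uniform you would need a sharp lower bound for $\Delta_{\bar\partial}+|\partial F_q|^2-\|L_{F_q}\|$ on $|z|\lesssim|q|^{-2/n}$, which you do not supply.

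\textbf{Consequence for the small-time regime.} The expansion coefficients are local, so they are uniformly bounded on the support. The remainder estimates of Lemma~\ref{lem:matrixremainder}, however, are only uniform on compact subsets of the smooth locus, and you give no reason they stay uniform as $q\to0$.

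\textbf{The missing idea.} You need to compare $\Delta_{F_q}$ with a uniformly tame model operator instead of estimating it directly. The paper sets $\widetilde F_q=F_q+b\rho w^n$, with $\rho$ supported in the narrower cone $\{|\xi|\le(\varepsilon/4)|w|,\ |z|\ge 6\}$ and $|a_q|\le b/2$. Then $2\operatorname{Re}\widetilde F_q$ satisfies the Dai--Yan regular and polynomial tameness conditions uniformly in $q$. Hence \cite[Theorem~4.1]{DY} at $T=1$ gives both the finite-part bound and the Hilbert--Schmidt bound for $\widetilde\Delta_{F_q}$, uniformly on $(0,1]$.

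The two Laplacians coincide near $\operatorname{supp}\mathcal E_q$ and differ only at distance $\ge c\langle z\rangle$. Finite propagation speed and local elliptic estimates then bound the kernel difference at $(z,gz)$, with up to two derivatives, by $Ct^{-L}\langle z\rangle^{L}e^{-c\langle z\rangle^2/t}$. This step uses only $\Delta_{F_q}\ge0$ and $\widetilde\Delta_{F_q}\ge0$, not any pointwise bound on the matrix potential. Integrated against the polynomially bounded coefficients, the difference is uniformly $O(t^\infty)$, and this transfers both estimates to $\Delta_{F_q}$. Your IMS/Agmon fallback points in this direction, but without a uniformly tame model operator there is nothing to localize to.

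\textbf{Minor points.} The factorization $e^{-\Delta/4}e^{-(t-1/2)\Delta}e^{-\Delta/4}$ requires $t\ge1/2$. Also, the factor $\bar\partial^*_{F_q}$ in $\mathcal E_q$ is not supported in the error region, so it must be handled with a cutoff and an operator-norm bound rather than a Hilbert--Schmidt bound.
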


\begin{proof}
Choose a smooth scalar-invariant cutoff $0\le\rho\le1$ satisfying
\begin{equation}\label{eq:localized-rho}
\begin{gathered}
 \rho=0
 \quad\text{on }
 \{|z|\le6\}\cup\{|\xi|\ge(\varepsilon/4)|w|\},\\
 \rho=1
 \quad\text{on }
 \{|z|\ge12,\ |\xi|\le(\varepsilon/8)|w|\},\\
 \rho(rz)=\rho(z)
 \quad (|z|\ge12,\ r\ge1),\\
 |\nabla_\xi\rho|
 \le C\varepsilon^{-2}|w|^{-2}|\xi|
 \quad (w\ne0).
\end{gathered}
\end{equation}
Fix $0<b\ll\varepsilon^2$, decrease $q_0$ so that
$|a_q|\le b/2$, and set
\begin{equation}\label{eq:localized-function-modification}
\begin{gathered}
 \widetilde F_q=F_q+b\rho w^n,\\
 \widetilde\Delta_{F_q}
 =
 \tfrac12\left[
 d+d(2\operatorname{Re}\widetilde F_q)\wedge,\,
 \bigl(d+d(2\operatorname{Re}\widetilde F_q)\wedge\bigr)^*
 \right].
\end{gathered}
\end{equation}

We can verify that the $1/2$-regular tameness conditions in \cite{DY} hold uniformly:
\begin{equation}\label{eq: reg tame}
\begin{gathered}
 \lim_{R\to\infty}
 \inf_{\substack{|q|\le q_0\\ |z|\ge R}}
 |\nabla(2\operatorname{Re}\widetilde F_q)(z)|
 =+\infty,\\
 \frac{|\nabla^m(2\operatorname{Re}\widetilde F_q)(z)|}
 {|\nabla(2\operatorname{Re}\widetilde F_q)(z)|^{(m+1)/2}}
 \le C_m|z|^{-(m-1)(n+1)/2}
 \le C_m
 \quad (|z|\ge12,\ m\ge1).
\end{gathered}
\end{equation}
The polynomial tameness condition in \cite{DY} also holds:
\begin{equation}\label{v6-polynomial-tame}
 \int_{\{z:\,
 |\nabla(2\operatorname{Re}\widetilde F_q)(z)|^2\le\lambda\}}
 \bigl(\lambda-
 |\nabla(2\operatorname{Re}\widetilde F_q)(z)|^2\bigr)^n\,dz
 \le C\lambda^{\,n+n/(n-1)}
 \quad(\lambda\ge1).
\end{equation}

Dai--Yan \cite[Theorem~4.1 and Lemmas~3.5, 4.8]{DY} at $T=1$,
together with fixed-order interior regularity and homogeneity
outside $|z|=12$, therefore gives
\begin{equation}\label{eq:localized-modified-bounds}
\begin{gathered}
 \left|\FP\int_0^1
 \Str \bigl(\tfrac12(N-n)^2\mathcal E_qe^{-t\widetilde\Delta_{F_q}}\bigr)\,dt\right|
 \le C,\\
 \|E_qe^{-\widetilde\Delta_{F_q}/4}\|_{\mathrm{HS}}\le C.
\end{gathered}
\end{equation}

The two Laplacians $\widetilde\Delta_{F_q}$ and $\Delta_{F_q}$  agree near 
$\operatorname{supp}\mathcal E_q$, and the support of their
coefficient difference is at distance at least
$c\langle z\rangle$ from each $z$ in this set, where
$\langle z\rangle=(1+|z|^2)^{1/2}$.
Finite propagation speed and local elliptic estimates give,
for the corresponding heat kernels $K_{F_q}$ and
$\widetilde K_{F_q}$, on the same set, for some large $L$,
\begin{equation}\label{eq:localized-kernel-difference}
 \left|
 \left.
 \partial_x^\alpha\partial_y^\beta
 (K_{F_q}-\widetilde K_{F_q})(t;x,y)
 \right|_{x=z,\,y=gz}
 \right|
 \le Ct^{-L}\langle z\rangle^L
             e^{-c\langle z\rangle^2/t},
 \qquad
 \substack{g\in\Z_n,\ 0<t\le1,\\
            |\alpha|+|\beta|\le2}.
\end{equation}

\end{proof}

\begin{proof}[Proof of \Cref{prop:reduction of main theorem}]
We first estimate the $\mathcal E_q$-term in \eqref{eq: heat variation Fq} for
$t\ge1$. For any operator $E_q$ of order at most one with support and
coefficient bounds as in Lemma~\ref{eq: variation lemma}, choose
$\chi_1=1$ near its support, with support in a slightly larger
region of the form \eqref{eq: error support}. By
\eqref{eq: HS norm},
\begin{equation}\label{eq: sandwich}
 \|e^{-\Delta_{F_q}/4}E_qe^{-\Delta_{F_q}/4}\|_{\mathrm{tr}}
 \le\|e^{-\Delta_{F_q}/4}\chi_1\|_{\mathrm{HS}}
       \|E_qe^{-\Delta_{F_q}/4}\|_{\mathrm{HS}}\le C.
\end{equation}
The spectral theorem and
\eqref{eq: spectral gap} imply, for $t\ge1$, 
\begin{equation}\label{eq: diff heat}
 \|{\bar{\partial}}_Fe^{-(t-1/2)\Delta_{F_q}}\|+
 \|{\bar{\partial}}_F^*e^{-(t-1/2)\Delta_{F_q}}\|
       \le Ct^{-1/2}e^{-ct|q|^{4/n}}.
\end{equation}
 Graded cyclicity gives
\begin{equation}\label{eq: error trace fac}
\begin{split}
 &\Str (\tfrac12(N-n)^2\mathcal E_qe^{-t\Delta_{F_q}})=\Str \bigl((N-n-\tfrac12)D_q^+{\bar{\partial}}_F^*e^{-t\Delta_{F_q}}\bigr)-\Str \bigl((N-n+\tfrac12)D_q^-{\bar{\partial}}_Fe^{-t\Delta_{F_q}}\bigr).
\end{split}
\end{equation}
Note that
\be\ba\label{eq: trace norm ineq}
\quad\|(N-n-\tfrac12)D_q^+{\bar{\partial}}_F^*e^{-t\Delta_{F_q}}\|_{\mathrm{tr}}&\leq C\|D_q^+{\bar{\partial}}_F^*e^{-t\Delta_{F_q}}\|_{\mathrm{tr}}\\
&\leq C\|e^{-\Delta_{F_q}/2}D_q^+\|_{\mathrm{tr}}\|{\bar{\partial}}_F^*e^{-(t-1/2)\Delta_{F_q}}\|\\
&\leq  C\|e^{-\Delta_{F_q}/4}D_q^+e^{-\Delta_{F_q}/4}\|_{\mathrm{tr}}\|{\bar{\partial}}_F^*e^{-(t-1/2)\Delta_{F_q}}\|.
\ea\ee

Applying \eqref{eq: sandwich} to $(N-n\mp\tfrac12)D_q^\pm$ and using
\eqref{eq: diff heat}-\eqref{eq: trace norm ineq},
\[ |\Str (\tfrac12(N-n)^2\mathcal E_qe^{-t\Delta_{F_q}})|
     \le Ct^{-1/2}e^{-ct|q|^{4/n}},\]
     which implies
\begin{equation}\label{eq: error large time}
\begin{split}
 \int_1^\infty|\Str (\tfrac12(N-n)^2\mathcal E_qe^{-t\Delta_{F_q}})|\dd t
     &\le C|q|^{-2/n}.
\end{split}
\end{equation}

Since harmonic forms have degree $n$, both $(N-n)^2$ and $N-n$
annihilate them. Hence, for each fixed $q\ne0$,
\[
 \Str \bigl(B(K)(N-n)e^{-t\Delta_{F_q}}\bigr)\longrightarrow0
 \qquad(t\to\infty).
\]
By Lemma~\ref{lem: coefficient lemma},
$\zeta_{F_q}(0)$ is independent of $q$, and
\begin{equation}\label{eq: finite part tor}
\begin{split}
 \mathscr{T}^{\mathrm{LG}}(F_q)
 ={}&-\frac12\FP\int_0^\infty
       \Str \bigl((N-n)^2e^{-t\Delta_{F_q}}\bigr)\frac{\dd t}{t}-\gamma\zeta_{F_q}(0).
\end{split}
\end{equation}
Integrate \eqref{eq: heat variation Fq} over all times and use
\eqref{eq: finite part tor},
\begin{equation}
\begin{split}
 |q|\partial_{|q|} \mathscr{T}^{\mathrm{LG}}(F_q)
   ={}&\left[\Str \bigl(B(K)(N-n)e^{-t\Delta_{F_q}}\bigr)\right]_{t^0}+\FP\int_0^\infty
              \Str (\tfrac12(N-n)^2\mathcal E_qe^{-t\Delta_{F_q}})\dd t.
\end{split}
\end{equation}
Estimates
\eqref{eq: coefficient bounds BK}, \eqref{lem: local fp}, and
\eqref{eq: error large time} now give
\begin{equation}\label{eq: derive bound torsion}
 \left||q|\partial_{|q|} \mathscr{T}^{\mathrm{LG}}(F_q)\right|\le C(1+|q|^{-2/n}).
\end{equation}
The torsion is bounded on $|q|=q_0$ for some fixed $q_0$. Integrating
\eqref{eq: derive bound torsion} along the ray yields
\begin{equation}
 | \mathscr{T}^{\mathrm{LG}}(F_q)|
 \le C+C\int_{|q|}^{q_0}(1+s^{-2/n})\frac{\dd s}{s}
 \le C|q|^{-2/n}.
\end{equation}

\end{proof}

\appendix
\section{Quillen metrics associated with LG model}

BCOV torsion is closely related to a Quillen metric on a certain
line bundle, and its anomaly formula is related to the curvature
formula for this line bundle. In this appendix, we develop an analogous
construction for the LG model.

Decompose $\cH$ into the integer eigenspaces
$\mathcal H_a=\ker(\ph-a)$, $\cH=\bigoplus_a\mathcal H_a$, and let
$h_a=h_{F_b}|_{\mathcal H_a}$ be the restricted Hermitian metric.
Define the graded determinant line and its $L^2$ metric by
\begin{equation}\label{eq:LGdetline}
 \Lambda_{\LG}=\bigotimes_a(\det\mathcal H_a)^{(-1)^n a},
 \qquad
 h_{\det,\LG}=\bigotimes_a(\det h_a)^{(-1)^n a}.
\end{equation}
Then
\begin{equation}\label{eq:detcurvature}
R(h_{\det,\LG})=(-1)^n\sum_aa\Tr(R^{\mathcal H}|_{\mathcal{H}_a})=\Str_{\mathcal H}(\ph R^{\mathcal H}).
\end{equation}

\begin{proposition}
\label{prop:rawflat}
The metric
\begin{equation}\label{eq:rawQmetric}
h_{Q,\LG}:=e^{\cT^{\LG}}h_{\det,\LG}
\end{equation}
has zero Chern curvature.
\end{proposition}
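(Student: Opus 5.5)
The plan is to compute the Chern curvature of $h_{Q,\LG}=e^{\cT^{\LG}}h_{\det,\LG}$ directly. I would then show that the two contributions, one from the $L^2$ determinant metric and one from the torsion factor, cancel. I use the convention $R(h)=-\ddbar\log h$ for a Hermitian metric $h$ on a holomorphic line bundle, so that multiplying a metric by $e^{\varphi}$ for a smooth real function $\varphi$ gives
\[
R(e^{\varphi}h)=R(h)-\ddbar\varphi .
\]
Applying this with $\varphi=\cT^{\LG}$, which is smooth on $B$ by Theorem~\ref{thm:modifiedanomalyearly} (or Proposition~\ref{prop:familyheat}), gives
\[
R(h_{Q,\LG})=R(h_{\det,\LG})-\ddbar\cT^{\LG}.
\]

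The first term is already computed in \eqref{eq:detcurvature}: $R(h_{\det,\LG})=\Str_{\cH}(\ph R^{\cH})$. Two facts justify this, and I would recall them briefly. First, the $\ph$-graded ranks of $\cH_b$ are locally constant by Proposition~\ref{prop:familyheat}(iii), so each $\mathcal H_a$ is a holomorphic subbundle. Second, since $[\ph,P]=0$, the Chern connection $\nabla^{\cH}=P\,d_B$ preserves the splitting $\cH=\bigoplus_a\mathcal H_a$. The curvature of $\det h_a$ is then $\Tr(R^{\cH}|_{\mathcal H_a})$, and taking the signed tensor powers $(-1)^n a$ gives the formula. Because $\cH^k=0$ for $k\ne n$, the supertrace on $\cH$ equals $(-1)^n$ times the trace, which matches the chosen sign.

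The second term comes from the proof of Theorem~\ref{thm:modifiedanomalyearly}. There, \eqref{eq:boundaryformula} combined with Proposition~\ref{prop:smalltime} (the vanishing of the finite part at $s\to0$) gives exactly
\[
\ddbar\cT^{\LG}=\Str_{\cH}(\ph R^{\cH,(1,1)}),
\]
which is the first equality in \eqref{eq:rawanomaly}. Since $\nabla^{\cH}$ is the Chern connection of a holomorphic Hermitian bundle, $R^{\cH}$ is of type $(1,1)$, so the superscript may be dropped. Substituting both expressions yields $R(h_{Q,\LG})=\Str_{\cH}(\ph R^{\cH})-\Str_{\cH}(\ph R^{\cH})=0$.

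The only point needing care is the sign bookkeeping. The convention $R=-\ddbar\log h$ must be consistent with the sign in \eqref{eq:boundaryformula}, where $\ddbar$ denotes the operator on $B$ appearing there. The exponent convention in \eqref{eq:LGdetline} must also agree with \eqref{eq:detcurvature}. I would check this consistency by verifying the sign on a rank-one piece: for a line bundle $\mathcal H_a$, the curvature of $h_a$ under $P\,d_B$ equals $-\ddbar\log h_a(e,e)$ for a holomorphic frame $e$. Everything else is a direct substitution of results proved earlier.
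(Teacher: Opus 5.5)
Your proposal is correct and is the paper's argument: the paper's proof just observes that \eqref{eq:detcurvature} and the raw anomaly formula \eqref{eq:rawanomaly}, $\ddbar\cT^{\LG}=\Str_{\cH}(\ph R^{\cH})$, cancel under $R(e^{\varphi}h)=R(h)-\ddbar\varphi$. Your extra remarks — the splitting of $\nabla^{\cH}=P\,d_B$ along the $\ph$-eigenbundles and the check of the sign convention $R=-\ddbar\log h$ — spell out what the paper leaves implicit.
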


\begin{proof}
This follows from
\eqref{eq:detcurvature} and \eqref{eq:rawanomaly}.
\end{proof}

To avoid a fractional power in the metric, define
\begin{equation}\label{eq:completedline}
 \mathscr C_{\LG}:=\Lambda_{\LG}^{\otimes12}\otimes\mathcal L_0^{\otimes\chi(X)},\qquad
 h_{\mathscr C}:=
 e^{12\cT_{\mathrm{mod}}^{\LG}}h_{\det,\LG}^{12}
 =h_{Q,\LG}^{12}\,h_F(w_0,w_0)^{\chi(X)}.
\end{equation}

Then by Theorem \ref{thm:modifiedanomalyearly}, we have:
\begin{proposition}[Curvature of the completed line]
\label{prop:completedcurvature}
We have
\begin{equation}\label{eq:completedcurvature}
 R(h_{\mathscr C})=\chi(X)\omega_{\mathrm{WP}}^{\LG}.
\end{equation}
In the distinguished vacuum frame this is equivalent to the modified anomaly
formula \eqref{eq:main}.
\end{proposition}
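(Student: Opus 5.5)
The plan is a direct curvature bookkeeping computation, using the convention $R(h)=-\ddbar\log h$ for the Chern curvature of a Hermitian metric on a holomorphic line bundle. The key fact is that multiplying a metric by $e^{\phi}$ shifts its curvature by $-\ddbar\phi$. I would first check that the two expressions for $h_{\mathscr C}$ in \eqref{eq:completedline} agree. Since $\cT^{\LG}_{\mathrm{mod}}=\cT^{\LG}+\frac{\chi(X)}{12}\log h_F(w_0,w_0)$, we get
\[
e^{12\cT^{\LG}_{\mathrm{mod}}}h_{\det,\LG}^{12}=\bigl(e^{\cT^{\LG}}h_{\det,\LG}\bigr)^{12}h_F(w_0,w_0)^{\chi(X)}=h_{Q,\LG}^{12}\,h_F(w_0,w_0)^{\chi(X)} .
\]
Here $h_F(w_0,w_0)$ is the metric on $\mathcal L_0$ evaluated on the holomorphic frame $w_0$; recall $\bar\partial^{\cH}w_0=0$ by \cite{TY}. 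So $h_{\mathscr C}$ is a genuine Hermitian metric on $\mathscr C_{\LG}$.

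Next I would compute the curvature from the second expression, using additivity of curvature under tensor products:
\[
R(h_{\mathscr C})=12\,R(h_{Q,\LG})+\chi(X)\,R\bigl(h_F|_{\mathcal L_0}\bigr).
\]
By Proposition~\ref{prop:rawflat}, the first term vanishes. That proposition combines \eqref{eq:detcurvature}, i.e. $R(h_{\det,\LG})=\Str_{\cH}(\ph R^{\cH})$, with \eqref{eq:rawanomaly}, i.e. $\ddbar\cT^{\LG}=\Str_{\cH}(\ph R^{\cH})$. For the second term, $w_0$ is a holomorphic frame of $\mathcal L_0$ and $\nabla^{\cH}|_{\mathcal L_0}$ is its Chern connection, so
\[
R\bigl(h_F|_{\mathcal L_0}\bigr)=-\ddbar\log h_F(w_0,w_0)=\omega^{\LG}_{\mathrm{WP}}
\]
by \cite[Theorem~1.1]{TY}. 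This gives \eqref{eq:completedcurvature}.

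As a cross-check, I would redo the computation from the first expression: $R(h_{\mathscr C})=-12\ddbar\cT^{\LG}_{\mathrm{mod}}+12R(h_{\det,\LG})$. Inserting \eqref{eq:modifiedmainT}, together with $R(h_{\det,\LG})=(-1)^n\Tr_{\cH}(C\wedge\bar C)$ (from \eqref{eq:detcurvature} and \eqref{eq:rawanomaly}), the $\Tr_{\cH}(C\wedge\bar C)$ terms cancel and leave $\chi(X)\omega^{\LG}_{\mathrm{WP}}$.

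For the equivalence claim, I would fix a local holomorphic frame $\sigma$ of $\Lambda_{\LG}$ and use the frame $\sigma^{12}\otimes w_0^{\chi(X)}$ of $\mathscr C_{\LG}$. In this frame,
\[
-\ddbar\log h_{\mathscr C}=-12\,\ddbar\cT^{\LG}_{\mathrm{mod}}+12\,R(h_{\det,\LG}).
\]
Since $R(h_{\det,\LG})=(-1)^n\Tr_{\cH}(C\wedge\bar C)$ is known independently, the identity \eqref{eq:completedcurvature} holds if and only if \eqref{eq:modifiedmainT} does. Multiplying by $-\tfrac12$ turns \eqref{eq:modifiedmainT} into \eqref{eq:main}.

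There is no real analytic obstacle; all the analysis is already contained in Theorem~\ref{thm:modifiedanomalyearly}. The only delicate point is consistency of signs, namely the convention $R=-\ddbar\log h$, the sign $(-1)^n$ in the exponents of $\Lambda_{\LG}$, and the order of $\partial$ and $\bar\partial$ in \eqref{eq:rawanomaly}. I would check these on a single component $\partial_i\partial_{\bar j}$ before writing the argument.
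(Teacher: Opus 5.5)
Your proposal is correct and is essentially the paper's argument. The paper gets \eqref{eq:completedcurvature} directly from Theorem~\ref{thm:modifiedanomalyearly}, which is your cross-check route, and your primary route through Proposition~\ref{prop:rawflat} and $-\ddbar\log h_F(w_0,w_0)=\omega_{\mathrm{WP}}^{\LG}$ just unpacks the two ingredients that theorem is proved from, with your signs ($R(h)=-\ddbar\log h$, $R(h_{\det,\LG})=(-1)^n\Tr_{\cH}(C\wedge\bar C)$) consistent with \eqref{eq:rawanomaly} and \eqref{eq:modifiedmainT}.
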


\bibliographystyle{plain}
\bibliography{references}

@article{Tang2017Torsion,
  author        = {Tang, X.},
  title         = {{$tt^*$} Geometry, Singularity Torsion and Anomaly Formulas},
  year          = {2017},
  journal = {arXiv: 1710.03915v2},
  url           = {https://arxiv.org/abs/1710.03915}
}

@book{CFKS1987,
  author    = {Cycon, H. L. and Froese, R. G. and
               Kirsch, W. and Simon, B.},
  title     = {Schr{\"o}dinger Operators:
               With Application to Quantum Mechanics and Global Geometry},
  series    = {Texts and Monographs in Physics},
  publisher = {Springer-Verlag},
  address   = {Berlin},
  year      = {1987}
}

@article{Zhang2022Extension,
  author  = {Zhang, Y.},
  title   = {An extension of {BCOV} invariant},
  journal = {International Mathematics Research Notices},
  volume  = {2022},
  number  = {8},
  pages   = {5973--6018},
  year    = {2022},
  doi     = {10.1093/imrn/rnaa265},
  eprint  = {1905.03964},
  archivePrefix = {arXiv},
  primaryClass  = {math.DG}
}

@article{Givental2001Quantization,
  author  = {Givental, A. B.},
  title   = {{Gromov--Witten invariants and quantization of
             quadratic Hamiltonians}},
  journal = {Moscow Mathematical Journal},
  volume  = {1},
  number  = {4},
  pages   = {551--568},
  year    = {2001}
}

@article{Teleman2012,
  author  = {Teleman, C.},
  title   = {The structure of $2D$ semi-simple field theories},
  journal = {Inventiones Mathematicae},
  volume  = {188},
  number  = {3},
  pages   = {525--588},
  year    = {2012},
  doi     = {10.1007/s00222-011-0352-5}
}

@article{Zhang2023Blowup,
  author  = {Zhang, Y.},
  title   = {{BCOV} invariant and blow-up},
  journal = {Compositio Mathematica},
  volume  = {159},
  number  = {4},
  pages   = {780--829},
  year    = {2023},
  doi     = {10.1112/S0010437X23007042},
  eprint  = {2003.03805},
  archivePrefix = {arXiv},
  primaryClass  = {math.DG}
}

@article{FuZhang2023BCOV,
  author  = {Fu, L. and Zhang, Y.},
  title   = {Motivic integration and birational invariance of
             {BCOV} invariants},
  journal = {Selecta Mathematica. New Series},
  volume  = {29},
  pages   = {25},
  year    = {2023},
  doi     = {10.1007/s00029-023-00832-3},
  eprint  = {2007.04835},
  archivePrefix = {arXiv},
  primaryClass  = {math.AG}
}

@article{FanLanYang2021,
  author  = {Fan, H. and Lan, T. and Yang, Z.},
  title   = {{LG/CY} Correspondence Between {$tt^*$} Geometries},
  journal = {Communications in Mathematical Research},
  volume  = {37},
  number  = {3},
  pages   = {297--349},
  year    = {2021},
  doi     = {10.4208/cmr.2020-0050},
  eprint  = {2011.14658},
  archivePrefix = {arXiv},
  primaryClass  = {math.AG}
}

@article{FanLanYang2022,
  author = {Fan, H. and Lan, T. and Yang, Z.},
  title  = {Constructing the {LG/CY} Isomorphism Between
            {$tt^*$} Geometries},
  year   = {2022},
 journal = {arXiv: 2210.16747},
}

@article{WangYan2025Graphs,
  author = {Wang, M. and Yan, J.},
  title  = {Feynman Graph Integrals on {K}\"ahler Manifolds},
  year   = {2025},
  journal = {arXiv: 2507.09170},
}

@article{WangYan2026Renormalizations,
  author = {Wang, M. and Yan, J.},
  title  = {Renormalizations in Holomorphic Field Theories
            on {K}\"ahler Manifolds},
  year   = {2026},
journal = {arXiv: 2608.00546},

}

@article{LiWang2026BCOV,
  author = {Li, S. and Wang, M.},
  title  = {On the Exact Quantization of {BCOV} Theory
            on {Calabi--Yau} Manifolds},
  year   = {2026},
  journal = {arXiv: 2610.21927},
  archivePrefix = {arXiv},
  primaryClass  = {math-ph},
  url = {https://arxiv.org/abs/2610.21927}
}

@article{li2019l2,
  title={On the {$L^2$}-{Hodge theory of Landau-Ginzburg models}},
  author={Li, S. and Wen, H.},
  journal={Advances in Mathematics},
  volume={396},
  pages={108--165},
  year={2022},
  publisher={Elsevier}
}

@article{cecotti1991topological,
  title={Topological--anti-topological fusion},
  author={Cecotti, S. and Vafa, C.},
  journal={Nuclear Physics B},
  volume={367},
  number={2},
  pages={359--461},
  year={1991},
  publisher={Elsevier}
}

@article{SaitoFM,
 title={From primitive forms to {F}robenius manifolds},
  author={Saito, K. and Takahashi, A.},
  year={2008},
  publisher={Kyoto University, Research Institute for Mathematical Sciences},
  journal={Proc. Sympos. Pure Math.},
  number={78}
}

@inproceedings{Saitohigher,
  title={The higher residue pairings {$\mathcal{K}_F$} for a family of hypersurface singular points},
  author={Saito, K.},
  booktitle={Proceedings of Symposia in pure mathematics},
  volume={40},
  pages={441--463},
  year={1983},
  organization={Amer Mathematical SOC 201 Charles ST, Providence, RI 02940-2213}
}

@article{1981Primitive,
  title={Primitive forms for a universal unfolding of a function with an isolated critical point},
  author={Saito, K.},
  journal={Faculty of Science, The University of Tokyo},
  number={3},
  pages={775-792},
  year={1982}
}

@article{chang2015witten,
  title={Witten’s top {C}hern class via cosection localization},
  author={Chang, H.-L. and Li, J. and Li, W.-P.},
  journal={Inventiones mathematicae},
  volume={200},
  number={3},
  pages={1015--1063},
  year={2015},
  publisher={Springer}
}

@article{Greene1990DualityI,
  title={Duality in {Calabi-Yau} Moduli Space},
  author={Greene, B. and Plesser, M.},
  journal={Nuclear Physics},
  year={1990},
  volume={338},
  pages={15-37}
}

@article{Witten1993,
  title={Algebraic Geometry Associated with Matrix Models of Two-Dimensional Gravity},
  author={Witten, E.},
  journal={Topological methods in modern mathematics},
  pages={235-269},
  year={1993}
}

@article{COGP1991,
  author = {Candelas, P. and de la Ossa, X. C. and Green, P. S. and Parkes, L.},
  title = {A pair of {Calabi--Yau} manifolds as an exactly soluble superconformal theory},
  journal = {Nuclear Physics B},
  volume = {359},
  number = {1},
  pages = {21--74},
  year = {1991},
  doi = {10.1016/0550-3213(91)90292-6},
  url = {https://www.sciencedirect.com/science/article/pii/0550321391902926}
}

@article{Witten1993Phases,
  author = {Witten, E.},
  title = {Phases of {$N=2$} theories in two dimensions},
  journal = {Nuclear Physics B},
  volume = {403},
  number = {1--2},
  pages = {159--222},
  year = {1993},
  doi = {10.1016/0550-3213(93)90033-L},
  eprint = {hep-th/9301042},
  archivePrefix = {arXiv},
  url = {https://arxiv.org/abs/hep-th/9301042}
}

@incollection{Givental1998Mirror,
  author = {Givental, A.},
  title = {A mirror theorem for toric complete intersections},
  booktitle = {Topological Field Theory, Primitive Forms and Related Topics},
  editor = {Kashiwara, M. and Matsuo, A. and Saito, K. and Satake, I.},
  series = {Progress in Mathematics},
  volume = {160},
  pages = {141--175},
  publisher = {Birkh\"auser},
  address = {Boston, MA},
  year = {1998},
  doi = {10.1007/978-1-4612-0705-4_5},
  eprint = {alg-geom/9701016},
  archivePrefix = {arXiv},
  url = {https://arxiv.org/abs/alg-geom/9701016}
}

@article{LianLiuYau1997,
  author = {Lian, B. H. and Liu, K. and Yau, S.-T.},
  title = {Mirror principle. {I}},
  journal = {Asian Journal of Mathematics},
  volume = {1},
  number = {4},
  pages = {729--763},
  year = {1997},
  doi = {10.4310/AJM.1997.v1.n4.a5},
  eprint = {alg-geom/9712011},
  archivePrefix = {arXiv},
  url = {https://arxiv.org/abs/alg-geom/9712011}
}

@article{FJR2013,
  author = {Fan, H. and Jarvis, T. J. and Ruan, Y.},
  title = {The {Witten} equation, mirror symmetry, and quantum singularity theory},
  journal = {Annals of Mathematics},
  series = {2},
  volume = {178},
  number = {1},
  pages = {1--106},
  year = {2013},
  doi = {10.4007/annals.2013.178.1.1},
  eprint = {0712.4021},
  archivePrefix = {arXiv},
  primaryClass = {math.AG},
  url = {https://annals.math.princeton.edu/2013/178-1/p01}
}

@article{ChiodoRuan2010,
  author = {Chiodo, A. and Ruan, Y.},
  title = {{Landau--Ginzburg/Calabi--Yau} correspondence for quintic three-folds via symplectic transformations},
  journal = {Inventiones Mathematicae},
  volume = {182},
  number = {1},
  pages = {117--165},
  year = {2010},
  doi = {10.1007/s00222-010-0260-0},
  eprint = {0812.4660},
  archivePrefix = {arXiv},
  primaryClass = {math.AG},
  url = {https://arxiv.org/abs/0812.4660}
}

@article{ChiodoRuan2011State,
  author = {Chiodo, A. and Ruan, Y.},
  title = {{LG/CY} correspondence: the state space isomorphism},
  journal = {Advances in Mathematics},
  volume = {227},
  number = {6},
  pages = {2157--2188},
  year = {2011},
  doi = {10.1016/j.aim.2011.04.011},
  eprint = {0908.0908},
  archivePrefix = {arXiv},
  primaryClass = {math.AG},
  url = {https://arxiv.org/abs/0908.0908}
}

@article{Zinger2009,
  author = {Zinger, A.},
  title = {The reduced genus 1 {Gromov--Witten} invariants of {Calabi--Yau} hypersurfaces},
  journal = {Journal of the American Mathematical Society},
  volume = {22},
  number = {3},
  pages = {691--737},
  year = {2009},
  doi = {10.1090/S0894-0347-08-00625-5},
  eprint = {0705.2397},
  archivePrefix = {arXiv},
  primaryClass = {math.AG},
  url = {https://arxiv.org/abs/0705.2397}
}

@article{GuoRoss2019LG,
  author = {Guo, S. and Ross, D.},
  title = {Genus-one mirror symmetry in the {Landau--Ginzburg} model},
  journal = {Algebraic Geometry},
  volume = {6},
  number = {3},
  pages = {260--301},
  year = {2019},
  doi = {10.14231/AG-2019-013},
  eprint = {1611.08876},
  archivePrefix = {arXiv},
  primaryClass = {math.AG},
  url = {https://arxiv.org/abs/1611.08876}
}

@article{GuoRoss2019Global,
  author = {Guo, S. and Ross, D.},
  title = {The genus-one global mirror theorem for the quintic {$3$}-fold},
  journal = {Compositio Mathematica},
  volume = {155},
  number = {5},
  pages = {995--1024},
  year = {2019},
  doi = {10.1112/S0010437X19007231},
  eprint = {1703.06955},
  archivePrefix = {arXiv},
  primaryClass = {math.AG},
  url = {https://arxiv.org/abs/1703.06955}
}

@article{ChangLiLiLiu2019MSP,
  author = {Chang, H.-L. and Li, J. and Li, W.-P. and Liu, C.-C. M.},
  title = {Mixed-{Spin-P} fields of {Fermat} polynomials},
  journal = {Cambridge Journal of Mathematics},
  volume = {7},
  number = {3},
  pages = {319--364},
  year = {2019},
  doi = {10.4310/CJM.2019.v7.n3.a3},
  eprint = {1505.07532},
  archivePrefix = {arXiv},
  primaryClass = {math.AG},
  url = {https://arxiv.org/abs/1505.07532}
}

@article{ChangGuoLi2021,
  author = {Chang, H.-L. and Guo, S. and Li, J.},
  title = {Polynomial structure of {Gromov--Witten} potential of quintic {$3$}-folds},
  journal = {Annals of Mathematics},
  series = {2},
  volume = {194},
  number = {3},
  pages = {585--645},
  year = {2021},
  doi = {10.4007/annals.2021.194.3.1},
  eprint = {1809.11058},
  archivePrefix = {arXiv},
  primaryClass = {math.AG},
  url = {https://annals.math.princeton.edu/2021/194-3/p01}
}

@article{RaySinger1973,
  author = {Ray, D. B. and Singer, I. M.},
  title = {Analytic torsion for complex manifolds},
  journal = {Annals of Mathematics},
  volume = {98},
  number = {1},
  pages = {154--177},
  year = {1973},
  doi = {10.2307/1970909},
  url = {https://annals.math.princeton.edu/1973/98-1/p05}
}

@article{BCOV1993,
  author = {Bershadsky, M. and Cecotti, S. and Ooguri, H. and Vafa, C.},
  title = {Holomorphic anomalies in topological field theories},
  journal = {Nuclear Physics B},
  volume = {405},
  number = {2--3},
  pages = {279--304},
  year = {1993},
  doi = {10.1016/0550-3213(93)90548-4},
  eprint = {hep-th/9302103},
  archivePrefix = {arXiv},
  note = {With an appendix by Katz, S.},
  url = {https://arxiv.org/abs/hep-th/9302103}
}

@article{EFM,
  author = {Eriksson, D. and {Freixas i Montplet}, G. and Mourougane, C.},
  title = {{BCOV} invariants of {Calabi--Yau} manifolds and degenerations of {Hodge} structures},
  journal = {Duke Mathematical Journal},
  volume = {170},
  number = {3},
  pages = {379--454},
  year = {2021},
  doi = {10.1215/00127094-2020-0045},
  eprint = {1809.05452},
  archivePrefix = {arXiv},
  url = {https://arxiv.org/abs/1809.05452}
}

@article{EFMMirror,
  author = {Eriksson, D. and {Freixas i Montplet}, G. and Mourougane, C.},
  title = {On genus one mirror symmetry in higher dimensions and the {BCOV} conjectures},
  journal = {Forum of Mathematics, Pi},
  volume = {10},
  pages = {e19},
  year = {2022},
  doi = {10.1017/fmp.2022.13},
  eprint = {1911.06734},
  archivePrefix = {arXiv},
  url = {https://arxiv.org/abs/1911.06734}
}

@misc{LiLiSaito,
  author = {Li, C. and Li, S. and Saito, K.},
  title = {Primitive forms via polyvector fields},
  year = {2013},
  eprint = {1311.1659},
  archivePrefix = {arXiv},
  primaryClass = {math.AG},
  note = {arXiv:1311.1659},
  url = {https://arxiv.org/abs/1311.1659}
}

@misc{CostelloLiBCOV,
  author = {Costello, K. J. and Li, S.},
  title = {Quantum {BCOV} theory on {Calabi--Yau} manifolds and the higher genus {B}-model},
  year = {2012},
  eprint = {1201.4501},
  archivePrefix = {arXiv},
  primaryClass = {math.QA},
  note = {arXiv:1201.4501},
  url = {https://arxiv.org/abs/1201.4501}
}

@article{HeLiShenWebb2022,
  author = {He, W. and Li, S. and Shen, Y. and Webb, R.},
  title = {{Landau--Ginzburg} mirror symmetry conjecture},
  journal = {Journal of the European Mathematical Society},
  volume = {24},
  number = {8},
  pages = {2915--2978},
  year = {2022},
  doi = {10.4171/JEMS/1155},
  eprint = {1503.01757},
  archivePrefix = {arXiv},
  url = {https://ems.press/journals/jems/articles/3324287}
}

@article{BK,
  author  = {Bismut, J.-M. and K{\"o}hler, K.},
  title   = {Higher analytic torsion forms for direct images
             and anomaly formulas},
  journal = {Journal of Algebraic Geometry},
  volume  = {1},
  number  = {4},
  pages   = {647--684},
  year    = {1992}
}

@article{GueneysuMatrix,
  author  = {G{\"u}neysu, B.},
  title   = {Multiplicative matrix-valued functionals and the continuity properties of semigroups corresponding to partial differential operators with matrix-valued coefficients},
  journal = {Journal of Mathematical Analysis and Applications},
  volume  = {380},
  number  = {2},
  pages   = {709--725},
  year    = {2011},
  doi     = {10.1016/j.jmaa.2011.02.038},
}

@article{DY,
  author        = {Dai, X. and Yan, J.},
  title         = {Witten deformation on non-compact manifolds: heat kernel expansion and local index theorem},
  journal       = {Mathematische Zeitschrift},
  volume        = {303},
  year          = {2023},
  doi           = {10.1007/s00209-022-03150-0},
}

@article{BGS2,
  author  = {Bismut, J.-M. and Gillet, H. and Soul{\'e}, C.},
  title   = {Analytic Torsion and Holomorphic Determinant Bundles.{ II}.
             Direct Images and {Bott--Chern} Forms},
  journal = {Communications in Mathematical Physics},
  volume  = {115},
  number  = {1},
  pages   = {79--126},
  year    = {1988},
  doi     = {10.1007/BF01238854}
}

@book{zhang2001lectures,
  title={Lectures on {Chern-Weil} theory and {Witten} deformations},
  author={Zhang, W.},
  volume={4},
  year={2001},
  publisher={World Scientific}
}

@article{FF,
  author  = {Fan, H. and Fang, H.},
  title   = {Torsion Type Invariants of Singularities},
  journal = {Vietnam Journal of Mathematics},
  volume  = {49},
  number  = {2},
  pages   = {381--432},
  year    = {2021},
  doi     = {10.1007/s10013-021-00510-x}
}

@article{Fan,
  author        = {Fan, H.},
  title         = {Schr{\"o}dinger equations, deformation theory and {$tt^*$}-geometry},
  year          = {2011},
  journal        = {arXiv:1107.1290},
  archivePrefix = {}
}

@article{TY,
  author        = {Tang, X. and Yan, J.},
  title         = {{CY/LG} correspondence for Weil--Petersson metrics and {$tt^*$} structures},
  journal       = {Journal of Geometry and Physics},
  volume        = {216},
  pages         = {105545},
  year          = {2025},
  eprint        = {2205.05791},
  archivePrefix = {arXiv}
}

@article{FLY,
  author        = {Fang, H. and Lu, Z. and Yoshikawa, K.-I.},
  title         = {Analytic torsion for {Calabi--Yau} threefolds},
  journal       = {Journal of Differential Geometry},
  volume        = {80},
  pages         = {175--259},
  year          = {2008},
}

@article{BCOV,
  author  = {Bershadsky, M. and Cecotti, S. and Ooguri, H. and Vafa, C.},
  title   = {Kodaira--{S}pencer theory of gravity and exact results for quantum string amplitudes},
  journal = {Communications in Mathematical Physics},
  volume  = {165},
  pages   = {311--427},
  year    = {1994}
}

\end{document}